\title{Linear transformations between colorings in chordal graphs} %TODO Please add
\titlerunning{Linear transformations between colorings in chordal graphs} %optional, please use if title is longer than one line
\author{Nicolas Bousquet}{
Univ. Grenoble Alpes, CNRS, Grenoble INP, G-SCOP, France}{nicolas.bousquet@grenoble-inp.fr}{https://orcid.org/0000-0003-0170-0503}{}%TODO mandatory, please use full name; only 1 author per \author macro; first two parameters are mandatory, other parameters can be empty. Please provide at least the name of the affiliation and the country. The full address is optional
\author{Valentin Bartier}{Univ. Grenoble Alpes, CNRS, Grenoble INP, G-SCOP, France}{valentin.bartier@grenoble-inp.fr}{}{}
\authorrunning{N. Bousquet and V. Bartier}%TODO mandatory. First: Use abbreviated first/middle names. Second (only in severe cases): Use first author plus 'et al.'
\keywords{graph recoloring, chordal graphs}
\newtheorem{conjecture}[theorem]{Conjecture}
\newtheorem{question}[theorem]{Question}
\newtheorem{observation}[theorem]{Observation}
\newcommand{\G}{\mathcal{G}}
\newcommand{\bnu}{\boldsymbol{\nu}}
\newcommand{\bmu}{\boldsymbol{\mu}}
\begin{document}

\maketitle

%TODO mandatory: add short abstract of the document
\begin{abstract}
    Let $k$ and $d$ be such that $k \ge d+2$. Consider two $k$-colorings of a $d$-degenerate graph $G$. Can we transform one into the other by recoloring one vertex at each step while maintaining a proper coloring at any step? Cereceda et al. answered that question in the affirmative, and exhibited a recolouring sequence of exponential length. 
    
    If $k=d+2$, we know that there exists graphs for which a quadratic number of recolorings is needed. And when $k=2d+2$, there always exists a linear transformation. In this paper, we prove that, as long as $k \ge d+4$, there exists a transformation of length at most $f(\Delta) \cdot n$ between any pair of $k$-colorings of chordal graphs (where $\Delta$ denotes the maximum degree of the graph). The proof is constructive and provides a linear time algorithm that, given two $k$-colorings $c_1,c_2$ computes a linear transformation between $c_1$ and $c_2$.
\end{abstract}

\section{Introduction}

Reconfiguration problems consist in finding step-by-step transformations between two feasible solutions of a problem such that all intermediate states are also feasible. Such problems model dynamic situations where a given solution already in place has to be modified for a more desirable one while maintaining some properties throughout the transformation. Reconfiguration problems have been studied in various fields such as discrete geometry~\cite{BoseLPV18}, optimization~\cite{BBRM18} or statistical physics~\cite{mohar4} in order to transform, generate, or count solutions.
In the last few years, graph reconfiguration received a considerable attention, e.g. reconfiguration of independent sets~\cite{BonamyB17,LokshtanovM18}, matchings~\cite{BousquetHIM18,ItoKK0O17}, dominating sets~\cite{SuzukiMN14} or fixed-degree sequence graphs~\cite{BousquetM18}. 
For a complete overview of the reconfiguration field, the reader is referred to the two recent surveys on the topic~\cite{Nishimura17,Heuvel13}.

Two main questions are at the core of combinatorial reconfiguration. (i) Is it possible to transform any solution into any other? (ii) If yes, how many steps are needed to perform this transformation? These two questions and their algorithmic counterparts received considerable attention.

\paragraph*{Graph recoloring.}
Throughout the paper, $G=(V,E)$ denotes a graph, $n=|V|$, $\Delta$ denotes the maximum degree of $G$, and $k$ is an integer. For standard definitions and notations on graphs, we refer the reader to~\cite{Diestel}.
A \emph{(proper) $k$-coloring} of $G$ is a function $f : V(G) \rightarrow \{ 1,\ldots,k \}$ such that, for every edge $xy\in E$, we have $f(x)\neq f(y)$. Since we will only consider proper colorings, we will then omit the proper for brevity. The \emph{chromatic number} $\chi(G)$ of a graph $G$ is the smallest $k$ such that $G$ admits a $k$-coloring.
Two $k$-colorings are \emph{adjacent} if they differ on exactly one vertex. The \emph{$k$-reconfiguration graph of $G$}, denoted by $\G(G,k)$ and defined for any $k\geq \chi(G)$, is the graph whose vertices are $k$-colorings of $G$, with the adjacency condition defined above. 
Cereceda, van den Heuvel and Johnson provided an algorithm to decide whether, given two $3$-colorings, one can transform one into the other in polynomial time and characterized graphs for which $\G(G,3)$ is connected~\cite{Cereceda09,CerecedaHJ11}.
Given any two $k$-colorings of a graph, it is $\mathbf{PSPACE}$-complete to decide whether one can be transformed into the other for $k \geq 4$~\cite{BonsmaC07}. 

The \emph{$k$-recoloring diameter} of a graph $G$ is the diameter of $\G(G,k)$ if $\G(G,k)$ is connected and is $+\infty$ otherwise. In other words, it is the minimum $D$ for which any $k$-coloring can be transformed into any other one through a sequence of at most $D$ recolorings.
Bonsma and Cereceda~\cite{BonsmaC07} proved that there exists a class $\mathcal{C}$ of graphs and an integer $k$ such that, for every graph $G \in \mathcal{C}$, there exist two $k$-colorings whose distance in the $k$-reconfiguration graph is finite and super-polynomial in $n$.
%However, the diameter of the $k$-reconfiguration graph may be polynomial when we restrict to a well-structured class of graphs and $k$ is large enough. Graphs with bounded degeneracy are natural candidates.

% Moreover, Cereceda~\cite{Cereceda} also showed that for any $d$-degenerate graph $G$ and every $k \geq 2d+1$, the diameter of $\G(G,k)$ is $O(n^2)$. This result has been improved to $k \ge 2d$ in~\cite{BousquetP16} (TOCHECK).

A graph $G$ is \emph{$d$-degenerate} if any subgraph of $G$ admits a vertex of degree at most $d$. In other words, there exists an ordering $v_1,\ldots,v_n$ of the vertices such that for every $i$, $v_i$ has at most $d$ neighbors in $v_{i+1},\ldots,v_n$.
It was shown independently by Dyer et al~\cite{dyer2006randomly} and by Cereceda et al.~\cite{Cereceda09} that for any $d$-degenerate graph $G$ and every $k \geq d+2$, $\G(G,k)$ is connected. Note that the bound on $k$ is the best possible since the $\G(K_n,n)$ is not connected. Cereceda~\cite{Cereceda} conjectured the following:

\begin{conjecture}[Cereceda~\cite{Cereceda}]
For every $d$, every $k \ge d+2$, and every $d$-degenerate graph $G$, the diameter of $\G(G,k)$ is at most $C_d \cdot n^2$.
\end{conjecture}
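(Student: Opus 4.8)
The plan is to fix a degeneracy ordering $v_1,\ldots,v_n$ of $G$ — so that each $v_i$ has at most $d$ neighbours among $v_{i+1},\ldots,v_n$ — and to reduce the statement to transforming an arbitrary $k$-coloring $\alpha$ into a single \emph{canonical} target $c^\star$ obtained greedily along this ordering (hence with $c^\star(v_i)\in\{1,\ldots,d+1\}$ for every $i$). Since recolorings are reversible, a bound $B$ on $\dist(\alpha,c^\star)$ valid for every $\alpha$ yields a bound $2B$ between any two colorings, so it suffices to prove $\dist(\alpha,c^\star)\le C_d\, n^2$. I would process the vertices from right to left, maintaining the invariant that $v_{i+1},\ldots,v_n$ already carry their target colors and, at the generic step, recoloring $v_i$ to $c^\star(v_i)$. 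The last $d+1$ vertices can be set to their targets in $O(1)$ moves each using the single spare color guaranteed by $k\ge d+2$; the whole difficulty lies in the inductive step.

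The obstruction there is that $c^\star(v_i)$ may currently be used by one of the (at most $d$) right-neighbours $w$ of $v_i$, all of which are \emph{frozen} at their target colors. To free that color one must temporarily recolor $w$, which may force recoloring a right-neighbour of $w$, and so on, creating a \emph{recoloring wave} that propagates strictly rightward; using the slack $k-(d+1)\ge 1$ one pushes each disturbed vertex to a color admissible with respect to its own at most $d$ right-neighbours, so the wave never cycles and dies out after $O(n)$ moves, after which $v_i$ is fixed, the wave is undone, and one proceeds. Bounding the cost naively this way only reproduces the exponential estimate of Cereceda et al.\ — each of the $n$ phases may be as expensive as all later phases together — so the quantitative heart of the conjecture is an amortized charging scheme showing that, over the entire transformation, every vertex is recolored only $O(n)$ times; equivalently, that the waves can be chosen so that the total \emph{damage} inflicted on already-fixed suffixes of the ordering is globally linear rather than snowballing.

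The step I expect to be the main obstacle — and the reason the conjecture is open in general — is exactly this charging scheme in the tight regime $k=d+2$: a disturbed right-neighbour may have a single admissible color, and that color can be the very one we were trying to liberate, spawning arbitrarily long reaction chains. When $k\ge 2d+2$ every vertex retains $\ge d$ free colors regardless of its $d$ right-neighbours, the wave is trivially controlled, and one even gets a linear bound; when $k\ge d+2$ no such local slack exists, and all known proofs substitute structural hypotheses for it (bounded degree, bounded treewidth, or — as in the present paper — chordality, which permits routing the waves through a clique tree). A general argument would instead need a global invariant, refining Cereceda's degeneracy rank into a genuine potential that (i) strictly decreases at each phase and (ii) is bounded by $O(n^2)$ rather than exponentially in $n$. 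I would attempt to build it by pairing the degeneracy ordering with a tie-breaking rule making $c^\star$ a \emph{strict sink} of a suitably oriented ``recoloring digraph,'' but I expect this to be the genuinely hard point on which the whole conjecture turns.
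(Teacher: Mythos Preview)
The statement you are attempting is Cereceda's conjecture, which is \emph{open}: the paper states it as a conjecture and does not prove it. There is therefore no ``paper's own proof'' to compare against. What the paper actually proves is something much weaker and more specialized --- a linear-diameter bound for chordal graphs when $k\ge d+4$ --- precisely because the general conjecture remains out of reach.

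Your proposal is not a proof, and to your credit you essentially say so yourself. The scheme you outline (fix a degeneracy ordering, aim for a canonical greedy coloring, process vertices right-to-left, propagate recoloring waves rightward) is exactly the Cereceda--van~den~Heuvel--Johnson argument that establishes connectivity of $\G(G,k)$ for $k\ge d+2$; as you note, its naive cost bound is exponential, not quadratic. The ``amortized charging scheme'' you posit in the third paragraph --- a potential function bounded by $O(n^2)$ that strictly decreases over the process --- is precisely the missing ingredient that would resolve the conjecture, and nobody currently knows how to build it in the tight regime $k=d+2$. Your final paragraph correctly diagnoses why: with only one spare color, a single forced recoloring can cascade without any local slack to absorb it, and all known partial results (including this paper's) circumvent this by imposing extra structure rather than by finding the elusive global potential. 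So the genuine gap is not a flaw in your reasoning but the absence of the key idea the conjecture is asking for; what you have written is an accurate survey of the difficulty, not a proof.
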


If true, the quadratic function is the best possible, even for paths, as shown in~\cite{BonamyJ12}.
Bousquet and Heinrich~\cite{BousquetH19} recently proved that the diameter of $\G(G,k)$ is $O(n^{d+1})$. In the general case,
Cereceda's conjecture is only known to be true for $d=1$ (trees)~\cite{BonamyJ12} and $d=2$ and $\Delta \le 3$~\cite{FeghaliJP16}. The diameter of $\G(G,k)$ is $O(n^2)$ when $k \ge \frac 32 (d+1)$ as shown in~\cite{BousquetH19}. Even if Cereceda's conjecture is widely open for general graphs, it has been proved for a few graph classes, e.g. chordal graphs~\cite{BonamyJ12}, bounded treewidth graphs ~\cite{BonamyB18}, and bipartite planar graphs~\cite{BousquetH19}.

Jerrum conjectured that if $k \ge \Delta+2$, the mixing time (time needed to approach the stationary distribution) of the Markov chain of graph colorings\footnote{A random walk on $\mathcal{G}(G,k)$. Fore more details on Markov chains on graph colorings, the reader is for instance referred to~\cite{ChenDMPP19}.} is $O(n \log n)$. So far, the conjecture has only been proved if $k \ge (\frac{11}{6}-\epsilon) \Delta$~\cite{ChenDMPP19}.
 Since the diameter of the reconfiguration graph is a lower bound of the mixing time, a lower bound on the diameter is of interest to study the mixing time of the Markov chain. In order to obtain such a mixing time, we need an (almost) linear diameter.
 
 The diameter of $\G(G,k)$ is linear if $k \ge 2d+2$~\cite{BousquetP16} or if $k$ is at least the grundy number of $G$ plus $1$~\cite{BonamyB18}. When $k =d+2$, the diameter of $\G(G,k)$ may be quadratic, even for paths~\cite{BonamyJ12}. But it might be true that the diameter of $\G(G,k)$ is linear whenever $k \ge d+3$. In this paper, we investigate the following question, raised for instance in~\cite{BousquetH19}: when does the $k$-recoloring diameter of $d$-degenerate graphs become linear?

\paragraph*{Our results.}
A graph is \emph{chordal} if it does not contain any induced cycle of length at least $4$. Chordal graphs admit a \emph{perfect elimination ordering}, i.e. there exists an ordering $v_1,\ldots,v_n$ of $V$ such that, for every $i$, $N[v_i] \cap \{ v_{i+1},\ldots,v_n\}$ is a clique. Chordal graphs are $(\omega(G)-1)$-degenerate where $\omega(G)$ is the size of a maximum clique of $G$. Our main result is the following:

\begin{theorem}\label{thm:chordal}
%Let $G$ be a $d$-degenerate chordal graph of maximum degree $\Delta$. For every $k \ge d+4$, the diameter of $\G(G,k)$ is at most $f(\Delta) \cdot n$. Moreover, given two colorings $c_1,c_2$ of $G$, a transformation of length at most  $f(\Delta) \cdot n$ can be found in linear time (if $\Delta$ is a constant).
Let $\Delta$ be a fixed integer. Let $G$ be a $d$-degenerate chordal graph of maximum degree $\Delta$. For every $k \ge d+4$, the diameter of $\G(G,k)$ is at most $O_{\Delta}(n)$. Moreover, given two colorings $c_1,c_2$ of $G$, a transformation of length at most  $O_{\Delta}(n)$ can be found in linear time.
\end{theorem}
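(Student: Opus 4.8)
The plan is to exploit the perfect elimination ordering of $G$ together with the bounded maximum degree, and to reduce the problem of recoloring $G$ to independently recoloring small ``pieces'' that can each be fixed in a bounded number of steps. The overall strategy I would follow: first, since $G$ is chordal with maximum degree $\Delta$, it has bounded clique size $\omega(G) \le \Delta+1$ and in fact bounded treewidth (at most $\Delta$); I would build a clique-tree decomposition $(T, \{B_t\})$ of $G$ in linear time. The key structural idea is to process the bags of the clique tree in a bottom-up fashion, and to recolor $G$ so as to agree with the target coloring $c_2$ on the separators one separator at a time, working from the leaves toward the root. Because $k \ge d+4 \ge \omega(G)+3$, at each clique separator $S$ there are at least three ``free'' colors not used on $S$ in the current coloring, which is exactly the slack needed to reroute colors of vertices hanging off $S$ without ever touching $S$ itself.

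The heart of the argument is a \emph{local recoloring lemma}: given a clique separator $S$ that is already correctly colored and a component $H$ of $G - S$ attached to $S$, if $H$ has bounded size then one can transform the coloring restricted to $H \cup S$ to the target coloring restricted to $H \cup S$ using $O_\Delta(1)$ recolorings while keeping $S$ fixed throughout. The difficulty is that $H$ need not have bounded size. To handle this I would argue that, because every vertex has degree at most $\Delta$, each clique separator has at most $\Delta$ vertices and each vertex of $S$ has at most $\Delta - |S| + 1$ neighbors outside $S$; iterating, one shows that the portion of $G$ that needs to be recolored ``around'' any single separator before moving on — i.e.\ between two consecutive separators on the clique tree — is either of bounded size or is itself recursively decomposable by further separators. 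Thus I would set up the induction on the clique tree so that the unit of work at each step is a \emph{bounded-size} chunk: the vertices of one bag that are not in its parent separator, whose number is at most $\Delta+1$.

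Concretely, the algorithm I propose is: (1) root the clique tree; (2) for each non-root bag $B_t$ in a bottom-up order, let $S_t = B_t \cap B_{\mathrm{parent}(t)}$ be its separator; inductively assume that when we reach $B_t$ the current coloring already matches $c_2$ on $B_t$ \emph{except} possibly on $B_t \setminus S_t$, and that everything strictly below $B_t$ has been finalized; (3) recolor the at most $\Delta+1$ vertices of $B_t \setminus S_t$ to their $c_2$-values, one at a time, using the $\ge 3$ free colors available to route through intermediate colors and thereby avoid conflicts with the (already finalized) descendants and with $S_t$. Since each vertex is the ``new'' vertex of at most one bag in a nice clique-tree traversal, the total number of recolorings is $O_\Delta(1)$ per vertex, i.e.\ $O_\Delta(n)$ overall, and the whole procedure runs in linear time. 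Finally, to start the induction cleanly one first transforms $c_1$ into \emph{any} fixed canonical coloring and then runs the reverse of the same procedure from $c_2$; concatenating gives the bound.

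The main obstacle I anticipate is proving the local step rigorously — showing that with only three spare colors one can always recolor a bounded-size chunk $B_t \setminus S_t$ from its current coloring to its target coloring while both the separator $S_t$ above and all finalized descendant vertices below are frozen. This is where chordality is really used: the finalized part below attaches to $B_t$ only through $S_t$, so the only frozen constraints on $B_t \setminus S_t$ come from within $B_t$ itself plus a bounded number of already-correctly-colored neighbors. One then needs a small ``escape-move'' lemma of the flavor of Cereceda et al.: in a graph on $O_\Delta(1)$ vertices with list sizes at least $3$ more than the degeneracy, any two colorings are connected by $O_\Delta(1)$ recolorings with a prescribed vertex subset frozen. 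Proving that frozen-vertex statement with the correct color-count slack, and checking that $d+4$ (rather than $d+3$) is exactly what makes the induction go through at every separator, will require the most care.
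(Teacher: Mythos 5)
Your high-level intuition --- process the clique tree so that each vertex is recolored only $O_\Delta(1)$ times --- is exactly the target, and your observation that $k\ge d+4$ gives the needed slack is also in the right spirit. However, the inductive invariant you set up cannot actually be maintained, and repairing it is precisely the hard part of the problem that your proposal leaves open.

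The gap is in the combined assumption that, when you reach a bag $B_t$ in a bottom-up traversal, both \emph{(a)} everything strictly below $B_t$ is already frozen at $c_2$ and \emph{(b)} the parent separator $S_t = B_t \cap B_{\mathrm{parent}(t)}$ already matches $c_2$. These two facts can never hold simultaneously mid-algorithm. Every vertex of $S_t$ also lies in $B_{\mathrm{parent}(t)}$, hence its unique ``introducing'' bag (the highest bag containing it) is a strict ancestor of $B_t$; in a bottom-up order that ancestor is processed \emph{after} $B_t$, so $S_t$ is still at arbitrary colors when you reach $B_t$. (If instead you process top-down, then $S_t$ is correct but the descendants are not.) Once you notice that exactly one side of the separator is still ``wrong'' at the time you process $B_t$, the local step is no longer local: recoloring a vertex $u\in B_t\setminus S_t$ to $c_2(u)$ can collide with the unfinalized side, forcing you to touch vertices that have already been (or will later be) processed. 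This is exactly the cascading that makes the naive induction give a quadratic bound (known for chordal graphs since Bonamy et al.), not a linear one; the ``escape-move on a bounded chunk with frozen boundary'' lemma you hope for is the place where the argument breaks, because the boundary cannot legitimately be frozen at $c_2$ on both sides.

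The paper's approach is a response to precisely this obstacle. Rather than fixing one bag at a time while freezing its neighbours, it maintains a \emph{buffer}: a band of $\Theta(\omega^2 + k)$ consecutive ``regions'' (each region a stack of $\Theta(\Delta)$ cliques) sitting between the part of the tree that is already in the canonical coloring and the part that is still untouched. The buffer carries a highly structured coloring (waiting / color / transposition regions) designed so that (i) the coloring is proper at every moment, (ii) the interface towards the untouched part can absorb one more clique at a time by only changing an internal class, and (iii) the whole structure can be slid one level up the clique tree with $O(\omega^2)$ recolorings per coordinate. Because each vertex lives in the buffer for only $O(\Delta\cdot(\omega^2+k))$ levels, the total is $O_\Delta(n)$. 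Your proposal contains no mechanism that plays this role of gradually absorbing the mismatch without propagating it, which is why, as written, it does not yield the theorem.
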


Theorem~\ref{thm:chordal} improves the best existing upper bound on the diameter of $\G(G,k)$ (where $G$ is chordal) was quadratic up to $k=2d+1$~\cite{BousquetP16}.

Note that the bound on $k$ is almost the best possible since we know that this result cannot hold for $k \le d+2$~\cite{BonamyJ12}. So there is only one remaining case which is the case $k=d+3$.

%Theorem~\ref{thm:chordal} raises two natural questions:

\begin{question}
Is the diameter of $\G(G,d+3)$ at most $f(\Delta(G)) \cdot n$ for any $d$-degenerate graph $G$?
\end{question}

In some very restricted cases (such as power of paths), our proof technique can be extended to $k=d+3$, but this is mainly due to the very strong structure of these graphs. For chordal graphs (or even interval graphs), we need at least $d+4$ colors at several steps of the proof and decreasing $k$ to $d+3$ seems to be a challenging problem.

We also ask the following question: is it possible to remove the dependency on $\Delta$ to only obtain a dependency on the degeneracy? More formally:

\begin{question}
Is the diameter of $\G(G,d+3)$ at most $f(d) \cdot n$ for any $d$-degenerate chordal graph $G$?
\end{question}

The best known result related to that question is the following: $\G(G,k)$ has linear diameter if $k \ge 2d+2$ (and $f$ is a constant function)~\cite{BousquetP16}. 

\begin{question}
Is the diameter of $\G(G,d+3)$ at most $f(\Delta(G)) \cdot n$ for any bounded treewidth graph $G$?
\end{question}

Our proof cannot be immediately adapted for bounded treewidth graphs since we use the fact that all the vertices in a bag have distinct colors. Feghali~\cite{article} proposed a method to "chordalize" bounded treewidth graphs for recoloring problems. However his proof technique does not work here since it may increase the maximum degree of the graph. We nevertheless think that our proof technique can be adapted in order to study many well-structured graph classes.

\paragraph*{Proof outline.}
In order to prove Theorem~\ref{thm:chordal}, we introduce a new proof technique to obtain linear diameters for recoloring graphs.
The existing results (e.g.~\cite{BousquetP16}) ensuring that $\mathcal{G}(G,k)$ has linear diameter are based on inductive proofs that completely fail when $k$ is close to $d$.
On the other hand, in the proofs giving quadratic diameters (e.g.~\cite{BonamyJ12}), the technique usually consists in finding two vertices that can be "identified" and then applying induction on the reduced graph. In that case, even if we can identify two vertices after a constant number of single vertex recolorings, we only obtain a quadratic diameter (since each vertex might "represent" a linear number of initial vertices). Both approaches are difficult to adapt to obtain linear transformations since they do not use or "forget" the original structure of the graph.

%Our proof consists in finding a small\footnote{By small, we mean of size $f(\Delta)$.} separator (called buffer) of the graph that separates a part that is "already treated" from a part that is "not yet treated". The already treated part is colored with the target coloring. We show that, by recoloring every vertex of the buffer a constant number of times, we can find a new buffer such that the size of the already treated part has increased. 
Let us roughly describe the idea of our method on interval graphs.
A buffer $\mathcal{B}$ is a set of vertices contained in $f(\omega)$ consecutive cliques of the clique path. We assume that at the left of the buffer, the coloring of the graph already matches the target coloring. We moreover assume that the coloring of $\mathcal{B}$ is special in the sense that, for every vertex $v$ in $\mathcal{B}$, at most $d+2$ colors appear in the neighborhood of $v$ ~\footnote{Our condition is actually even more restrictive.}. Note that in order to satisfy this property (and several others detailed in Section~\ref{sec:buff-def}), the buffer has to be ``long enough''. The main technical part of the proof consists in showing that if the buffer is ``long enough'', then we can modify the colors of vertices of the buffer in such a way that the same assumptions hold for the buffer starting one clique to the right of the starting clique of $\mathcal{B}$. We simply have to repeat at most $n$ times this operation to guarantee that the coloring of the whole graph is the target coloring. Since a vertex is recolored only if it is in the buffer and a vertex is in the buffer a constant (assuming $\Delta$ constant) number of times, every vertex is recolored a constant number of times.

The structure of this special coloring of the buffer, which is the main novelty of this paper, is described in Sections~\ref{ssec:buffer} to~\ref{ssec:vbuffer}. 
We actually show that this graph recoloring problem can be rephrased into a "vectorial recoloring problem" (Section~\ref{ssec:vectorial-recol}) which is easier to manipulate. And we finally prove that this vectorial recoloring problem can be solved by recoloring every element (and then every vertex of the graph) at most a constant number of times in Section~\ref{sec:interval-proof}.

\section{Buffer and vectorial coloring}\label{sec:buff-def}
Throughout this section, $G = (V,E)$ is a chordal graph on $n$ vertices of maximum clique number $\omega$ and maximum degree $\Delta$. Let $k \geq \omega + 3$ be the number of colors denoted by $1,\ldots,k$. Given two integers $x \leq y$, $\llbracket x, y \rrbracket$ is the set $\{x, x+1, \ldots, y\}$. The \emph{closed neighbourhood} of a set $S \subseteq V$ is $N[S] := S \cup (\cup_{v \in S} N(v))$.

\subsection{Chordal graphs and clique trees}\label{subsec:canonical-color}

\paragraph*{Vertex ordering and canonical coloring.}
Let $v_1, v_2, \ldots v_n$ be a perfect elimination ordering of $V(G)$. A greedy coloring of $v_n, v_{n-1}, \ldots, v_1$ gives an optimal coloring $c_0$ of $G$ using only $\omega$ colors. The coloring $c_0$ is called the \textit{canonical coloring of $G$}. The colors $c \in 1, 2 \ldots, \omega$ are called the \emph{canonical colors} and the colors $c > \omega$ are called the \emph{non-canonical colors}. Note that the independent sets $X_i := \{v \in V \text{ such that } c_0(v) = i\}$ for $i \le \omega$, called the \emph{classes} of $G$, partition the vertex set $V$.

\paragraph*{Clique tree.}
Let $G=(V,E)$ be a chordal graph. A \emph{clique tree} of $G$ is a pair $(T,B)$ where $T=(W,E')$ is a tree and $B$ is a function that associates to each node $U$ of $T$ a subset of vertices $B_U$ of $V$ (called \emph{bag}) such that: (i) every bag induces a clique, (ii) for every $x \in V$, the subset of nodes whose bags contain $x$ forms a non-empty subtree in $T$, and (iii) for every edge $(U,W) \in T$, $B_U \setminus B_W$ and $B_W \setminus B_U$ are non empty. Note that the size of every bag is at most $\omega(G)$.
A clique-tree of $G$ can be found in linear time~\cite{10.1007/3-540-60618-1_88}. Throughout this section, $T = (V_T, E_T)$ is a clique tree of $G$. We assume that $T$ is rooted on an arbitrary node. Given a rooted tree $T$ and a node $C$ of $T$, the \emph{height} of $C$ denoted by $h(C)$ is the length of the path from the root to $C$. 
\begin{observation}\label{obs:max-deg-ct}
Let $G$ be a chordal graph of maximum degree $\Delta$ and $T$ be a clique tree of $G$ rooted in an arbitrary node. Let $x$ be a vertex of $G$ and $C_i, C_j$ be two bags of $T$ that contain $x$. Then $h(C_j) - h(C_i) \leq \Delta$.
\end{observation}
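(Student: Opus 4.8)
The plan is to combine the subtree property (ii) of clique trees with the ``fresh vertex on every tree edge'' property (iii). Assume without loss of generality that $h(C_j) \ge h(C_i)$, the other case being immediate. By property (ii), the set $T_x$ of nodes of $T$ whose bag contains $x$ forms a subtree of $T$; as $T$ is rooted, $T_x$ has a unique node $C$ of minimum height, and $C$ is an ancestor of every node of $T_x$ --- in particular of both $C_i$ and $C_j$ --- since otherwise the path inside $T_x$ from $C$ to such a node would pass through a node of smaller height still lying in $T_x$. Consequently the path $P = (C = D_0, D_1, \ldots, D_m = C_j)$ in $T$ is a strictly descending path with $m = h(C_j) - h(C)$, and every $D_t$ contains $x$.

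For each $t \in \{1, \ldots, m\}$ the pair $(D_{t-1}, D_t)$ is an edge of $T$, so by property (iii) we may pick $y_t \in B_{D_t} \setminus B_{D_{t-1}}$. Since $x, y_t \in B_{D_t}$, this bag is a clique, and $y_t \ne x$ (because $x \in B_{D_{t-1}}$ whereas $y_t \notin B_{D_{t-1}}$), the vertex $y_t$ is a neighbour of $x$. I then claim the $y_t$ are pairwise distinct: if $y_s = y_t$ with $s < t$, then by property (ii) the nodes whose bag contains this vertex form a subtree, which contains $D_s$ and $D_t$ and hence the whole subpath from $D_s$ to $D_t$, in particular $D_{t-1}$; this forces $y_t \in B_{D_{t-1}}$, a contradiction. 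Thus $x$ has at least $m$ distinct neighbours, so $m \le |N(x)| \le \Delta$. Finally, $h(C) \le h(C_i)$ gives $h(C_j) - h(C_i) \le h(C_j) - h(C) = m \le \Delta$.

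The only point requiring care is the distinctness of the vertices $y_1, \ldots, y_m$, and it is settled by re-invoking the subtree property (ii); the rest is routine bookkeeping on the rooted tree. An entirely equivalent route is to take $C$ to be the lowest common ancestor of $C_i$ and $C_j$, which also lies in $T_x$, and run the same counting argument along the descending path from $C$ to $C_j$ (and symmetrically to $C_i$ if one prefers to bound $|h(C_j)-h(C_i)|$ directly).
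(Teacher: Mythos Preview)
Your proof is correct and follows essentially the same approach as the paper: reduce to a descending path from a common ancestor to $C_j$, then use property (iii) to extract a fresh neighbour of $x$ at each edge. You are in fact more careful than the paper's proof, which does not spell out the distinctness argument for the $y_t$ that you establish via property (ii).
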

\begin{proof}
We can assume without loss of generality (free to replace the one with the smallest height by the first common ancestor of $C_i$ and $C_j$) that $C_i$ is an ancestor of $C_j$ (indeed, this operation can only increase the difference of height). Let $P$ be the path of $T$ between $C_i$ and $C_j$ and $(U,W)$ be an edge of $P$ with $h(U) < h(W)$. By assumption on the clique tree, there is a vertex $y$ that appears in $B_W$ and that does not appear in $B_U$. Since this property is true for every edge of $T$ and since all the bags of $P$ induce cliques and contain $x$, the vertex $x$ has at least $|P|$ neighbors. 
\end{proof}

%In Section \ref{sec:buff-def} we define the notion of \textit{valid buffer} in a clique tree of a chordal graph of maximum degree $\Delta$, that is a sub-tree whose coloring is such that few colors are used in sufficiently large (of size $O(\Delta)$) connected part of the graph. We then use this property of valid buffers in Section \ref{sec:interval-proof} to derive a recoloring algorithm that proves that the reconfiguration graphs $R_{\omega+3}(G)$ of a chordal graph of maximum degree $\Delta$ has diameter at most $C \omega^2 \Delta \cdot n$, $\omega$ being the size of a maximum clique of $G$ and $C$ being a constant.

\subsection{Buffer, blocks and regions}\label{ssec:buffer}
Let $C_e$ be a clique of $T$. We denote by $T_{C_e}$ the subtree of $T$ rooted in $C_e$ and by $h_{C_e}(C)$ the height of the clique $C \in T_{C_e}$. Given a vertex $v \in T_{C_e}$, we say that $v$ \emph{starts} at height $h$ if the maximum height of a clique of $T_{C_e}$ containing $v$ is $h$ (in $T_{C_e}$).

Let $s := 3 {\omega \choose 2} + 2$ and $N = s + k - \omega + 1$ where $k$ is the number of colors. 
The \emph{buffer $\mathcal{B}$ rooted in $C_e$} is the set of vertices of $G$ that start at height at most $3 \Delta N - 1$ in $T_{C_e}$. For every $0 \leq i \leq 3N-1$, the \emph{block} $Q_{3N-i}$ of $\mathcal{B}$ is the set of vertices of $G$ that start at height $h$ with $i\Delta \leq h \leq (i+1)\Delta-1$. 
Finally, for $0 \leq i \leq N-1$, the \emph{region} $R_i$ of $\mathcal{B}$ is the set of blocks $Q_{3i+1}, Q_{3i+2}, Q_{3(i+1)}$. Unless stated otherwise, we will always denote the three blocks of $R_i$ by $A_i$, $B_i$ and $C_i$, and the regions of a buffer $\mathcal{B}$ by $R_1, \ldots, R_N$. Given a color class $X_p$ and $S \subseteq V$, we denote by $N[S,p]$ the set $N[S \cap X_p]$. By definition of a block and Observation \ref{obs:max-deg-ct} we have:

\begin{observation}\label{obs:separation}
Let $C_e$ be a clique of $T$ and $\mathcal{B}$ be the buffer rooted in $T_{C_e}$. Let $Q_{i-1},Q_i, Q_{i+1}$ be three consecutive blocks of $\mathcal{B}$. Then $N[Q_i] \subseteq Q_{i-1} \cup Q_i \cup Q_{i+1}$. In particular for each region $R_i = (A_i, B_i, C_i)$ of $\mathcal{B}$, $N[B_i] \subseteq R_i$.
\end{observation}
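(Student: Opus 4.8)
The plan is to translate the block condition into a statement about heights in $T_{C_e}$ and then read everything off Observation~\ref{obs:max-deg-ct}. Recall that a vertex $v$ lies in the block $Q_j$ precisely when its start height $h_v$ (the largest height of a bag of $T_{C_e}$ containing $v$) satisfies $(3N-j)\Delta \le h_v \le (3N-j+1)\Delta-1$; so consecutive blocks correspond to consecutive windows of heights of length $\Delta$, with a smaller index meaning deeper in $T_{C_e}$.

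First I would fix $x \in Q_i$ with start height $h_x$ and determine which bags can contain $x$. By the very definition of the start height, every bag of $T_{C_e}$ containing $x$ has height at most $h_x$; and since any two bags of a clique tree sharing a vertex differ in height by at most $\Delta$ (Observation~\ref{obs:max-deg-ct}, applied inside $T_{C_e}$), every such bag has height at least $h_x-\Delta$. Hence all bags containing $x$ lie at heights in the window $[h_x-\Delta,\, h_x]$. I also need these bags to all live in $T_{C_e}$, so that a neighbour of $x$ reached through one of them is still a vertex of the buffer: as soon as $h_x \ge 2\Delta$ — which holds for every block except the two deepest ones $Q_{3N}$ and $Q_{3N-1}$ — the bag of height $h_x$ is at tree-distance more than $\Delta$ from $C_e$, so Observation~\ref{obs:max-deg-ct} forbids $x$ from belonging to $C_e$ or to any bag outside $T_{C_e}$, and all bags containing $x$ are strict descendants of $C_e$ (the two extreme blocks, when they are involved, need a short separate check using that a neighbour leaving $T_{C_e}$ would lie in the parent bag of $C_e$).

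Next, let $y$ be a neighbour of $x$. Since every edge of a chordal graph lies in some bag of a clique tree, $x$ and $y$ share a bag $B$, which by the previous step lies in $T_{C_e}$ at some height $h \in [h_x-\Delta,\,h_x]$. Then $y$'s start height $h_y$ is at least $h$ (because $y \in B$) and, again by Observation~\ref{obs:max-deg-ct}, at most $h+\Delta$; hence $h_y \in [h_x-\Delta,\, h_x+\Delta]$. Plugging in $h_x \in [(3N-i)\Delta,\,(3N-i+1)\Delta-1]$ gives $h_y \in [(3N-i-1)\Delta,\,(3N-i+2)\Delta-1]$, and this interval is exactly the union of the three height-windows defining $Q_{i+1}$, $Q_i$ and $Q_{i-1}$. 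Therefore $y \in Q_{i-1}\cup Q_i \cup Q_{i+1}$, which proves $N[Q_i] \subseteq Q_{i-1}\cup Q_i \cup Q_{i+1}$. The ``in particular'' clause follows at once: the middle block of a region $R_i$ is $B_i = Q_{3i+2}$, and its two neighbouring blocks $Q_{3i+1}$ and $Q_{3i+3}$ are precisely the other two blocks $A_i$ and $C_i$ of $R_i$, so $N[B_i] \subseteq A_i\cup B_i\cup C_i = R_i$.

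The only genuinely delicate point — and hence the main obstacle — is the bookkeeping at the two ends of the buffer: one must check that a neighbour of a buffer vertex cannot fall off the bottom of the buffer (this needs $i\ge 2$, which is automatic once $Q_{i-1}$ is required to exist) and, more subtly, that it cannot escape $T_{C_e}$ through the top near $C_e$; everything else is matching up arithmetic progressions of heights.
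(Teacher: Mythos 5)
Your proof is correct and follows essentially the same approach as the paper's: both translate block membership into windows of start heights and apply Observation~\ref{obs:max-deg-ct} to bound a neighbour's start height by $\pm\Delta$. You supply details the paper leaves implicit (the shared-bag argument between neighbours, and the boundary check that a neighbour cannot leave $T_{C_e}$ near the root, which is in fact already forced by requiring $Q_{i+1}$ to exist so that $h_x\ge\Delta$); the only slip is calling $Q_{3N},Q_{3N-1}$ the ``deepest'' blocks when they are in fact the ones closest to $C_e$.
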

\begin{proof}
%Let $v$ be a vertex of $Q_i$. First note that the cliques containing $v$ belongs either to $Q_i$ or to $Q_{i+1}$. Indeed, by definition of $Q_i$, $v$ starts at height $h$ with $(3N-i)\Delta \leq h \leq (3N - i + 1)\Delta-1$ and by Observation \ref{obs:max-deg-ct}, the minimum height of a clique containing $v$ is $(3N-i-1)\Delta$ and such a clique belongs to $Q_{i+1}$. Thus if $u$ is a neighbor of $v$, either $u$ belongs to $Q_i$, either $u$ belongs to $Q_{i-1}$ and is contained in a clique of $Q_i$ containing $v$, or $u$ belongs to $Q_{i+1}$ and $v$ is contained in a clique of $Q_{i+1}$ containing $u$.
Let $v$ be a vertex of $Q_i$. By definition of $Q_i$, $v$ starts at height $h$ with $(3N-i)\Delta \leq h \leq (3N - i + 1)\Delta-1$. Let $u$ be a neighbour of $v$. By Observation \ref{obs:max-deg-ct}, $u$ starts at height $h'$ with $h-\Delta \leq h' \leq h+\Delta$, thus we have $(3N-i-1)\Delta \leq h' \leq (3N - i + 2)\Delta-1$. It follows that $u$ belongs either to $Q_{i-1}$, $Q_i$, or $Q_{i+1}$.
\end{proof}

We refer to this property as the \textit{separation property}. It implies that when recoloring a vertex of $Q_i$, one only has to show that the coloring induced on $Q_{i-1}, Q_i, Q_{i+1}$ remains proper.

\begin{figure}
    \centering
    \includegraphics[scale=0.85]{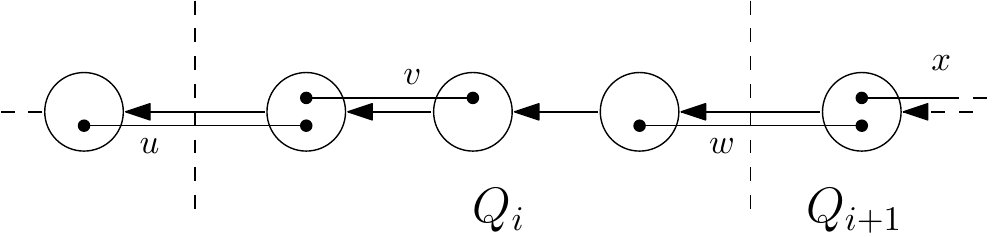}
    \label{fig:blocks}
    \caption{The nodes represent cliques of $G$. The vertices $v$ and $w$ belong to $Q_i$. The vertex $u$ does not belong to $Q_i$ even if it intersects cliques of $Q_i$, and the vertex $x$ belongs to $Q_{i+1}$.}
\end{figure}
\begin{figure}
    \centering
    \includegraphics[scale=0.85]{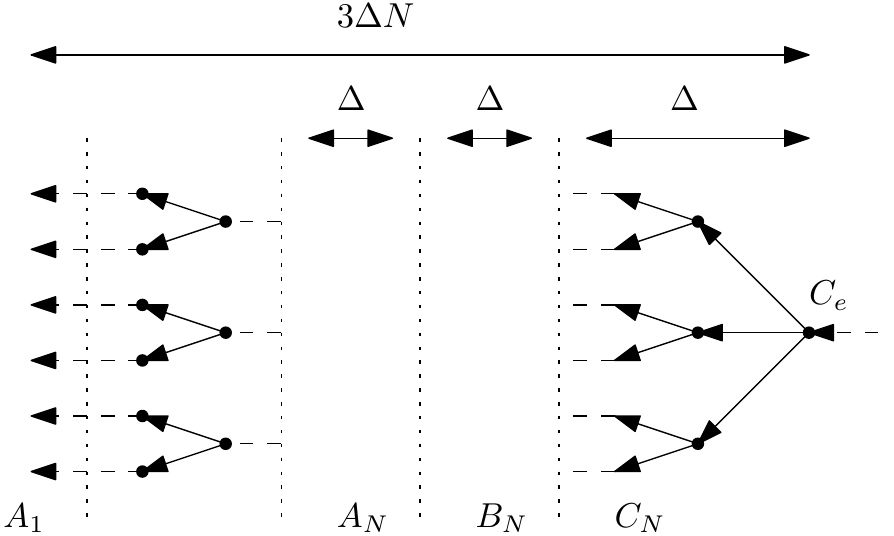}
    \label{fig:buffer}
    \caption{The buffer $\mathcal{B}$ rooted at $C_e$. The dots represent cliques of $T$ and the dashed lines separates the blocks of $\mathcal{B}$.}
\end{figure}

%\begin{figure}[!h]
%    \centering
%    \includegraphics[scale=0.9]{schema_blocks.pdf}
%    \caption{The nodes represent cliques of $G$. The vertices $v$ and $w$ belong to $Q_i$. The vertex $u$ does not %belongs to $A_i$ and the vertex $x$ belongs to $Q_{i+1}$.}
%    \label{fig:schema_buffer}
%\end{figure}

\subsection{Vectorial coloring}\label{ssec:vrecol}

Let $\mathcal{B}$ be a buffer. We denote the set of vertices of class $p$ that belong to the sequence of blocks $Q_i, \ldots, Q_j$ of $\mathcal{B}$ by $(Q_i, \ldots, Q_j, p)$. A \emph{color vector} $\nu$ is a vector of size $\omega$ such that $\nu(p) \in \llbracket 1, k \rrbracket$ for every $p \in \llbracket 1, \omega \rrbracket$, and $\nu(p) \neq \nu(q)$ for every $p \neq q \leq \omega$. A block $Q$ is \emph{well-colored for a color vector $\nu_Q$}  if all the vertices of $(Q,p)$ are colored with $\nu_Q(p)$. It does not imply that all the colors are $\leq \omega$ but just that all the vertices of a same class have the same color (and vertices of different classes have different colors). For brevity, we say that $(Q, \nu_{Q})$ is \emph{well-colored} and when $\nu_Q$ is clear from context we just say that $Q$ is \emph{well-colored}. In particular, a well-colored block is properly colored. Since the set $(Q, p)$ may be empty, a block may be well-colored for different vectors. However, a color vector defines a unique coloring of the vertices of a block (if $(Q, \nu_{Q})$ is well-colored then every vertex of $(Q,p)$ has to be colored with $\nu_{Q}(p)$). The color vector $\nu$ is \emph{canonical} if $\nu(p) = p$ for every $p \leq \omega$. A sequence of blocks $Q_1,\ldots,Q_r$ is \emph{well-colored for $(\nu_1,\ldots,\nu_r)$} if $(Q_i,\nu_i)$ is well-colored for every $i \le r$. 
%We now define three types of region, namely the \textit{waiting regions}, the \textit{color regions}, and the \textit{transposition regions}. \vspace{0.5cm} \\
\smallskip

\begin{definition}[Waiting regions]
A region $R$ well-colored for vectors $\nu_A, \nu_B,\nu_C$ is a \textit{waiting region} if $\nu_A = \nu_B = \nu_C$.
\end{definition}

\begin{definition}[Color region] 
A region $R$ well-colored for vectors $\nu_A, \nu_B,\nu_C$ is a \textit{color region} if there exist a canonical color $c_1$, a non-canonical color $z$ and a class $p$ such that:
\begin{enumerate}
    \item $\nu_A(m) = \nu_B(m) = \nu_C(m) \notin \{c_1,z\}$ for every $m \neq p$.
    \item $\nu_A(p) = c_1$ and $ \nu_B(p) = \nu_C(p) = z$.
\end{enumerate}
\end{definition}
In other words, the color of exactly one class is modified from a canonical color to a non-canonical color between blocks $A$ and $C$. 
We say that the color $c_1$ \emph{disappears} in $R$ and that the color $z$ \emph{appears} in $R$. For brevity we say that $R$ is a color region for the class $X_p$ and colors $c_1,z$.

\begin{definition}[Transposition region]
A region $R$ well-colored for vectors $\nu_A, \nu_B,\nu_C$ is a \textit{transposition region} if there exist two canonical colors $c_1 \neq c_2$, two non-canonical colors $z \neq z'$ and two distinct classes $p,q$ such that:
\begin{enumerate}
    \item $\nu_A(m) = \nu_B(m) = \nu_C(m) \notin \{c_1,c_2,z,z'\}$ and is canonical for every $m \notin \{p, q$\}.
    \item $\nu_A(p) = c_1$, $\nu_B(p) = z$, $\nu_C(p) = c_2$.
    \item $\nu_A(q) = c_2$, $\nu_B(q) = z'$, $\nu_C(q) = c_1$.
\end{enumerate}
\end{definition}
Note that $\nu_{A}$ and $\nu_{C}$ only differ on the coordinates $p$ and $q$ which have been permuted. The colors $z$ and $z'$ are called the \emph{temporary colors} of $R$. Note that, the coloring induced on $R$ is proper since the separation property ensures that $N[A \cap R] \subseteq A \cup B$, $N[C \cap R] \subseteq B \cap C$ and no class in $B$ is colored with $c_1$ nor $c_2$.
\smallskip

%\emph{Swapping} the colors $c_1$ and $c_2$ in a well-colored region $R$ means that for every block we replace the coordinates colored with $c_1$ by $c_2$ (if they exists) and conversely. It does not refer to a reconfiguration operation but just to an operation on the vectors.
Let $\nu$ be a color vector. The color vector $\nu'$ is obtained from $\nu$ by \emph{swapping the coordinates}
$p, \ell \leq \omega$ if for every $m \notin \{p,l\}$, $\nu'(m) = \nu(m)$, $\nu'(p) = \nu(\ell)$, and $\nu'(\ell) = \nu(p)$. In other words, $\nu'$ is the vector obtained from $\nu$ by permuting the coordinates $p$ and $\ell$. \\
\emph{Swapping the coordinates $p$ and $\ell$} in a region $R$ well-colored for $(\nu_A, \nu_B, \nu_C)$ means that for every block $Q \in \{A, B, C\}$, we replace $\nu_Q$ by the color vector $\nu'_Q$ obtained by swapping the coordinates $p$ and $\ell$ of $\nu_Q$. It does not refer to a reconfiguration operation but just to an operation on the vectors.

\begin{figure}
    \centering
    \includegraphics[scale=0.75]{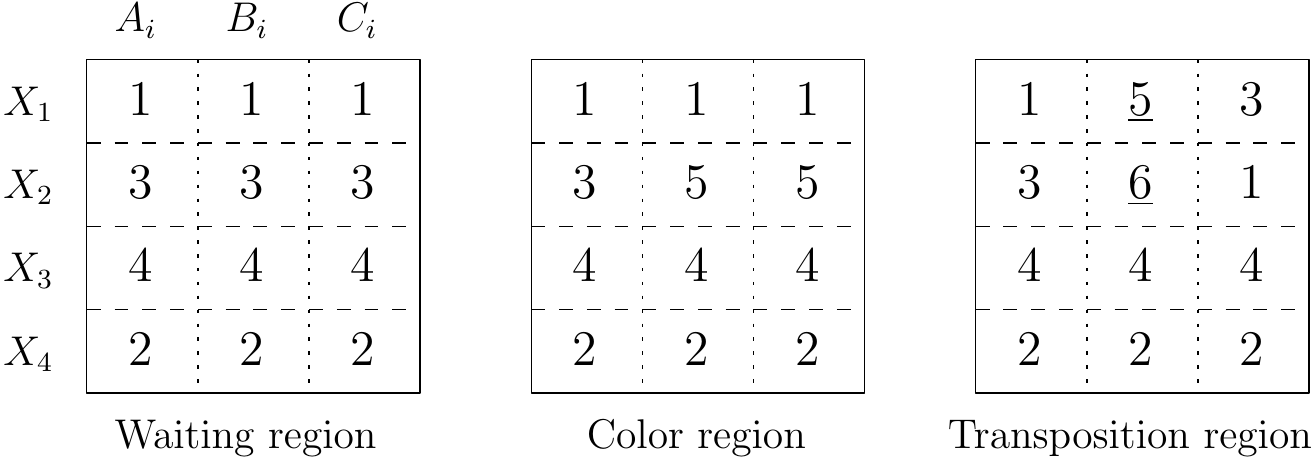}
    \caption{Example of waiting, color, and transposition regions with $\omega = 4$. Each square represents a region, the dotted lines separate the blocks and the dashed lines separate the classes. The colors $1$ to $4$ are the canonical colors and the colors $5$ and $6$ are non-canonical colors. The underlined colors in the transposition region indicate the temporary colors.}
    \label{fig:regions}
\end{figure}

\begin{observation}\label{obs:swapping-colors}
Swapping two coordinates in a waiting (resp. color, resp. transposition) region leaves a waiting (resp. color, resp transposition) region.
\end{observation}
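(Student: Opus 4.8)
The plan is to observe that swapping coordinates $p,\ell$ of a region is a purely syntactic operation: it applies one fixed transposition $\tau=(p\ \ell)$ of $\llbracket 1,\omega\rrbracket$ to each of the three vectors $\nu_A,\nu_B,\nu_C$. Two elementary facts drive the argument. First, applying the same permutation to two equal vectors yields two equal vectors, so every equality $\nu_A(m)=\nu_B(m)=\nu_C(m)$ that held on a coordinate $m$ is inherited on the coordinate $\tau^{-1}(m)$ after the swap. Second, the multiset of entries of each vector is unchanged; in particular a coordinate carrying a canonical (resp. non-canonical) color still carries a canonical (resp. non-canonical) color after the swap, just possibly in a different position, so the special colors $c_1,c_2,z,z'$ and their status are untouched. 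Given these, the proof reduces to tracking where the ``special'' classes of the region land under $\tau$. Throughout I write $r$ (resp. $r_1,r_2$) for the special class(es), playing the role of $p$ (resp. $p,q$) in the definitions, to avoid clashing with the swapped coordinates $p,\ell$.

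For a waiting region the statement is immediate: $\nu_A=\nu_B=\nu_C$ gives $\nu_A'=\nu_B'=\nu_C'$. For a color region with special class $r$ and colors $c_1,z$, I would split on whether $r\in\{p,\ell\}$. If $r\notin\{p,\ell\}$, then $\tau$ fixes $r$ and merely permutes two constant coordinates whose entries are still equal across $A,B,C$ and still avoid $\{c_1,z\}$ (each of $p,\ell$ was such a constant coordinate), so $r$ remains the special class with the same witnessing colors. If $r=p$ (the case $r=\ell$ being symmetric), then after the swap coordinate $p$ becomes constant with common value $\nu_A(\ell)\notin\{c_1,z\}$, while coordinate $\ell$ inherits the pattern $c_1,z,z$; hence $\ell$ becomes the new special class, still witnessed by $(c_1,z)$.

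The transposition case uses the same idea with one more coordinate to watch, so I would split on $|\{p,\ell\}\cap\{r_1,r_2\}|\in\{0,1,2\}$. If the intersection is empty, $\tau$ permutes only constant canonical coordinates and nothing changes. If it is $\{r_1\}$ (say, the others being symmetric), then coordinate $r_1$ becomes constant with value $\nu_A(\ell)$, which is canonical and avoids $\{c_1,c_2,z,z'\}$ because $\ell$ was itself a non-special coordinate of the original region, while coordinate $\ell$ inherits the pattern $(c_1,z,c_2)$; thus $(\ell,r_2)$ are the new special classes with the same four colors, and condition~(1) for the new region holds since all its constant coordinates were already constant canonical before. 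If the intersection is $\{r_1,r_2\}$, then $\tau$ simply exchanges the two special coordinates, which amounts to interchanging the roles of $(c_1,z)$ and $(c_2,z')$ in the definition, so a transposition region is obtained again.

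I do not expect a genuine obstacle: the argument is bookkeeping. The only points requiring care are the verifications that the entry moved onto an old special coordinate avoids the forbidden color set and is canonical where required, and these hold precisely because that entry came from a non-special coordinate of the original region, on which conditions~(1) already guarantee exactly that.
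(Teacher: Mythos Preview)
Your proof is correct; the case analysis is the natural way to verify the observation, and each case is handled accurately, including the check that entries moved onto former special coordinates inherit the required ``canonical and not in the forbidden set'' properties from condition~(1) of the original region. The paper itself gives no proof for this observation, treating it as immediate from the definitions, so your explicit verification simply fills in what the paper leaves to the reader.
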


Using the following lemma, we can assume that all the transposition regions use the same temporary colors.

\begin{lemma}\label{obs:choose-temporary}
Let $R$ be a transposition region with temporary colors $z,z'$. Let $z'' \notin \{z, z'\}$ be a non-canonical color. By recoloring the vertices of $R$ at most once, we can assume the temporary colors are $z, z''$.
\end{lemma}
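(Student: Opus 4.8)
The plan is to recolor, one block at a time, the vertices of $R$ that currently carry a temporary color, exchanging the role of $z'$ (say) for the fresh non-canonical color $z''$. Recall that $R$ is well-colored for $(\nu_A,\nu_B,\nu_C)$ where, writing $p,q$ for the two active classes, block $B$ is colored so that $(B,p)$ has color $z$ and $(B,q)$ has color $z'$, while $A$ and $C$ only use canonical colors on classes $p,q$ (and the same canonical colors off $\{p,q\}$). Without loss of generality we want to replace $z'$ by $z''$; if instead we wish to change $z$, the argument is symmetric, and changing both is done in two steps. So the only vertices to be recolored are those of $(B,q)$, currently colored $z'$, and we want them colored $z''$ at the end.

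First I would invoke the separation property (Observation~\ref{obs:separation}): since $B = B_i$ is the middle block of region $R_i$, we have $N[B_i] \subseteq R_i = A_i \cup B_i \cup C_i$, so to recolor a vertex of $(B,q)$ it suffices to check that $z''$ does not appear on its neighbours inside $A\cup B\cup C$. The vertices of $(B,q)$ form an independent set contained in the color class $X_q$, so they are pairwise non-adjacent and can all be recolored independently in a single pass (in any order); equivalently, recolor them one at a time — none of them is a neighbour of another. For a fixed $v \in (B,q)$, its neighbours in $B$ lie in classes $\neq q$, which are colored with $\nu_B(m)$ for $m \neq q$; by the definition of a transposition region these colors are canonical except for $\nu_B(p) = z$, and we chose $z'' \neq z$, so no $B$-neighbour blocks $z''$. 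Its neighbours in $A$ (resp.\ $C$) are colored with $\nu_A(m)$ (resp.\ $\nu_C(m)$); off $\{p,q\}$ these are canonical, $\nu_A(p)=c_1$, $\nu_A(q)=c_2$, $\nu_C(p)=c_2$, $\nu_C(q)=c_1$ are all canonical as well — hence all distinct from the non-canonical color $z''$. Therefore $z''$ is free in $N(v)$ and the recoloring $v \mapsto z''$ is proper.

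After this pass, $R$ is well-colored for $(\nu_A, \nu_B', \nu_C)$ where $\nu_B'$ agrees with $\nu_B$ except $\nu_B'(q) = z''$. It remains to observe that the new triple still satisfies the definition of a transposition region with temporary colors $z, z''$: conditions~(1) and~(2) are unchanged (they do not mention $q$ beyond requiring $\nu_B(m) \notin\{c_1,c_2,z,z'\}$ for $m\notin\{p,q\}$, which is untouched, and we may take $z''$ as the second temporary color in place of $z'$), and condition~(3) now reads $\nu_A(q)=c_2$, $\nu_B'(q)=z''$, $\nu_C(q)=c_1$, which holds. We have recolored each vertex of $R$ at most once (only the vertices of $(B,q)$ moved). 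I do not expect a genuine obstacle here; the only point requiring a small amount of care is the bookkeeping that $z''$, being non-canonical and distinct from $z$, cannot collide with any of the canonical colors used on classes $\neq q$ in the three blocks nor with the color $z$ used on class $p$ in $B$ — which is exactly what the case analysis above records.
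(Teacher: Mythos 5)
Your proof is correct and takes the same route as the paper: identify $(B,q)$ as the only set of vertices needing recoloring, and use the separation property together with the fact that a color class is an independent set to argue that recoloring $(B,q)$ from $z'$ to $z''$ is proper and yields a transposition region with temporary colors $z, z''$. You give more detail in checking that $z''$ is free in each neighbour's block, but the argument is the one the paper records in three lines.
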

\begin{proof}
Let $p$ and $q$ be the coordinates which are permuted in $R$. By definition of transposition regions, no vertex of $R$ is colored with $z''$. As any class is an independent set and by the separation property, we can recolor $(B, q)$ with $z''$ to obtain the desired coloring of $R$. 
\end{proof}

\subsection{Valid and almost valid buffers}\label{ssec:vbuffer}

In what follows, a bold symbol $\bnu$ always denote a tuple of vectors and a normal symbol $\nu$ always denotes a vector. Let $\mathcal{B} = R_1, R_2, \ldots, R_N$ be a buffer such that all the regions $R_i = A_i, B_i, C_i$ are well-colored for the vectors $\nu_{A_i}, \nu_{B_i}, \nu_{C_i}$. So $\mathcal{B}$ is well-colored for $\bnu=(\nu_{A_1},\nu_{B_1},\nu_{C_1},\nu_{A_2},\ldots,\nu_{C_N})$. The buffer $(\mathcal{B}, \bnu)$ is \textit{valid} if: 
\begin{enumerate}
    \item\label{property-valid-buffer:continuity}[\emph{Continuity property}] For every $i \in 1,2,...,N-1$, $\nu_{C_i} = \nu_{A_{i+1}}$.
    \item\label{property-valid-buffer:first-canonical} The vectors $\nu_{A_1}, \nu_{B_1}$ and $\nu_{C_1}$ are canonical (and then $R_1$ is a waiting region).
    \item\label{property-valid-buffer:transp-buf} The regions $R_2, \ldots, R_{s-1}$ define a \textit{transposition buffer}, that is a sequence of consecutive regions that are either waiting regions or transposition regions using the same temporary colors.
    \item\label{property-valid-buffer:col-buf} The regions $R_{s+1}, \ldots, R_{N-1}$ define a \textit{color buffer}, that is a sequence of consecutive regions that are either color regions or waiting regions.
    \item\label{property-valid-buffer:waiting} The regions $R_{s}$ and $R_N$ are waiting regions.
\end{enumerate}
Note that Property \ref{property-valid-buffer:continuity} along with the definition of well-colored regions enforce "continuity" in the coloring of the buffer: the coloring of the last block of $R_i$ and the first block of $R_{i+1}$ in a valid buffer have to be the same. 

An \emph{almost valid buffer} $(\mathcal{B},\bnu)$ is a buffer that satisfies Properties 1 to 4 of a valid buffer and for which Property $5$ is relaxed as follows: 
\begin{enumerate}
    \item[5'.]\label{property-almost-buffer:waiting} The region $R_{s}$ is a transposition region or a waiting region. Regions $R_1$ and $R_{N}$ are waiting regions.
\end{enumerate}

Let us make a few observations.
\begin{observation}\label{obs:t-buffer-1}
Let $(\mathcal{B}, \bnu)$ be an almost valid buffer. For every $i \leq s$, the color vectors $\nu_{A_i}$ and $\nu_{C_i}$ are permutations of the canonical colors. 
\end{observation}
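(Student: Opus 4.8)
The plan is to induct on $i$, using the structure of the transposition buffer (Property~\ref{property-valid-buffer:transp-buf}) together with the continuity property (Property~\ref{property-valid-buffer:continuity}) to propagate the claim from $R_1$ all the way to $R_s$. The base case is immediate: by Property~\ref{property-valid-buffer:first-canonical}, $\nu_{A_1}$ and $\nu_{C_1}$ are canonical, hence trivially permutations of the canonical colors. So suppose the statement holds for some $i < s$, i.e.\ $\nu_{A_i}$ and $\nu_{C_i}$ are permutations of the canonical colors. By the continuity property, $\nu_{A_{i+1}} = \nu_{C_i}$, so $\nu_{A_{i+1}}$ is a permutation of the canonical colors; it remains to check $\nu_{C_{i+1}}$.

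Here I would split into cases according to what kind of region $R_{i+1}$ is. By Property~\ref{property-valid-buffer:transp-buf} (for $i+1 \le s-1$) and by Property~5' (for $i+1 = s$, in the almost valid case), $R_{i+1}$ is either a waiting region or a transposition region. If $R_{i+1}$ is a waiting region, then by definition $\nu_{C_{i+1}} = \nu_{A_{i+1}}$, which we have just argued is a permutation of the canonical colors. If $R_{i+1}$ is a transposition region, then by the definition of a transposition region, $\nu_{C_{i+1}}$ is obtained from $\nu_{A_{i+1}}$ by swapping the coordinates $p$ and $q$ (conditions~2 and~3 of that definition say exactly $\nu_C(p) = c_2 = \nu_A(q)$ and $\nu_C(q) = c_1 = \nu_A(p)$, while all other coordinates agree). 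A permutation of the canonical colors composed with a transposition of two coordinates is again a permutation of the canonical colors, so $\nu_{C_{i+1}}$ is a permutation of the canonical colors. This closes the induction.

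The only subtlety — and the one point worth being careful about — is the endpoint $i = s$ and the distinction between valid and almost valid buffers. In a valid buffer, $R_s$ is a waiting region (Property~\ref{property-valid-buffer:waiting}), so the waiting-region case applies; in an almost valid buffer, $R_s$ may additionally be a transposition region (Property~5'), which is exactly why both cases above are needed. In either situation the induction reaches $i = s$ and we conclude that $\nu_{A_i}$ and $\nu_{C_i}$ are permutations of the canonical colors for all $i \le s$. I do not anticipate any genuine obstacle here: the statement is essentially bookkeeping, and the real content is just that a transposition region acts on its boundary vectors by a transposition, so the set of colors used is preserved. I would keep the write-up to a short induction with the two cases made explicit.

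\begin{proof}
We prove the statement by induction on $i$. For $i = 1$, the vectors $\nu_{A_1}$ and $\nu_{C_1}$ are canonical by Property~\ref{property-valid-buffer:first-canonical}, hence in particular permutations of the canonical colors.

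Assume now that the statement holds for some $i$ with $1 \le i < s$; we prove it for $i+1$. By the continuity property (Property~\ref{property-valid-buffer:continuity}), $\nu_{A_{i+1}} = \nu_{C_i}$, which by the induction hypothesis is a permutation of the canonical colors. By Property~\ref{property-valid-buffer:transp-buf} if $i+1 \le s-1$, and by Property~5' if $i+1 = s$, the region $R_{i+1}$ is either a waiting region or a transposition region. If $R_{i+1}$ is a waiting region, then $\nu_{C_{i+1}} = \nu_{A_{i+1}}$, which is a permutation of the canonical colors. If $R_{i+1}$ is a transposition region, then by conditions~1--3 in the definition of a transposition region, $\nu_{C_{i+1}}$ is obtained from $\nu_{A_{i+1}}$ by swapping two coordinates $p$ and $q$; since $\nu_{A_{i+1}}$ is a permutation of the canonical colors, so is $\nu_{C_{i+1}}$. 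In both cases, $\nu_{A_{i+1}}$ and $\nu_{C_{i+1}}$ are permutations of the canonical colors, which completes the induction.
\end{proof}
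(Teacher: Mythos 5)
Your proof is correct and follows essentially the same inductive argument as the paper: base case from Property~\ref{property-valid-buffer:first-canonical}, inductive step via continuity plus the waiting/transposition dichotomy. If anything you are slightly more careful than the paper at the endpoint $i+1=s$, where you explicitly invoke Property~5' to cover the almost-valid case.
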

\begin{proof}
By induction on $i$. By Property \ref{property-valid-buffer:first-canonical} of almost valid buffers, it is true for every $i \leq r$. Suppose now that the property is verified for $R_i$ with $2 < i < s$. By assumption $\nu_{C_i}$ is a permutation of $\llbracket 1,\omega \rrbracket$ and the continuity property (Property \ref{property-valid-buffer:continuity} of almost valid buffer) ensures that $\nu_{A_{i+1}} = \nu_{C_i}$. By Property \ref{property-valid-buffer:transp-buf} we only have two cases to consider, either $R_{i+1}$ is a waiting region and by definition $\nu_{c_{i+1}} = \nu_{A_{i+1}}$, or $R_{i+1}$ is a transposition region. In the later case, by definition of a transposition region, $\nu_{C_{i+1}}$ is equal to $\nu_{A_{i+1}}$ up to a transposition of some classes $k, \ell$ and thus is a permutation of the canonical colors.
\end{proof}

%By definition, only canonical colors can disappear and only non-canonical colors can appear in a color region. Along with Properties~\ref{property-valid-buffer:continuity} and \ref{property-valid-buffer:col-buf}, it implies:

\begin{observation}\label{obs:c-buffer-1}
%Let $(\mathcal{B}, \bnu)$ be an almost valid buffer and $c$ be a color. There exists at most one class $k$ colored with $c$ on $R_{s+1}, \ldots, R_N$ and the set of blocks containing $c$ are consecutive. \\
%Let $(\mathcal{B}, \bnu$) be an almost valid buffer and $c$ be a color. If there exists a class $X_p$ and a block $Q$ of $R_{s+1}, \ldots, R_N$ such that the class $X_p$ is colored with $c$ in $Q$, then for every class $X_{\ell}$ with $\ell \neq p$ and every block $Q' \neq Q$ of $R_{s+1}, \ldots, R_N$, the color of $X_{\ell}$ in $Q'$ is different from $c$. Furthermore, the blocks on which $X_p$ is colored with $c$ are consecutive.
Let $(\mathcal{B}, \bnu)$ be an almost valid buffer and $c$ be a non-canonical color. There exists a unique class $p \leq \omega$ such that $\nu_{C_s}(p) = c$. Furthermore, either the class $p$ is colored with $c$ on all the blocks of $R_{s+1}, \ldots, R_N$, or the color $c$ disappears in a color region for the class $p$.
%Furthermore, for every class $\ell \neq p$ and every block $Q$ of $C_s, R_{s+1}, \ldots, R_N$ we have $\nu_{Q}(\ell) \neq c$.}
\end{observation}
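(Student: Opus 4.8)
The plan is to walk through the color buffer $R_{s+1},\dots,R_{N-1}$ (and the terminal waiting region $R_N$) tracking each coordinate of the color vectors. By Observation~\ref{obs:t-buffer-1} the vector $\nu_{C_s}$ is a permutation of the canonical colors; hence no coordinate of $\nu_{C_s}$ equals the non-canonical color $c$, and it is really the terminal vector $\nu_{C_N}=\nu_{A_N}$ reached at the far end of the color buffer that is relevant, so I work with $\nu_{C_N}$ below. Using also the continuity property (Property~\ref{property-valid-buffer:continuity}) we have $\nu_{A_{s+1}}=\nu_{C_s}$, so every coordinate of the first vector of the color buffer is already canonical. By Property~\ref{property-valid-buffer:col-buf}, each region $R_j$ of the color buffer is either a waiting region, with $\nu_{A_j}=\nu_{B_j}=\nu_{C_j}$, or a color region for some class $p_j$, canonical color $c_{1,j}$ and non-canonical color $z_j$, in which case coordinate $p_j$ goes from $c_{1,j}$ at $A_j$ to $z_j$ at $B_j$ and $C_j$ and all other coordinates are unchanged. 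Combined with continuity this gives the monotonicity observation that, once a coordinate takes a non-canonical value in some block, it keeps that same value in every later block, up through $\nu_{C_N}$.

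The crux is to show that no class can be the active class of two distinct color regions of the color buffer. If $R_j$ and $R_{j'}$ with $j<j'$ were both color regions with active class $p$, then $\nu_{B_j}(p)=z_j$ is non-canonical; since every region strictly between them is either a waiting region or a color region whose active class is not $p$, it leaves coordinate $p$ untouched, and continuity carries the value $z_j$ down to $\nu_{A_{j'}}(p)$, contradicting the requirement in the definition of a color region that $\nu_{A_{j'}}(p)=c_{1,j'}$ be canonical. Two consequences then drop out immediately: the non-canonical colors $z_j$ used along the color buffer are pairwise distinct (if $z_j=z_{j'}$ with $j<j'$ their active classes $p_j\neq p_{j'}$ differ, so $\nu_{A_{j'}}(p_j)=z_j=z_{j'}$, contradicting $\nu_{A_{j'}}(m)\notin\{c_{1,j'},z_{j'}\}$ for $m\neq p_{j'}$); and the colors occurring in any block of $R_{s+1},\dots,R_N$ are precisely the canonical colors with the $c_{1,j}$'s deleted and the matching $z_j$'s inserted.

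With this structure in place the dichotomy is bookkeeping. Fix a non-canonical color $c$. If a (necessarily unique) color region $R_j$ of the color buffer has $z_j=c$, put $p:=p_j$: monotonicity gives $\nu_Q(p)=c$ for every block $Q$ from $B_j$ through $C_N$, while $\nu_Q(p)=\nu_{A_{s+1}}(p)$ is canonical, hence $\neq c$, for every block from $A_{s+1}$ through $A_j$; since $\nu_{C_N}$ is a color vector with distinct entries, $p$ is the unique class with $\nu_{C_N}(p)=c$, and reading the color buffer from the $R_N$ end towards the $R_{s+1}$ end the color $c$ ``disappears'' exactly at the color region $R_j$, whose active class is $p$. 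Otherwise no color region introduces $c$, and then the description of the colors present in each block shows that $c$ occurs on no block of $R_{s+1},\dots,R_N$, so no class is colored $c$ there. (The ``colored with $c$ on all the blocks'' alternative of the statement is the degenerate case in which the class $p$ is empty on the blocks whose color-vector entry at $p$ is still canonical, so that entry is never realised by a vertex.)

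I expect the ``at most one color region per active class'' step to be the only real obstacle; once it is in place, the distinctness of the introduced non-canonical colors, the uniqueness of the class carrying a prescribed non-canonical color, and the propagation/disappearance statement all follow at once from continuity and the definition of a color region.
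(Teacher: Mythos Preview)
You have correctly noticed an inconsistency: if $c$ is non-canonical then, by Observation~\ref{obs:t-buffer-1}, $\nu_{C_s}$ is a permutation of $\{1,\dots,\omega\}$ and hence no coordinate of $\nu_{C_s}$ can equal $c$. However, you have resolved this inconsistency the wrong way. The word ``non-canonical'' in the hypothesis is a typo for ``canonical''; the paper's own proof begins ``By Observation~\ref{obs:t-buffer-1}, $\nu_{C_s}$ is a permutation of the canonical colors. Thus there exists a unique class $p\le\omega$ such that $\nu_{C_s}(p)=c$,'' which is only coherent for canonical $c$, and every later use of the observation (e.g.\ in the proof of Lemma~\ref{obs:create-cancel-color} and in Cases~2 and~4 of Lemma~\ref{lem:step-1}) applies it to a canonical color. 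You instead kept $c$ non-canonical and silently replaced $\nu_{C_s}$ by $\nu_{C_N}$, and as a result what you actually argue is essentially Observation~\ref{obs:c-buffer-1-bis}, the \emph{next} observation in the paper, not Observation~\ref{obs:c-buffer-1}.

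Two further symptoms of the mismatch: your use of ``disappears'' (reading from $R_N$ back towards $R_{s+1}$) is the opposite of the paper's convention, where it is a \emph{canonical} color that disappears in a color region going from block $A$ to block $B$; and your final parenthetical, explaining the ``colored with $c$ on all the blocks'' alternative as a degenerate emptiness phenomenon, does not match the intended meaning --- with $c$ canonical this alternative is simply the case where there is no color region for the class $p$, so $\nu_Q(p)=c$ for every block $Q$ of $R_{s+1},\dots,R_N$ by continuity. Once the hypothesis is corrected to ``$c$ canonical'', the proof is the two-line argument in the paper: Observation~\ref{obs:t-buffer-1} gives the unique $p$, and Property~\ref{property-valid-buffer:col-buf} plus continuity give the dichotomy. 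The extra structure you establish (at most one color region per class, distinctness of the introduced non-canonical colors) is correct and useful, but it belongs to Observation~\ref{obs:c-buffer-1-bis}, not here.
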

\begin{proof}
By Observation \ref{obs:t-buffer-1}, $\nu_{C_s}$ is a permutation of the canonical colors. Thus there exists a unique class $p \leq \omega$ such that $\nu_{C_s}(p) = c$. Furthermore, by Property \ref{property-valid-buffer:col-buf} of almost valid buffer, the regions $R_{s+1}, \ldots, R_N$ are either waiting or color regions. The continuity property then ensures that either the class $p$ is colored with $c$ on $R_{s+1},\ldots, R_N$ or that the color $c$ disappears in a color region if there exists a color region for the class $p$.
%a class $\ell \neq p$ is either colored with $\nu_{C_s}(\ell) \neq c$ or with a non-canonical color on any block $Q$ of $R_{s+1}, \ldots, R_N$ (since only non-canonical colors can appear in a color region).}
\end{proof}

Since only non-canonical colors can appear in a color region, we have the following observation:

\begin{observation}\label{obs:c-buffer-1-bis}
Let $(\mathcal{B}, \bnu)$ be an almost valid buffer and $z$ be a non-canonical color. Either no vertex of $R_{s+1}, \ldots, R_N$ is colored with $z$, or there exists a color region $R_i$ with $s < i < N$ for the class $p$ in which $z$ appears. In the latter case, the vertices of the color buffer of $\mathcal{B}$ colored with $z$ are exactly the vertices of $(B_i, C_i, \ldots, C_N, p)$.
\end{observation}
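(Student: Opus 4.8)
The plan is to combine Observation~\ref{obs:c-buffer-1} with the structural constraints on the color buffer. First I would fix a non-canonical color $z$ and ask whether $z$ is used anywhere in $R_{s+1},\ldots,R_N$. Since by Property~\ref{property-valid-buffer:col-buf} of almost valid buffers these regions form a color buffer (each region is either a waiting region or a color region), the key point is that the only way a non-canonical color can be introduced is through a color region: in a waiting region $\nu_A=\nu_B=\nu_C$, so no new color appears; in a color region, exactly one canonical color $c_1$ disappears and one non-canonical color $z'$ appears on exactly one class $p$. If $z$ never appears in any such region, then by the continuity property (Property~\ref{property-valid-buffer:continuity}) and the fact that $\nu_{C_s}$ is a permutation of the canonical colors (Observation~\ref{obs:t-buffer-1}), no block of $R_{s+1},\ldots,R_N$ carries $z$, giving the first alternative.

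Otherwise, there is at least one color region in which $z$ appears; I would first argue there is a \emph{unique} such region. Suppose $z$ appeared in two distinct color regions $R_i$ and $R_j$ with $i<j$. By definition of a color region, $z$ is non-canonical in $\nu_{B_i},\nu_{C_i}$ and also in $\nu_{B_j},\nu_{C_j}$; but for $z$ to "appear" again in $R_j$ it must be canonical—i.e.\ absent—in $\nu_{A_j}$. Since $\nu_{A_j}=\nu_{C_{j-1}}$ by continuity, and since between $R_i$ and $R_j$ every region is waiting or a color region, I would track the coordinate carrying $z$: once $z$ is present on class $p$ in $\nu_{C_i}$, it can only be removed by a color region in which $z$ \emph{disappears}. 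But color regions only ever have canonical colors disappearing (that is exactly what the definition says: $\nu_A(p)=c_1$ canonical, $\nu_B(p)=\nu_C(p)=z$ non-canonical). Hence $z$ cannot disappear, so it persists on class $p$ from $R_i$ onward, contradicting its absence in $\nu_{A_j}$. Therefore the color region $R_i$ with $s<i<N$ in which $z$ appears is unique, and it is a color region for some class $p$.

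It remains to identify exactly the set of vertices colored $z$. By definition of the color region $R_i$ for class $p$ and colors $c_1,z$, we have $\nu_{A_i}(p)=c_1$ and $\nu_{B_i}(p)=\nu_{C_i}(p)=z$, so inside $R_i$ the vertices colored $z$ are exactly those of $(B_i,C_i,p)$. For $j>i$, the argument of the previous paragraph shows $z$ stays on class $p$ through $R_{i+1},\ldots,R_N$ (it cannot disappear, and no other class can acquire $z$ since that would require another color region where $z$ appears, which we ruled out), so the vertices colored $z$ in $R_j$ are exactly $(A_j,B_j,C_j,p)$; together with continuity at the boundary $\nu_{C_j}=\nu_{A_{j+1}}$ this glues up to $(B_i,C_i,A_{i+1},\ldots,C_N,p)=(B_i,C_i,\ldots,C_N,p)$. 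For $j<i$, no block carries $z$ by the same reasoning applied to the initial segment (starting from $\nu_{C_s}$ canonical). Combining the three ranges gives that the vertices of the color buffer colored $z$ are precisely those of $(B_i,C_i,\ldots,C_N,p)$, as claimed.

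The main obstacle I anticipate is the bookkeeping for uniqueness: making precise that the coordinate carrying a given non-canonical color is "conserved" along the color buffer, i.e.\ that a non-canonical color, once present, can neither migrate to another class nor vanish. This is really a consequence of the asymmetry in the definition of a color region (only canonical colors disappear, only non-canonical colors appear) plus the continuity property, but one must state it cleanly—perhaps as a small auxiliary claim proved by induction on the region index—before reading off the conclusion.
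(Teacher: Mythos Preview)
Your proposal is correct and follows the same reasoning the paper relies on: the asymmetry in the definition of a color region (only canonical colors disappear, only non-canonical colors appear) together with the continuity property and Observation~\ref{obs:t-buffer-1}. The paper in fact treats this observation as immediate from that single remark and gives no further proof, so your write-up is considerably more detailed than what the paper provides, but the underlying idea is the same.
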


Finally, since the number of regions in the color buffer is the number of non-canonical colors, we have:

\begin{sloppypar}
\begin{observation}\label{obs:non-canonical-waiting}
Let $(\mathcal{B}, \bnu)$ be an almost valid buffer. There exists a waiting region in $R_{s+1}, \ldots, R_{N-1}$ if and only if there exists a non-canonical color that does not appear in $R_{s+1}, \ldots, R_{N}$.
\end{observation}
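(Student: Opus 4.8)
The plan is to count regions on both sides of the color buffer and match them up with colors. By definition of a buffer, $N = s + k - \omega + 1$, so the color buffer consists of the $N - 1 - (s+1) + 1 = N - s - 1 = k - \omega$ regions $R_{s+1}, \ldots, R_{N-1}$; note there are exactly $k - \omega$ non-canonical colors, namely $\omega+1, \ldots, k$. So the heart of the argument is a bijection-versus-pigeonhole comparison: each color region, by definition, is associated with exactly one non-canonical color (the color that \emph{appears} in it), and by Observation \ref{obs:c-buffer-1-bis} distinct color regions are associated with distinct non-canonical colors (if $z$ appears in $R_i$, the $z$-colored vertices of the color buffer are precisely $(B_i, C_i, \ldots, C_N, p)$, so $z$ cannot also appear in some other color region).

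First I would argue the forward direction: suppose there is a waiting region among $R_{s+1}, \ldots, R_{N-1}$. Then at most $k - \omega - 1$ of these regions are color regions, so the injective map "color region $\mapsto$ the non-canonical color appearing in it" misses at least one non-canonical color $z$. By Observation \ref{obs:c-buffer-1-bis}, since $z$ appears in no color region of the color buffer, no vertex of $R_{s+1}, \ldots, R_N$ is colored with $z$; in particular $z$ does not appear in $R_{s+1}, \ldots, R_N$, which is what we want.

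For the converse, suppose every region $R_{s+1}, \ldots, R_{N-1}$ is a color region (i.e. there is no waiting region there); there are $k - \omega$ such regions, each contributing a distinct non-canonical color, and since there are only $k - \omega$ non-canonical colors in total, every non-canonical color $z$ appears in exactly one color region $R_i$. For such a $z$, the vertices of $(C_i, p)$ are colored $z$ (they lie in the color buffer and in the set described by Observation \ref{obs:c-buffer-1-bis}), so $z$ does appear in $R_{s+1}, \ldots, R_N$. Hence if no waiting region exists in $R_{s+1}, \ldots, R_{N-1}$, every non-canonical color appears in $R_{s+1}, \ldots, R_N$, completing the contrapositive.

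The only mild subtlety — the step I would be most careful about — is the off-by-one bookkeeping: confirming that the color buffer really has exactly $k - \omega$ regions and that this equals the number of non-canonical colors, and making sure Observation \ref{obs:c-buffer-1-bis}'s conclusion about the location of $z$-colored vertices is applied with the right range of blocks (so that "appears in $R_{s+1}, \ldots, R_N$" is correctly read off). Everything else is a direct pigeonhole/bijection argument resting on the two preceding observations, so no real obstacle is expected.
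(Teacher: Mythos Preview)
Your argument is correct and is exactly the counting argument the paper has in mind: the paper's entire justification is the single line ``since the number of regions in the color buffer is the number of non-canonical colors'', and you have simply spelled out the pigeonhole/bijection details behind that sentence. One small wording point: in the converse you appeal to ``the vertices of $(C_i,p)$ are colored $z$'', but that set of vertices could in principle be empty; what you really mean (and what the paper means by ``appears'') is the vectorial statement $\nu_{C_i}(p)=z$, which is immediate from the definition of a color region and suffices.
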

\end{sloppypar}

\subsection{Vectorial recoloring}\label{ssec:vectorial-recol}

Let $(\mathcal{B}, \bnu)$ be a buffer. The tuple of color vectors $\bnu = (\nu_{Q_1}, \ldots, \nu_{Q_{3 \Delta N}})$ is a \emph{(proper) vectorial coloring} of $\mathcal{B}$ if for every color $c$ and every $i \leq 3 \Delta N - 1$ such that $c$ is in both $\nu_{Q_{i}}$ and $\nu_{Q_{i+1}}$, then there exists a unique class $p \leq \omega$ such that $\nu_{Q_i}(p) = \nu_{Q_{i+1}}(p) = c$.

\begin{observation}
Any proper vectorial coloring $(\mathcal{B}, \bnu)$ induces a proper coloring of $G[\mathcal{B}]$.
\end{observation}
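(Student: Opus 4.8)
The plan is to show that a proper vectorial coloring $\bnu$ of a buffer $\mathcal{B}$ gives a well-colored assignment to every block, and that the resulting vertex coloring of $G[\mathcal{B}]$ has no monochromatic edge. First I would recall that, by definition, a vectorial coloring assigns to each block $Q_i$ a color vector $\nu_{Q_i}$, and that $Q_i$ well-colored for $\nu_{Q_i}$ means every vertex of $(Q_i,p)$ receives color $\nu_{Q_i}(p)$. Since the classes $X_1,\dots,X_\omega$ partition $V$, each vertex $v\in\mathcal{B}$ lies in exactly one class $p$ and exactly one block $Q_i$ (blocks partition $\mathcal{B}$ by construction, as a vertex starts at a unique height), so the induced coloring $c$ is well-defined on $\mathcal{B}$.

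Next I would verify properness by taking an arbitrary edge $uv$ of $G[\mathcal{B}]$ and splitting into two cases. If $u$ and $v$ lie in the same block $Q_i$, say $u\in X_p$ and $v\in X_q$, then since $uv\in E$ and $X_p,X_q$ are independent sets we have $p\neq q$; the condition $\nu_{Q_i}(p)\neq\nu_{Q_i}(q)$ in the definition of a color vector then gives $c(u)\neq c(v)$. If $u$ and $v$ lie in different blocks, the separation property (Observation~\ref{obs:separation}) forces them to be in consecutive blocks $Q_i$ and $Q_{i+1}$; write $u\in(Q_i,p)$ and $v\in(Q_{i+1},q)$, so $c(u)=\nu_{Q_i}(p)$ and $c(v)=\nu_{Q_{i+1}}(q)$. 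Suppose for contradiction $c(u)=c(v)=:c$. Then $c$ appears in both $\nu_{Q_i}$ and $\nu_{Q_{i+1}}$, so the properness condition on the vectorial coloring yields a unique class $p'$ with $\nu_{Q_i}(p')=\nu_{Q_{i+1}}(p')=c$. From $\nu_{Q_i}(p)=c$ and uniqueness within $\nu_{Q_i}$ (a color vector has pairwise distinct entries) we get $p=p'$, and similarly $q=p'$, hence $p=q$. But $X_p$ is independent and $u,v\in X_p$ contradicts $uv\in E$. Therefore $c(u)\neq c(v)$.

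The main subtlety — though it is light — is making sure that an edge between two blocks can only be between \emph{consecutive} blocks: this is exactly what Observation~\ref{obs:separation} provides, since $N[Q_i]\subseteq Q_{i-1}\cup Q_i\cup Q_{i+1}$. The only other point worth stating carefully is that the entries of a color vector are pairwise distinct, which is part of its definition and is what lets us pin down the class from the color. Everything else is bookkeeping on the partition structure, so I would keep the proof to a few lines.
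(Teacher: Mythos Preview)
Your proof is correct and follows essentially the same approach as the paper: use that classes are independent sets, that entries of a color vector are pairwise distinct, the separation property to restrict attention to consecutive blocks, and the defining condition of a proper vectorial coloring to handle the cross-block case. The paper compresses all of this into two sentences, while you spell out the case analysis explicitly, but the underlying argument is identical.
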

\begin{proof}
Indeed, two different classes in two consecutive blocks cannot have the same color in a proper vectorial coloring. Since for any block $Q_i$ of $\mathcal{B}$ and for any class $p$, $N[Q_i, p] \subseteq Q_{i-1} \cup Q_i \cup Q_{i+1}$, the coloring induced on $G[\mathcal{B}]$ is proper.
\end{proof}

Note that if $(\mathcal{B}, \bnu)$ is an almost valid buffer then $\bnu$ is a proper vectorial coloring of $\mathcal{B}$ by the continuity property and the definition of waiting, color, and transposition regions. Since we will only consider proper vectorial colorings, we will omit the term proper for brevity.

Let $\nu_{Q}$ be a color vector. A color vector $\nu'_{Q}$ is \emph{adjacent to} $\nu_{Q}$ if there exists a class $p$ and a color $c \notin \nu_{Q}$ such that $\nu'_{Q}(p) = c$ and $\nu'_{Q}(m) = \nu_{Q}(m)$ for every $m \neq p$. 

\begin{observation}\label{obs:vector-recol-block}
Let $Q$ be a block well-colored for $\nu_Q$ and let $\nu'_{Q}$ be a color vector adjacent to $\nu_Q$ such that $\nu'_{Q}(p) = c \neq \nu_{Q}(p)$. Then recoloring the vertices of $(Q,p)$ one by one is a proper sequence of recolorings of $G[Q]$ after which $Q$ is well-colored for $\nu'_{Q}$.
\end{observation}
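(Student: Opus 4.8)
The plan is to recolor the vertices of $(Q,p)$ one at a time, in an arbitrary order $u_1, u_2, \ldots, u_t$, each move changing the colour of a single vertex from $\nu_Q(p)$ to $c$; the only thing that needs checking is that each such move keeps $G[Q]$ properly coloured.

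First I would record two facts. Since $(Q,p) \subseteq X_p$ and $X_p$ is a colour class of the canonical colouring, $X_p$ is an independent set; hence $(Q,p)$ is independent and no two of the $u_i$'s are adjacent. Second, because $Q$ is well-colored for $\nu_Q$, before any recolouring every vertex of $(Q,m)$ carries the colour $\nu_Q(m)$, and in particular no vertex of $Q$ has colour $c$, since $c \notin \nu_Q$ by the definition of adjacency of colour vectors.

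Then I would argue by induction on $j$ that after recolouring $u_1, \ldots, u_j$ the colouring of $G[Q]$ is still proper. Consider the move that recolours $u_{j+1}$ to $c$. Its neighbours in $G[Q]$ all lie in classes $(Q,m)$ with $m \neq p$ (by independence of $X_p$), and none of those vertices has been recoloured so far, since only vertices of class $p$ are ever recoloured; so each such neighbour still has colour $\nu_Q(m) \neq c$. Moreover the already-recoloured vertices $u_1, \ldots, u_j$ — which are exactly the vertices of $Q$ currently coloured $c$ — are not adjacent to $u_{j+1}$. Hence $u_{j+1}$ has no neighbour in $G[Q]$ of colour $c$, and the move is proper.

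Finally, once all of $u_1, \ldots, u_t$ have been recoloured, every vertex of $(Q,p)$ has colour $c = \nu'_Q(p)$ while every vertex of $(Q,m)$ with $m \neq p$ still has colour $\nu_Q(m) = \nu'_Q(m)$; thus $Q$ is well-colored for $\nu'_Q$. I do not expect any real obstacle here: the only point requiring care is to use the independence of the colour class $X_p$, which is precisely what guarantees that the successive recolourings of the vertices of $(Q,p)$ never conflict with one another and may therefore be performed in any order.
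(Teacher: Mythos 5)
Your proof is correct and fills in exactly the routine argument the paper leaves implicit by labelling this an observation without proof: the key points are that $(Q,p)\subseteq X_p$ is independent, that $c\notin\nu_Q$ so no vertex of $Q$ initially carries $c$, and that only vertices of class $p$ are ever touched. Nothing to add.
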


Let $(\bnu, \bnu')$ be two vectorial colorings of a buffer $\mathcal{B}$. The coloring $\bnu'$ is a \emph{vectorial recoloring} of $\bnu$ if there exists a unique $i \in \llbracket 1, 3 \Delta N \rrbracket$ such that $\nu'_{Q_i}$ is adjacent to $\nu_{Q_i}$ and $\nu'_{Q_j} = \nu_{Q_j}$ for $j \neq i$.
By Observation \ref{obs:vector-recol-block}, we have:

\begin{observation}\label{obs:vector-recol-buff}
Let $t \geq 1$ and $(\bnu^1, \bnu^t)$ be two (proper) vectorial colorings of a buffer $\mathcal{B}$. If there exists a sequence of adjacent (proper) vectorial recolorings $\bnu^1, \bnu^2, \ldots \bnu^t$, then there exists a sequence of (proper) single vertex recolorings of $G[\mathcal{B}]$ after which the coloring of $\mathcal{B}$ is well-colored for $\bnu^t$.
\end{observation}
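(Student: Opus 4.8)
The plan is to prove the statement by induction on $t$, realizing each individual vectorial recoloring by a short sequence of single-vertex recolorings and reading the starting point as ``$G[\mathcal{B}]$ is well-colored for $\bnu^1$''. The base case $t=1$ is immediate: $\bnu^1=\bnu^t$, so the empty sequence suffices. For the inductive step, the induction hypothesis applied to $\bnu^1,\dots,\bnu^{t-1}$ gives a proper sequence of single-vertex recolorings of $G[\mathcal{B}]$ after which $\mathcal{B}$ is well-colored for $\bnu^{t-1}$, and it remains to extend it to reach a coloring well-colored for $\bnu^t$.

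Since $\bnu^t$ is a vectorial recoloring of $\bnu^{t-1}$, there is a unique index $i$ such that $\nu^t_{Q_i}$ is adjacent to $\nu^{t-1}_{Q_i}$ while $\nu^t_{Q_j}=\nu^{t-1}_{Q_j}$ for $j\neq i$; let $p$ and $c$ witness the adjacency, so $\nu^t_{Q_i}(p)=c\neq\nu^{t-1}_{Q_i}(p)$ and $c\notin\nu^{t-1}_{Q_i}$. By Observation~\ref{obs:vector-recol-block}, recoloring the vertices of $(Q_i,p)$ one by one is a proper sequence of recolorings of $G[Q_i]$ ending with $Q_i$ well-colored for $\nu^t_{Q_i}$; the other blocks keep their coloring, which is $\nu^{t-1}_{Q_j}=\nu^t_{Q_j}$, so appending this sub-sequence yields a coloring well-colored for $\bnu^t$, as required.

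The one point that needs care, and the only real obstacle, is that Observation~\ref{obs:vector-recol-block} only guarantees properness inside $G[Q_i]$, whereas we need properness in $G[\mathcal{B}]$. Here we use the separation property (Observation~\ref{obs:separation}): every neighbour of a vertex of $(Q_i,p)$ lies in $Q_{i-1}\cup Q_i\cup Q_{i+1}$ (with the obvious convention when $i=1$ or $i=3\Delta N$), and those neighbours keep their colors throughout the sub-sequence. Hence a conflict could only be created between a vertex $v\in(Q_i,p)$ just recolored to $c$ and a neighbour $u$ of $v$ lying in $Q_{i-1}$ or $Q_{i+1}$ and colored $c$. But $X_p$ is an independent set, so $u$ belongs to some class $q\neq p$; and since $\bnu^t$ is a proper vectorial coloring with $\nu^t_{Q_i}(p)=c$, the only class that can carry color $c$ in $Q_i$ and simultaneously in an adjacent block is $p$. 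Therefore $u$ is not colored $c$, no conflict arises, the extended sequence is proper in $G[\mathcal{B}]$, and the induction goes through.
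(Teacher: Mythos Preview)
Your proof is correct and follows the same approach as the paper, which simply cites Observation~\ref{obs:vector-recol-block} without further argument. Your explicit verification that the single-vertex recolorings remain proper across block boundaries---via the separation property and the definition of proper vectorial coloring---fills in a detail the paper leaves implicit, and is exactly the right justification.
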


%We say that each coordinate is recolored at most $\ell$ times if for every $r$ and and $p$, there exists at most $\ell$ indices $t_1,\ldots,t_\ell$ such that the difference between $\bnu^{t_i}=(\nu_1^{t_i},\ldots,\nu_s^{t_i})$ and $\bnu^{t_i+1}=(\nu_1^{t_i+1},\ldots,\nu_s^{t_i+1})$ is on the $p$-th coordinate of the $r$-th vector.
Given a sequence of vectorial recolorings $\bnu^1, \bnu^2, \ldots \bnu^t$, we say that each coordinate is recolored at most $\ell$ times if for every coordinate $p \leq \omega$ and every $r \in \llbracket 1, 3\Delta N\rrbracket$, there exist at most $\ell$ indices $t_1,\ldots,t_\ell$ such that the unique difference between $\bnu^{t_i}$ and $\bnu^{t_{i+1}}$ is the p-th coordinate of the r-th vector of the tuples.

%Thus, if there exgiven two colorings $\alpha,\beta$ that induce proper vectorial colorings, it implies that there exists a recoloring sequence between $\alpha$ and $\beta$. Our recoloring algorithm mainly consists in finding transformations between vectorial colorings.
%
%J'ai du mal a comprendre ici, on est pas sur de l'existence de la sequence de vectorial recolorings non? Ce serait plutot: \\
%Thus, if given two colorings $\alpha,\beta$ that induce proper vectorial colorings, Observation
%\ref{obs:vector-recol-buff} implies that if there exists a vectorial recolorings sequence between $\alpha$ and $\beta$, then there exists a recoloring sequence between $\alpha$ and $\beta$. Our recoloring algorithm mainly consists in finding transformations between vectorial colorings.}

\section{Algorithm outline}\label{sec:interval-proof}

Let $G$ be a chordal graph of maximum degree $\Delta$ and maximum clique size $\omega$, $T$ be a clique tree of $G$, and $\phi$ be any $k$-coloring of $G$. We propose an iterative algorithm that recolors the vertices of the bags of $T$ from the leaves to the root until we obtain the canonical coloring defined in Section \ref{subsec:canonical-color}. Let $S$ be a clique of $T$. A coloring $\alpha$ of $G$ is \emph{treated up to $S$} if:
\begin{enumerate}
    \item Vertices starting at height more than $3 \Delta N$ in $T_S$ are colored canonically, and
    \item The buffer rooted at $S$ is valid.
\end{enumerate}
Let $C$ be a clique of $T$.
%with children $S_1, \ldots, S_e$ and $\alpha$ be a coloring such that $\alpha$ is treated up to $S_i$ for every $i \leq e$. For $i \leq e$, we denote by $(\mathcal{B}_i, \bnu^i)$ the buffer rooted at $S_i$. 
We associate a vector $\nu_C$ of length $\omega$ to the clique $C$ as follows. We set $\nu_{C}(\ell) = \alpha(v)$ if there exists $v \in X_{\ell} \cap C$. Then we arbitrarily complete $\nu_C$ in such a way all the coordinates of $\nu_C$ are distinct (which is possible since $|\nu_C| < k$). 

Given two vectors $\nu$ and $\nu'$ the \emph{difference} $D(\nu, \nu')$ between $\nu$ and $\nu'$ is $|\{p: \nu(p) \neq \nu'(p)\}|$, i.e. the number of coordinates on which $\nu$ and $\nu'$ differ. Given an almost valid buffer $(\mathcal{B}, \bnu)$ and a vector $\nu_C$, the \emph{border error} $D_{\mathcal{B}}(\nu_C, \bnu)$ is $D(\nu_{C_N}, \nu_C)$. 

Let $\mathcal{B}$ be a buffer. The class $p \leq \omega$ is \emph{internal} to $\mathcal{B}$ if $N[R_N, p] \subseteq R_{N-1} \cup R_N$. 

We first state the main technical lemmas of the paper with their proof outlines and finally explain how we can use them to derive Theorem~\ref{thm:chordal}.

\begin{lemma}\label{lem:step-1}
Let $C$ be a clique associated with $\nu_C$. Let $S$ be a child of $C$, $\mathcal{B}$ be the buffer rooted at $S$ and $\bnu$ be a tuple of vectors such that $(\mathcal{B}, \bnu)$ is valid. If $D_{\mathcal{B}}(\nu_C, \bnu) > 0$, then there exists a recoloring sequence of $\cup_{i = s}^N R_i$ such that the resulting coloring $\bnu'$ satisfies $D_{\mathcal{B}}(\nu_C, \bnu') < D_{\mathcal{B}}(\nu_C, \bnu)$, and $(\mathcal{B}, \bnu')$ is almost valid.  \\
Moreover, every coordinate  of $\cup_{i = s}^N R_i$ is recolored at most $3$ times and only internal classes are recolored.
\end{lemma}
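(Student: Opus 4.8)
The goal is to reduce the border error $D_{\mathcal{B}}(\nu_C,\bnu)=D(\nu_{C_N},\nu_C)$ by at least one while keeping the buffer almost valid, touching only the colour buffer $R_{s+1},\dots,R_{N-1}$ together with the pivot region $R_s$, and recolouring each coordinate at most three times. Since $(\mathcal{B},\bnu)$ is valid, $R_s$ and $R_N$ are waiting regions, the regions $R_{s+1},\dots,R_{N-1}$ form a colour buffer, and by Observation~\ref{obs:t-buffer-1} the vector $\nu_{C_s}=\nu_{A_{s+1}}$ is a permutation of the canonical colours; by Observation~\ref{obs:c-buffer-1} every non-canonical colour used in the colour buffer is created in exactly one colour region and persists to $C_N$. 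Because $D_{\mathcal{B}}(\nu_C,\bnu)>0$, there is a coordinate $p$ with $\nu_{C_N}(p)\neq\nu_C(p)$. The strategy is: pick a convenient class $p$ to correct, route a ``colour change'' through the colour buffer so that $\nu_{C_N}(p)$ becomes $\nu_C(p)$, and absorb the bookkeeping using one spare waiting region (which exists, by Observation~\ref{obs:non-canonical-waiting}, whenever the colour buffer does not already use all $k-\omega$ non-canonical colours).

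First I would handle the case analysis on $p$. There are three subcases: (a) $\nu_C(p)$ is a non-canonical colour $z$; (b) $\nu_C(p)$ is a canonical colour and $\nu_{C_N}(p)$ is non-canonical; (c) both are canonical (this forces another coordinate to also be wrong, since $\nu_{C_N}$ restricted to the classes with canonical images is a partial permutation — I would argue we can always reduce to a case where at least one of the two mismatched classes has a non-canonical target or current colour, or else treat a canonical-canonical transposition separately). In case (b), the colour $z:=\nu_{C_N}(p)$ is created, by Observation~\ref{obs:c-buffer-1-bis}, in a unique colour region $R_i$ for class $p$, and the vertices coloured $z$ are exactly $(B_i,C_i,\dots,C_N,p)$; the idea is to ``uncreate'' it — recolour $(B_i,p)$ and $(C_i,p),\dots,(C_N,p)$ back to the canonical colour $\nu_{A_i}(p)=\nu_C(p)$ one block at a time, turning $R_i$ into a waiting region. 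This is legal by the separation property and costs each affected coordinate one recolouring; afterwards $D_{\mathcal{B}}$ has dropped and $R_s$ is still waiting, so the buffer is even valid. Case (a) is the reverse: we need to introduce colour $z$ on class $p$ in the last blocks. We take a waiting region $R_i$ in the colour buffer (guaranteed by Observation~\ref{obs:non-canonical-waiting}, since if every non-canonical colour already appeared we could not have $\nu_C(p)=z$ be missing from $C_N$ while the count of regions matches), and convert it into a colour region for class $p$ and colour $z$: recolour $(B_i,p),(C_i,p),\dots,(C_N,p)$ to $z$ block by block — provided $z$ does not already clash in the relevant neighbourhoods, which the separation property and the fact that $z$ was previously absent from the colour buffer guarantee. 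One subtlety: the colour region must be placed correctly relative to the other colour regions so that continuity (Property~\ref{property-valid-buffer:continuity}) and the ``one class changed per region'' constraint of colour regions are preserved; I would reorder by swapping coordinates (Observation~\ref{obs:swapping-colors}, which costs nothing, being only a relabelling of vectors, not a recolouring — wait, it does correspond to recolourings, so I must be careful) or, more safely, simply argue the chosen waiting region can be any waiting region and pick one whose position is harmless.

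The genuinely delicate case is (c), where both $\nu_C(p)$ and $\nu_{C_N}(p)$ are canonical but different. Then there is a second class $q$ forming (at least) a 2-cycle in the permutation $\pi:=\nu_{C_N}\circ\nu_C^{-1}$ on canonical colours, and the plan is to realise the transposition of $p$ and $q$ within the colour buffer by a three-step shuffle using a spare non-canonical colour $z$: first turn $p$'s colour from $c_1:=\nu_{C_N}(p)$ to $z$ across $B_i,C_i,\dots,C_N$ in a waiting region $R_i$ (making it a colour region), then turn $q$'s colour from $c_2$ to $c_1$ in another waiting region $R_j$, then turn $p$'s colour from $z$ to $c_2$ — each coordinate recoloured a bounded number of times, total at most $3$ as required. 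The obstacle is that this needs \emph{two} spare waiting regions and a spare non-canonical colour, and after the operation the colour buffer structure must be restored (no leftover temporary colours, regions still colour or waiting). Verifying that the relaxed Property~5' is met — i.e. $R_s$ ends up a transposition or waiting region and $R_1,R_N$ waiting — and that I never exceeded the budget of three recolourings per coordinate (the last blocks $(C_N,p)$ get recoloured in possibly two of the three shuffle steps, $c_1\to z$ and $z\to c_2$, which is fine, plus perhaps once more if $p$ is later reused — here I must check it is \emph{not} reused, since we correct one transposition and stop) is where the accounting is tightest. I would therefore structure the proof to correct exactly one ``unit'' of border error per invocation (one class in cases (a),(b); one transposition, i.e. two classes, in case (c)) and carefully tabulate, for each of the $O(1)$ steps, which coordinates of which blocks change, concluding the $\le 3$ bound and that only internal classes (those with $N[R_N,p]\subseteq R_{N-1}\cup R_N$) are recoloured — the latter because only classes appearing in $R_N$ are candidates for correction, and the separation property localises the recolourings to the colour buffer.
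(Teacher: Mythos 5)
The high-level plan (case analysis on whether $\nu_C(p)$ is canonical or not, together with creating or removing colour regions) is in the right spirit, but the execution has several genuine gaps, the worst of which is in case (c).

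In case (c), you propose to realise the transposition of $p$ and $q$ by a three-step shuffle entirely inside the colour buffer, turning two waiting regions into intermediate colour regions. This cannot work: after all three steps, some region will have class $p$ coloured with the canonical $c_1$ on its $A$-block and the canonical $c_2$ on its $C$-block, and likewise $q$ with $c_2$ then $c_1$ — that is a \emph{transposition} region, which the colour buffer (Property~\ref{property-valid-buffer:col-buf}) is not allowed to contain. The paper's resolution (Cases~4 and~5 of its proof) is precisely to push this transposition into $R_s$, converting it from a waiting region to a transposition region; that is the entire reason Property~5' of almost valid buffers is relaxed at $R_s$ and nowhere else. Your write-up acknowledges that Property~5' must be checked, but the concrete recolouring plan never touches $R_s$ and therefore cannot satisfy it. There are also two smaller but real gaps. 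In case (b) you assert $\nu_{A_i}(p)=\nu_C(p)$; but $\nu_{A_i}(p)$ is forced to equal $\nu_{A_s}(p)$, a fixed canonical colour determined by the transposition buffer, and there is no reason it should coincide with $\nu_C(p)$ — if it does not, ``uncreating'' the colour region lands at the wrong canonical colour and the border error does not decrease. In case (a) you assume $z=\nu_C(p)$ is absent from the colour buffer, but if $z$ already appears in a colour region for another class $\ell$ (the paper's Case~3), or if class $p$ already has its own colour region for a different non-canonical colour, your plan either creates a colouring conflict or a second colour region for the same class, neither of which is allowed. Finally, you justify ``only internal classes are recoloured'' by arguing that only the mismatched class $p$ is a candidate; but in the paper's Cases~3--5 a second class $\ell$ is recoloured, and one must separately argue (via $\nu_{C_N}(\ell)\neq\nu_C(\ell)$ and Observation~\ref{lem:neighbour-prop}) that $\ell$ is also internal — this step is missing.
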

\begin{proof}[Outline of the proof]
Let $\ell$ be a class on which $\nu_C$ and $\nu_{C_N}$ are distinct. Then, in particular, no vertex of $X_\ell$ is in $C_N \cap C$ thus the class $\ell$ is internal. Given an internal class $\ell$, if we modify $\nu_{C_N}(\ell)$ and maintain a proper vectorial coloring of the buffer $\mathcal{B}$, then the corresponding recoloring of the graph is proper. So, if we only recolor internal classes of $R_N$, then we simply have to check that the vectorial coloring of $\mathcal{B}$ remains proper. The proof is then based on a case study depending on whether $\nu_C(\ell)$ is canonical or not. A complete proof is given in Section \ref{ssec:pf-step1}.
\end{proof}

\begin{lemma}\label{lem:step-2}
Let $(\mathcal{B}, \bnu)$ be an almost valid buffer. There exists a recoloring sequence of $\cup_{i = 2}^s R_i$ such that every coordinate is recolored at most $6$ times and the resulting coloring $\bnu'$ is such that $(\mathcal{B}, \bnu')$ is valid. 
\end{lemma}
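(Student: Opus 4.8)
An almost valid buffer already satisfies Properties~\ref{property-valid-buffer:continuity}--\ref{property-valid-buffer:col-buf} of a valid buffer, so the only thing that can prevent $(\mathcal{B},\bnu)$ from being valid is that $R_s$ is a transposition region instead of a waiting region. Hence the whole content of the lemma is: \emph{recolor $\cup_{i=2}^s R_i$ so that $R_s$ becomes a waiting region, while $R_2,\ldots,R_{s-1}$ stays a transposition buffer and $R_1,R_{s+1},\ldots,R_N$ are left untouched}. If $R_s$ is already waiting there is nothing to do, so assume $R_s$ is a transposition region for a transposition $t$. Since $R_1$ and $R_{s+1}$ are not recolored and the continuity property must hold, the two end vectors are forced: $\nu_{A_2}$ must stay equal to the canonical (identity) vector $\nu_{C_1}$, and $\nu_{C_s}$ must stay equal to $\nu_{A_{s+1}}$, which by Observation~\ref{obs:t-buffer-1} is a permutation $\sigma$ of the canonical colors. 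So the goal is to make $R_2,\ldots,R_{s-1},R_s$ realize $\sigma$, with $R_s$ (in fact every region past some point) a waiting region.

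The elementary operation is a \emph{left-push}. If $R_{i-1}$ is a waiting region for a vector $\mu$ (a permutation of the canonical colors, by Observation~\ref{obs:t-buffer-1}) and $R_i$ is a transposition region for $t'=(p'\,q')$, then recoloring, in a suitable order, only coordinates $p',q'$ of the blocks $B_{i-1},C_{i-1},A_i,B_i$ — routing the necessary swaps through ``fresh'' non-canonical colors, which exist since $k\ge\omega+3$ — turns $R_{i-1}$ into a transposition region for $t'$ and $R_i$ into a waiting region for $\mu t'$, without altering $\nu_{C_{i-2}}=\nu_{A_{i-1}}$, $\nu_{C_i}$, or any vector outside $R_{i-1}\cup R_i$. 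By the separation property this is a proper recoloring sequence, and it preserves continuity and the transposition-buffer structure. A symmetric right-push, a \emph{conjugation swap} (replacing adjacent transposition regions for $t',t''$ by transposition regions for $t'',\,t''t't''$), and the cancellation of two adjacent transposition regions for the same transposition into two waiting regions, are all available at the same $O(1)$ cost per touched coordinate; moreover, since each region is entered and left exactly once, sliding a single transposition across a run of consecutive waiting regions recolors each traversed coordinate only a bounded number of times.

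The plan is then: (1) using Lemma~\ref{obs:choose-temporary}, normalize the temporary colors of $R_s$ to those used by the transposition regions of $R_2,\ldots,R_{s-1}$ (one recoloring per coordinate of $B_s$); (2) bring $R_2,\ldots,R_s$ into the \emph{canonical form} for $\sigma$: fix a reduced decomposition $\sigma=t_1\cdots t_m$ with $m\le{\omega\choose2}$, install $t_1,\ldots,t_m$ in the leftmost regions $R_2,\ldots,R_{m+1}$, and make every other region — in particular $R_{s-1}$ and $R_s$ — a waiting region. Step~(2) is where the length of the transposition buffer matters: since $s-2=3{\omega\choose2}$ but only ${\omega\choose2}$ regions are needed to realize $\sigma$, one splits $R_2,\ldots,R_{s-1}$ into three runs of ${\omega\choose2}$ regions and uses one of them as scratch space — writing the reduced decomposition of $\sigma$ into a run that is currently all waiting, then dismantling every remaining transposition region (by left/right-pushes, conjugation swaps and cancellations of equal adjacent transpositions) so that it becomes waiting, touching each region only a bounded number of times. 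When this is done, $R_s$ and $R_N$ are waiting regions and Properties~\ref{property-valid-buffer:continuity}--\ref{property-valid-buffer:col-buf} still hold, so $(\mathcal{B},\bnu')$ is valid; a careful count of the passes gives the claimed bound of $6$ recolorings per coordinate of $\cup_{i=2}^s R_i$. (In the typical situation arising in the algorithm of Section~\ref{sec:interval-proof}, the transposition buffer of the input is already in canonical form, and step~(2) degenerates to a single left-slide of $t$ from $R_s$ to just behind the installed decomposition, followed by at most one cancellation.)

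The main obstacle is step~(2): sliding transpositions around a buffer never decreases their number, so reducing a long, arbitrary product of transpositions to a reduced decomposition of bounded length forces one to orchestrate cancellations of equal adjacent transpositions, and it is precisely the factor-$3$ slack $s=3{\omega\choose2}+2$ that makes it possible to carry this out while recoloring each coordinate only a constant number of times; the remaining, ``already canonical'' case is routine once the left-push has been set up.
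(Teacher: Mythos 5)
Your plan is considerably more ambitious than what the lemma requires, and the ambition is precisely where the argument breaks down. The paper does not try to bring the transposition buffer into any canonical form; it only needs to make $R_s$ a waiting region while keeping everything a transposition buffer. Its two cases are simple: if some region in $R_2,\ldots,R_{s-1}$ is already waiting, slide it right via Lemma~\ref{lem:transp-shift} until it reaches position $s$ (each region is touched in at most two consecutive applications, giving $4$ recolorings); otherwise, all $3\binom{\omega}{2}$ regions are transposition regions, so by pigeonhole two of them transpose the same \emph{pair of colors}---not the same pair of classes---and Lemma~\ref{lem:transp-cancel} converts both into waiting regions in at most $2$ recolorings per coordinate, after which Case~1 applies. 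The total is $4+2=6$.

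Your step~(2) has two concrete gaps. First, you propose to ``write the reduced decomposition of $\sigma$ into a run that is currently all waiting,'' but in the hard case every region of the transposition buffer is a transposition region, so no such run exists until you have already created waiting regions---which is exactly the problem you are trying to solve. Second, ``dismantling every remaining transposition region by conjugation swaps and cancellations'' is not a bounded-cost procedure: conjugation swaps preserve the number of transposition regions and change the transpositions as they slide, so there is no obvious invariant guaranteeing that you will ever bring two \emph{equal} transpositions adjacent, let alone do so while touching each coordinate only $O(1)$ times. Your implicit pigeonhole is on transpositions of classes; the paper's crucial observation is to pigeonhole on the (unordered) pair of canonical \emph{colors} being swapped, of which there are only $\binom{\omega}{2}$, and to notice that two regions swapping the same colors can be simultaneously killed by exchanging $c_1$ and $c_2$ on every block between them, regardless of which classes carry those colors. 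That observation is what makes $s=3\binom{\omega}{2}+2$ the right length and what you should supply to close the argument. The claimed bound of $6$ recolorings is asserted in your sketch but never actually derived from the operations you list.
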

\begin{proof}[Outline of the proof]
The proof distinguishes two cases: either there exists a waiting region in the transposition buffer or not. In the first case, we show that we can "slide" the waiting regions to the right of the transposition buffer and then ensure that $R_s$ is a waiting region. Otherwise, because of the size of the transposition buffer, then some pair of colors has to be permuted twice. In this case, we show that these two transpositions can be replaced by waiting regions (and we can apply the first case). A complete proof is given in Section \ref{ssec:pf-step2}.
\end{proof}

Note that given a clique $C$ and its associated vector $\nu_C$, applying Lemma \ref{lem:step-2} to an almost valid buffer $(\mathcal{B}, \bnu)$ rooted at a child $S$ of $C$ does not modify $D_{\mathcal{B}}(\nu_C, \bnu)$ since the region $R_N$ is not recolored.
%Given a coloring treated up to all the son $S_1, S_2, \ldots, S_p$ of a clique $C$ we apply consecutively Lemma \ref{lem:step-1} and \ref{lem:step-2} to the buffer $\mathcal{B}_i$ rooted in $S_i$ and obtain a coloring $\alpha$ treated up to $S_i$ where $D_\alpha(\mathcal{B}_i) = 0$ for every $i \in 1, 2, \ldots, p$. We are then able to recolor the buffers so that all the $\mathcal{B}_i$ have the same coloring:

Let $C$ be a clique and $S_1,S_2$ be two children of $C$. For every $i \le 2$, let $\mathcal{B}_i$ be the buffer of $S_i$ and assume that $\mathcal{B}_i$ is valid for $\bnu^i$. We say that $\mathcal{B}_1$ and $\mathcal{B}_2$ have \emph{the same coloring} if $\bnu^1=\bnu^2$.

\begin{lemma}\label{lem:step-3}
Let $C$ be a clique associated with $\nu_C$. Let $S_1, S_2, \ldots S_e$ be the children of $C$, and for every $i \le e$, $\mathcal{B}_i$ be the buffer rooted at $S_i$. Let $\bnu^i$ be a vectorial coloring such that $(\mathcal{B}_i, \bnu^i)$ is valid. If $D_{\mathcal{B}_i}(\nu_C, \bnu^i) = 0$ for every $i \leq e$, then there exists a recoloring sequence of $\cup_{j = 2}^{N-1} R^i_j$ such that every coordinate is recolored $O(\omega^2)$ times, the final coloring of all the $\mathcal{B}_i$s is the same coloring $\bnu'$, $D_{\mathcal{B}_i}(\nu_c, \bnu') = 0$, and $(\mathcal{B}_i, \bnu')$ is valid for every $i \leq e$.
\end{lemma}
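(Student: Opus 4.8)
The goal of Lemma~\ref{lem:step-3} is to synchronize the colorings of all the buffers hanging below a common clique $C$, while every buffer already agrees with $\nu_C$ on its last block (the border error is $0$). The plan is to pick one ``reference'' buffer, say $\mathcal{B}_1$, and transform each other buffer $\mathcal{B}_i$ into the coloring $\bnu^1$ one at a time; at the very end we may still need to modify $\bnu^1$ itself to a common target $\bnu'$ and re-synchronize, but the key point is that only the regions $R_2,\ldots,R_{N-1}$ are touched, so the border error stays $0$ and validity of the outer part is preserved throughout. The separation property (Observation~\ref{obs:separation}) is what makes working inside a single buffer legitimate: recolorings inside $R_2,\ldots,R_{N-1}$ of one buffer never interact with the sibling buffers, since their only common vertices lie in bags at or above $S_i$, hence outside the interior blocks.

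\textbf{Key steps.} First I would split the buffer into its transposition part ($R_2,\ldots,R_{s-1}$) and its color part ($R_{s+1},\ldots,R_{N-1}$), which are independent by Property~\ref{property-valid-buffer:waiting} (both $R_s$ and $R_N$ are waiting regions, so they act as barriers). For the color buffer, Observations~\ref{obs:c-buffer-1}--\ref{obs:c-buffer-1-bis} tell us that a color buffer is completely determined, up to permutation of its color regions, by the data of which class each non-canonical color ``sits on'' and where it appears; since $\nu_{C_s}$ is a permutation of the canonical colors (Observation~\ref{obs:t-buffer-1}) and the border error is $0$, the class of each non-canonical color at $C_N$ is forced to be the same in every $\mathcal{B}_i$. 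So the only freedom is the order of the color regions along the buffer, and one can reorder them by swapping adjacent regions; a swap of two adjacent color regions can be realized by recoloring each involved coordinate a bounded number of times, and we need at most $O(\omega^2)$ such swaps since there are at most $k-\omega = O(\omega)$ color regions. For the transposition buffer, I would first apply Lemma~\ref{lem:step-2}'s machinery (or rather its internal argument) to bring every $\mathcal{B}_i$ into a canonical form — since $\nu_{A_2} = \nu_{C_1}$ is canonical and $\nu_{C_s}$ is fixed by the border agreement, the net permutation realized by the transposition buffer is the same for all $i$; any two ways of factoring that permutation into adjacent transpositions can be converted into each other by a bounded number of recolorings per region, using Observation~\ref{obs:swapping-colors} and Lemma~\ref{obs:choose-temporary} to align temporary colors. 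Finally I would choose $\bnu'$ to be this canonical form (or $\bnu^1$ after normalization) and verify that each $\mathcal{B}_i$ reaches it, that validity holds (Properties 1--5), and that $D_{\mathcal{B}_i}(\nu_C,\bnu') = 0$, which is immediate because $R_N$ was never recolored.

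\textbf{Main obstacle.} The hard part will be bounding the number of recolorings by $O(\omega^2)$ while reordering the color regions and re-factoring the transposition buffer: naively, moving one color region past another, or retargeting a transposition, forces intermediate recolorings, and if we are not careful the cost of synchronizing $e$ buffers could pick up a factor depending on $e$ (which would be fatal, since $e$ is not bounded). The resolution is that each buffer is handled independently and the cost per buffer is $O(\omega^2)$ independent of $e$ — so I would state and prove a ``normalization'' sub-lemma: any valid buffer with border error $0$ can be transformed, recoloring each coordinate $O(\omega^2)$ times, into a fixed canonical valid buffer determined only by $\nu_C$ (and the global target). Proving that sub-lemma cleanly is the crux; once we have it, Lemma~\ref{lem:step-3} follows by applying it to each $\mathcal{B}_i$ separately, and the separation property guarantees these $e$ transformations do not interfere. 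I expect the bookkeeping around temporary colors in the transposition buffer — making sure we never run out of non-canonical colors, which is where the slack $k \ge \omega+3$ (equivalently $k\ge d+4$) is used — to be the fiddliest point.
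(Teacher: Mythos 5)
Your high-level plan is in the right spirit and matches the paper's structure (pick a reference buffer, split into transposition and color parts, synchronize each buffer independently, touch only $R_2,\ldots,R_{N-1}$ so the border error and the outer regions are unharmed). However, there are two genuine gaps where the hard work has been deferred rather than done.

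First, your treatment of the color buffer implicitly assumes that once border error is $0$ the only remaining freedom is the \emph{order} of the color regions. That is not quite true: for a class $p$ with $\nu_{C_N}(p)$ non-canonical, the canonical color $\nu_{C_s}(p)$ that disappears in $p$'s color region is not forced by $\nu_C$, and two valid buffers with border error $0$ may disagree on $\nu_{C_s}$ (i.e. on which canonical color each non-canonical color ``displaced''). The paper handles exactly this in Lemma~\ref{lem:step-3-same-col-buf}, which recolors $R_s,\ldots,R_{N-1}$ to align $\nu_{C_s}$ at the cost of temporarily turning $R_s$ into a transposition region and then invoking Lemma~\ref{lem:step-2} to restore validity. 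Without this step your ``reorder the color regions by adjacent swaps'' argument cannot even get started, because the color regions you want to match may carry different canonical colors.

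Second, and more centrally, you state as a black box that ``any two ways of factoring [the same permutation] into adjacent transpositions can be converted into each other by a bounded number of recolorings per region.'' This is precisely the crux, and you give no mechanism for it — braid relations alone do not obviously give an $O(\omega^2)$ bound per coordinate. The paper's key idea is an \emph{insert-and-cancel} trick: by Lemma~\ref{lem:well-organised-buf} first normalize each buffer so its first $2\binom{\omega}{2}$ regions are waiting regions, then (Lemma~\ref{lem:well-organized-extended}) insert the reference sequence of transpositions $\tau_1,\ldots,\tau_r$ \emph{followed by its inverse} $\tau_r^{-1},\ldots,\tau_1^{-1}$ into those waiting regions, which leaves $\nu_{A_s}$ unchanged, and finally (Lemma~\ref{lem:reduce-well-organised}, built on Lemma~\ref{lem:switch-transpo}) observe that the trailing $\tau_r^{-1},\ldots,\tau_1^{-1}$ together with the buffer's original transpositions compose to the identity and can be collapsed to waiting regions by repeatedly sliding and cancelling adjacent pairs. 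This is what makes the $O(\omega^2)$ bound work. Your proposal identifies the right obstacle but does not supply this idea, so the proof as written does not go through.
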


\begin{proof}[Outline of the proof]
First, we prove that it is possible to transform the coloring of $\mathcal{B}_i$ in such a way that all the color buffers have the same coloring, and that $\nu_s^1=\nu_s^i$ for $i \in \llbracket2, e \rrbracket$. \\%$\nu_s^1=\nu_s^i$. If $\nu^i_{C_N}(p)$ is canonical, then $\nu_s^1(p)=\nu_s^i(p)$. However, we have no control on the canonical colors that disappear in the color regions. We show that we can make some slight modifications to the buffer to guarantee that it is true.
%We then have to ensure that the vectors of the color buffers are the same (which is actually simple) but also that the vector of the transposition buffers are the same, which is more complicated.
We then have to ensure that the vectors of the transposition buffers are the same, which is more complicated. Indeed, even if we know that the vectors $\nu_s^1$ and $\nu_s^i$ are the same, we are not sure that we use the same sequence of transpositions in the transposition buffers of $\mathcal{B}_1$ and $\mathcal{B}_i$ to obtain it. Let $\tau_1,\ldots,\tau_r$ be the set of transpositions of $\mathcal{B}_1$. The proof consists in showing that we can add to $\mathcal{B}_i$ the transpositions $\tau_1,\ldots,\tau_r,\tau^{-1}_r,\ldots,\tau_1^{-1}$ at the beginning of the transposition buffer. 
It does not modify $\nu_{A_s}$ since this sequence of transpositions gives the identity. Finally, we prove that $\tau^{-1}_r,\ldots,\tau_1^{-1}$ can be cancelled with the already existing transpositions of $\mathcal{B}_i$. And then the transposition buffer of $\mathcal{B}_i$ only consists of $\tau_1,\ldots,\tau_r$. A complete proof is given in Section \ref{ssec:pf-step3}.
\end{proof}

\begin{lemma}\label{lem:step-4}
Let $C$ be a clique of $T$ with children $S_1, S_2, \ldots S_e$ and let $\alpha$ be a $k$-coloring of $G$ treated up to $S_i$ for every $i \in \llbracket 1 , e \rrbracket$. Let $\nu_C$ be a vector associated with $C$ and $\mathcal{B}_i = R^i_{1}, \ldots, R^i_{N}$ denote the buffer rooted at $S_i$. Assume that there exists $\bnu$ such that $(\mathcal{B}_i,\bnu)$ is valid and satisfies $D_{\mathcal{B}_i}(\nu_C, \bnu) = 0$ for every $i \leq e$. Then there exists a recoloring sequence of $\cup_{j = 2}^{N-1}R^i_j$ such that, for every $i \leq e$, every vertex of $\mathcal{B}_i$ is recolored at most one time and such that the resulting coloring of $G$ is treated up to $C$. A complete proof is given in Section \ref{ssec:pf-step4}.
\end{lemma}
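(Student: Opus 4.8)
The plan is to write the target coloring $\alpha'$ down explicitly --- it will be canonical outside $\mathcal{B}_C$ and will realize a fixed valid vectorial coloring $\bnu'$ of the buffer $\mathcal{B}_C$ rooted at $C$ --- and then to check that $\alpha$ can be turned into $\alpha'$ by a proper recoloring sequence in which every vertex of $\bigcup_{j=2}^{N-1}R^i_j$ is recolored at most once.

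First I would describe precisely how $\mathcal{B}_C$ is built from the buffers $\mathcal{B}_i$. Rerooting from a child $S_i$ to its parent $C$ raises by exactly one the height of every clique of $T_{S_i}$, so every vertex of $T_{S_i}$ starts one level higher in $T_C$ than it does in $T_{S_i}$; combined with Observation~\ref{obs:max-deg-ct} this yields that $\mathcal{B}_C$ is the union of $C$ with, for each $i$, the vertices of $\mathcal{B}_i$ that do not start at the top height $3\Delta N-1$ of $\mathcal{B}_i$, and that, block by block, the block of $\mathcal{B}_C$ of index $m$ consists of the vertices of $Q^i_m$ that do not start at the top of $Q^i_m$, together with the thin \emph{top slice} of $Q^i_{m+1}$ (its vertices of largest starting height), together --- for the three blocks of $R^C_N$ --- with the vertices of $C$. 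On regions: the top slice of $B^i_j$ slides into the first block of $R^C_j$, the top slice of $C^i_j$ slides into its second block, the top slice of $A^i_{j+1}$ slides into its third block (so it crosses the boundary between $R^C_j$ and $R^C_{j+1}$), and every other vertex keeps the index of its block.

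Then I would define $\bnu'$ by letting $R^C_j$ inherit the three vectors $(\nu_{A^i_j},\nu_{B^i_j},\nu_{C^i_j})$ of $R^i_j$ for $1\le j\le N-1$, and letting $R^C_N$ be the waiting region for $\nu_C$. This is well defined and does not depend on $i$: the vertices of a $\mathcal{B}_i$ landing in $R^C_N$ all come from $R^i_N$, which (because $D_{\mathcal{B}_i}(\nu_C,\bnu)=0$ and $R^i_N$ is a waiting region) is colored with $\nu_C$ everywhere, exactly like the vertices of $C$ by the choice of $\nu_C$; and by the continuity property $\nu_{C^i_{N-1}}=\nu_C$, so the region boundaries agree. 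Since the sliding above preserves the type and the temporary colors of every region, Properties~\ref{property-valid-buffer:continuity}--\ref{property-valid-buffer:waiting} pass from $(\mathcal{B}_i,\bnu)$ to $(\mathcal{B}_C,\bnu')$, so $(\mathcal{B}_C,\bnu')$ is valid; moreover the vertices starting above height $3\Delta N$ in $T_C$ are already colored canonically --- those coming from a $T_{S_i}$ because $\alpha$ is treated up to $S_i$ and $R^i_1$ is canonical --- and are never recolored, so $\alpha'$ is treated up to $C$.

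Finally I would perform the transition from $\alpha$ to $\alpha'$. By the continuity property, every slice that crosses a region boundary (the top slice of $A^i_{j+1}$ entering $R^C_j$) already carries the color prescribed by $\bnu'$, and so do all the slices entering $R^C_1$ (canonical) or $R^C_N$ (color $\nu_C$); these are left untouched. Hence the only vertices to recolor are, for each $j\in\{2,\dots,N-1\}$ such that $R^i_j$ is a color or a transposition region, the relevant classes of the top slice of $C^i_j$ (from $\nu_{C^i_j}$ to $\nu_{B^i_j}$) and of the top slice of $B^i_j$ (from $\nu_{B^i_j}$ to $\nu_{A^i_j}$); these slices are pairwise disjoint and lie in $\bigcup_{j=2}^{N-1}R^i_j$, so each vertex is recolored at most once. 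Using the separation property (Observation~\ref{obs:separation}) to treat distinct regions independently, within each region I would recolor the top slice of $C^i_j$ \emph{before} the top slice of $B^i_j$. The first recoloring only creates colors among the temporary (resp.\ appearing) colors of $R^i_j$, which inside $R^i_j$ and $A^i_{j+1}$ are carried only by vertices of the very classes being recolored, hence by no neighbor of a recolored vertex, so it stays proper. The second recoloring --- the main obstacle --- reintroduces the canonical colors $c_1,c_2$ of $R^i_j$, which \emph{do} occur inside $C^i_j$; but Observation~\ref{obs:max-deg-ct} forces every vertex of the top slice of $B^i_j$ to have all its neighbors in $C^i_j$ inside the top slice of $C^i_j$, which has just been recolored away from $c_1,c_2$, so the would-be conflict is avoided. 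Counting the recolorings then finishes the proof.
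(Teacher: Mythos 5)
Your proposal is correct and matches the paper's own proof: you identify the same top slices $f(B^i_j),f(C^i_j)$ as the only vertices needing recoloring, observe by continuity that the slices crossing region boundaries already carry their target color, and recolor the $C$-slice before the $B$-slice within each transposition region, using the height bound (Observation~\ref{obs:max-deg-ct}, equivalently the separation property applied to $\mathcal{B}_C$) to rule out the $c_1,c_2$ conflicts. The only cosmetic deviation is that the paper processes regions in decreasing order of $j$, whereas you correctly note the separation property makes the order between distinct regions immaterial.
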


\begin{proof}[Outline of the proof]
This proof "only" consists in shifting the buffer of one level. We simply recolor the vertices that now start in another region (of the buffer rooted at $C$) with their new color. We prove that the recoloring algorithm cannot create any conflict. A complete proof of the lemma is given in Section~\ref{ssec:pf-step4}.
\end{proof}

Given Lemmas \ref{lem:step-1}, \ref{lem:step-2}, \ref{lem:step-3} and \ref{lem:step-4} we can prove our main result:

\begin{theorem}\label{thm:1}
Let $\Delta$ be a fixed constant. Let $G(V,E)$ be a $d$-degenerate chordal graph of maximum degree $\Delta$ and $\phi$ be any $k$-coloring of $G$ with $k \geq d + 4$. Then we can recolor $\phi$ into the canonical coloring $c_0$ in at most $ O (d^4 \Delta \cdot n)$ steps. Moreover the recoloring algorithm runs in linear time.
\end{theorem}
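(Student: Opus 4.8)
The plan is to assemble Lemmas~\ref{lem:step-1}--\ref{lem:step-4} into an induction along the clique tree $T$, processing its cliques in any order that treats every clique after all of its descendants (a postorder). First we may assume $k\le 2d+1$, since for $k\ge 2d+2$ the graph $\G(G,k)$ already has diameter $O(n)$ with an absolute constant~\cite{BousquetP16}; under $k\le 2d+1$ we get $\omega\le d+1$ and $N=s+k-\omega+1=O(d^2)$, hence $3\Delta N=O(\Delta d^2)$, a bound used throughout. \emph{Base case.} If $S$ is a leaf then $T_S=\{S\}$, the buffer rooted at $S$ is exactly $B_S$, and all of $B_S$ sits in the last block $C_N$, every other block being empty. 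Since $B_S$ is a clique, $\phi$ restricted to it uses distinct colors; let $\nu$ be the associated color vector (completed arbitrarily to pairwise distinct entries, possible as $\omega<k$). Using that there are $k-\omega\ge 3$ non-canonical colors, that any permutation of $\{1,\dots,\omega\}$ is a product of at most $\omega-1\le 3\binom{\omega}{2}=s-2$ transpositions, and that an empty block is vacuously well coloured, one builds a tuple $\bnu$ making $(\mathcal B,\bnu)$ valid with $\nu_{C_N}=\nu$: a color buffer whose color regions introduce exactly the non-canonical colors of $\nu$, a transposition buffer realizing the underlying permutation, and waiting regions in the remaining slots. No vertex is recoloured, so $\phi$ is treated up to every leaf.

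\emph{Inductive step.} Let $C$ have children $S_1,\dots,S_e$, let $\nu_C$ be the vector of $C$, and suppose the current coloring is treated up to each $S_i$, with $\mathcal B_i$ the valid buffer rooted at $S_i$. For each $i$ in turn, alternately apply Lemma~\ref{lem:step-1} (which decreases the border error $D_{\mathcal B_i}(\nu_C,\cdot)$ by at least $1$ and leaves an almost valid buffer, recolouring only internal classes of $\bigcup_{j\ge s}R^i_j$ at most $3$ times each) and Lemma~\ref{lem:step-2} (which restores validity without changing the border error, recolouring $\bigcup_{2\le j\le s}R^i_j$ at most $6$ times each, and not touching $R^i_N$); since $D_{\mathcal B_i}(\nu_C,\cdot)\le\omega$, after at most $\omega$ rounds $(\mathcal B_i,\bnu^i)$ is valid with border error $0$. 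These operations recolour no vertex of $B_C$: if $v\in B_C\cap R^i_N$ then $v\in B_{S_i}$ (as $S_i$ is a child of $C$), and by clique-tree axiom (iii) some $w\in B_C\setminus B_{S_i}$ satisfies $w\sim v$ and $w\notin\mathcal B_i$, so $v$'s class is not internal to $\mathcal B_i$; and every vertex of $B_C\cap\mathcal B_i$ starts at height $\le\Delta-1$ in $T_{S_i}$, hence misses $\bigcup_{2\le j\le s}R^i_j$ and also the regions touched later. Consequently processing one child never disturbs a sibling buffer (which meets $\mathcal B_i$ only inside $B_C$), and more generally the ``internal class'' restriction keeps the recolorings away from the boundary cliques shared with other parts of $T$, so the invariant for already-treated cliques is preserved. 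Now Lemma~\ref{lem:step-3} makes all $\mathcal B_i$ share one valid coloring $\bnu'$ with border error $0$, each coordinate recoloured $O(\omega^2)$ times; finally Lemma~\ref{lem:step-4} shifts this common buffer one level up to $C$, recolouring each vertex at most once, after which the coloring is treated up to $C$: the vertices leaving the buffer were in the canonical region $R^i_1$, so nothing is owed for them, and the pieces inherited from different children agree since $\bnu^1=\dots=\bnu^e=\bnu'$.

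\emph{Root, counting and running time.} Once the root $r$ has been processed, the buffer rooted at $r$ is valid and every vertex outside it is canonically coloured. Applying Lemmas~\ref{lem:step-1} and~\ref{lem:step-2} once more with the \emph{canonical} vector playing the role of $\nu_C$ (these two lemmas only use that $\nu_C$ is some vector and recolour inside the buffer) forces $\nu_{C_N}$ canonical; then Observation~\ref{obs:c-buffer-1} forces the color buffer to be entirely canonical, and the transposition buffer, now realizing the identity, collapses region by region to waiting regions as in the proof of Lemma~\ref{lem:step-2}. This last cleanup recolours each coordinate $O(\omega^2)$ times, i.e.\ $O(d^2n)$ recolorings, after which the coloring is $c_0$. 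For the global count, observe that $v\in\mathcal B_S$ only if $S$ is a bag containing $v$ or one of the at most $3\Delta N$ ancestors of the topmost such bag lying within distance $3\Delta N-1$ of it, so $\sum_S|\mathcal B_S|\le\sum_{\text{bags }B}|B|+3\Delta N\,n\le\omega n+3\Delta N\,n=O(\Delta d^2 n)$; while processing a clique, each coordinate of each child buffer is recoloured $O(\omega^2)=O(d^2)$ times in total, which is at most $O(d^2)\,|\mathcal B_i|$ single-vertex recolorings, so summing over all cliques and their children yields $O(d^2)\sum_S|\mathcal B_S|=O(\Delta d^4 n)$, with the leaf and root steps adding only $O(d^2 n)$ more. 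All ingredients---the clique tree, the recoloring sequences furnished by the four lemmas, and the bookkeeping of length-$\omega$ vectors over the $O(\Delta d^2)$ regions of $O(\Delta)$ size---are computable in time linear in $n$ for fixed $\Delta$ (recall $d\le\Delta$), so the algorithm runs in linear time. This proves Theorem~\ref{thm:1}, and Theorem~\ref{thm:chordal} follows by transforming $c_1\to c_0\to c_2$.

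\emph{Main obstacle.} The delicate point is not any single lemma (all four are assumed) but certifying that these local operations compose into a \emph{globally} proper recoloring that never invalidates a buffer rooted at another clique. This is precisely what the ``internal class'' restriction in Lemma~\ref{lem:step-1} is designed for---vertices shared between a buffer and a neighbouring clique are adjacent to vertices outside the buffer, hence belong to non-internal classes and stay frozen---and it is why Lemma~\ref{lem:step-3} is needed to synchronize the sibling buffers before the shift of Lemma~\ref{lem:step-4}. Getting the height arithmetic of the blocks right (so that $B_C$ vertices really are confined to $R^i_N$, and that the region leaving the buffer is always the canonical $R^i_1$) is the other place where care is required.
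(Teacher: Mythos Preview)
Your proof is correct and follows essentially the same approach as the paper: postorder traversal of the clique tree, at each node applying Lemmas~\ref{lem:step-1}--\ref{lem:step-2} repeatedly to zero out the border error, then Lemma~\ref{lem:step-3} to synchronize the children, then Lemma~\ref{lem:step-4} to shift the buffer up. Your version is more explicit about the base case and about why operations on one child buffer cannot disturb a sibling (the paper leaves this implicit), and you handle the root by invoking the existing lemmas against the canonical vector and then collapsing the now-identity transposition buffer, whereas the paper instead appends an artificial clique path of length $3\Delta N$ above the root and keeps iterating; both are straightforward once the four lemmas are in hand.
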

\begin{proof}
Let $c_0$ be the canonical coloring of $G$ as defined in Section \ref{subsec:canonical-color}, and $T$ be a clique tree of $G$. Let us first show that given a clique $C \in T$ with children $S_1, \ldots, S_e$ and a coloring $\alpha$ treated up to $S_i$ for every $i \leq e$, we can obtain a coloring of $G$ treated up to $C$. Let $\nu_C$ be a vector associated with $C$. For every $i \le e$, let $\mathcal{B}_i$ be the buffer rooted in $S_i$ and $\bnu_i$ be a vectorial coloring of $\mathcal{B}_i$ such that $(\mathcal{B}_i,\bnu^i)$ is valid. 
For every $i \leq e$, by applying Lemmas~\ref{lem:step-1} and~\ref{lem:step-2} at most $D_{\mathcal{B}_i}(\nu_C, \bnu^i)$ times to $(\mathcal{B}_i, \bnu^i)$, we obtain a vectorial coloring $\bnu^i$ such that $(\mathcal{B}_i, \bnu^i)$ is valid and $D_{\mathcal{B}_i}(\nu_c, \bnu^i) = 0$. By Lemma \ref{lem:step-3}, we can recolor each $\bnu^i$ into $\bnu'$ such that for every $i$, $(\mathcal{B}_i, \bnu')$ is valid  and $D_{\mathcal{B}_i}(\nu_c, \bnu') = 0$. Then we can apply Lemma \ref{lem:step-4} to obtain a coloring of $G$ such that the buffer $(\mathcal{B}, \bnu)$ rooted in $C$ is valid. Since no vertex starting in cliques $W \in T_C$ with $h_{C}(W) > 3 \Delta N$ is recolored, these vertices remain canonically colored and the resulting coloring of $G$ is treated up to $C$.
Note that only vertices of $T_C$ that start in cliques of height at most $3 \Delta N$ are recolored at most $O(\omega^2)$ times to obtain a coloring treated up to $C$.
\smallskip

Let us now describe the recoloring algorithm and analyze its running time. We root $T$ at an arbitrary node $C_r$ and orient the tree from the root to the leaves. We then do a breadth-first-search starting at $C_r$ and store the height of each node in a table $h$ such that $h[i]$ contains all the nodes of $T$ of height $i$. Let $i_h$ be the  height of $T$. We apply Lemmas \ref{lem:step-1} to \ref{lem:step-4} to every $C \in h[i]$ for $i$ from $i_h$ to $0$. Let us show that after step $i$, the coloring of $G$ is treated up to $C$ for every $C \in h[i]$. It is true for $i = i_h$ since for any $C \in h[i_h]$ the sub-tree $T_C$ of $T$ only contains $C$. Suppose it is true for some $i > 0$ and let $C \in h[i-1]$. Let $S_1, \ldots, S_e$ be the children of $C$. For all $j \in 1, \ldots, e$, $S_j \in h[i]$ and by assumption the current coloring is treated up to $S_j$ after step $i$. Thus we can apply Lemmas \ref{lem:step-1} to \ref{lem:step-4} to $C$. After iteration $i_h$ we obtain a coloring of $G$ that is treated up to $C_r$. Up to adding "artificial" vertices to $G$, we can assume that $C_r$ is the only clique of $T$ adjacent to a clique path of length $3 \Delta N$ (in fact we only need a tuple of $3N$ color vectors) in $T$ and apply Lemmas \ref{lem:step-1} to \ref{lem:step-4} until we obtain a coloring such that $C_r$ is canonically colored, and the algorithm terminates. A clique tree of $G$ can be computed in linear time~\cite{10.1007/3-540-60618-1_88}, as well as building the table $h$ via a breadth-first-search. Given a clique $C$, we can access to the cliques of the buffer rooted at $T_C$ in constant time by computing their height and using the table $h$. Furthermore, a vertex of height $i$ is recolored during the iterations $i+1, \ldots, i+ 3 \Delta N$ only. As each vertex is recolored at most $O(\omega^2)$ times at each iteration, it follows that the algorithm runs in linear time. Finally, as $N = 3{{\omega}\choose{2}} + k - \omega + 3$, each vertex is recolored at most $O(\omega^4 \Delta)$ times, and thus the algorithm recolors $\phi$ to $c_0$ in at most $O(\omega^4 \Delta \cdot n)$ steps.
\end{proof}

The proof of Theorem \ref{thm:chordal} immediately follows:
\begin{proof}[Proof of Theorem~\ref{thm:chordal}]
Let $\phi$ and $\psi$ be two $k$-colorings of $G$ with $k \geq d + 4$ and let $c_0$ be the canonical coloring of $G$ defined in Section \ref{subsec:canonical-color}. By Theorem \ref{thm:1}, there exists a recoloring sequence from $\phi$ (resp. $\psi$) to $c_0$ of length $O((d+1)^4 \Delta \cdot n)$. Thus there exists a sequence of length $O(n)$ that recolors $\phi$ to $\psi$. Furthermore the recoloring sequences from $\phi$ to $c_0$ and from $\psi$ to $c_0$ can be found in linear time by Theorem \ref{thm:1}, which concludes the proof.
\end{proof}

\section{Proof of Lemma \ref{lem:step-1}}\label{ssec:pf-step1}

Recall that in a buffer $\mathcal{B}$, $R_{s}$ is the region that sits between the transposition buffer and the color buffer. Before proving Lemma \ref{lem:step-1}, let us first start with some observations.

\begin{observation}\label{obs:maintain-nature-color-waiting-2}
Let $R$ be a well-colored region that does not contain color $c$. If we set $\nu_Q(p) = c$ for every block $Q$ of $R$, then we obtain a waiting region if $R$ was a waiting region or a color region for the class $p$, and we obtain a color region if $R$ was a color region for a class $q \neq p$.
\end{observation}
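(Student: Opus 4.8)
The statement to prove is Observation~\ref{obs:maintain-nature-color-waiting-2}. The plan is to verify the claim by a short case analysis on the nature of the region $R$ before the modification, using only the definitions of waiting, color, and transposition regions together with the hypothesis that color $c$ does not appear anywhere in $R$. Write $R = (A,B,C)$ and suppose $R$ is well-colored for $\nu_A,\nu_B,\nu_C$; after the operation we obtain vectors $\nu'_A,\nu'_B,\nu'_C$ that agree with the originals on every coordinate $m \neq p$ and satisfy $\nu'_A(p) = \nu'_B(p) = \nu'_C(p) = c$. First I would record that the resulting coloring of $R$ is still proper: since $c$ does not occur in $R$ and each class is an independent set, by the separation property $N[A\cap R]\subseteq A\cup B$ and $N[C\cap R]\subseteq B\cup C$, so recoloring the vertices of $(A,p),(B,p),(C,p)$ to $c$ creates no conflict; moreover the three new vectors are genuine color vectors because $c$ was not used on any coordinate of $\nu_A,\nu_B,\nu_C$, so all coordinates remain pairwise distinct.

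Now the case analysis. If $R$ was a waiting region, then $\nu_A=\nu_B=\nu_C$, and after setting the $p$-th coordinate of all three to $c$ we still have $\nu'_A=\nu'_B=\nu'_C$, so $R$ is again a waiting region. If $R$ was a color region for the class $X_p$ and colors $c_1,z$, then by definition $\nu_A,\nu_B,\nu_C$ agree on every coordinate $m\neq p$; overwriting the (only) coordinate on which they differed makes all three vectors equal, hence $R$ becomes a waiting region. Finally, if $R$ was a color region for some class $X_q$ with $q\neq p$ and colors $c_1,z$, then $\nu_A,\nu_B,\nu_C$ agree on every coordinate $m\neq q$ — in particular on coordinate $p$ — and differ only on $q$ with $\nu_A(q)=c_1$, $\nu_B(q)=\nu_C(q)=z$. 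Overwriting coordinate $p$ with $c$ on all three vectors leaves coordinate $q$ untouched, so the new vectors still agree everywhere except on $q$ where they take the values $c_1$ on $A$ and $z$ on $B,C$; since $c_1$ is canonical and $z$ is non-canonical, and coordinate $p$ is now $c$ which is distinct from $c_1,z$ (as $c\notin R$ before and $c$ is on coordinate $p\neq q$), the conditions of a color region are met and $R$ is a color region for $X_q$ and colors $c_1,z$.

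I do not expect a genuine obstacle here; the only points requiring care are (i) checking that overwriting coordinate $p$ cannot collide with colors $c_1$ or $z$ of a color region for a different class — this is exactly where the hypothesis ``$R$ does not contain color $c$'' is used, since it forbids $c\in\{c_1,z\}$ when those colors actually occur in $R$ — and (ii) confirming that the new vectors are still valid color vectors (pairwise distinct coordinates), which again follows because $c$ was absent from $R$. The statement as phrased only addresses the waiting and color cases, so no transposition subcase is needed, but the same bookkeeping would handle it if required later.
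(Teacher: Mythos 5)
Your proposal is correct. The paper states this as an unproved observation, and your argument is exactly the routine definition-check the authors intend: verify the new vectors remain valid color vectors (pairwise distinct coordinates, using that $c$ did not appear), then case on whether $R$ is waiting, a color region for $p$, or a color region for some $q\neq p$, noting in the last case that $c\notin\{c_1,z\}$ because both $c_1$ and $z$ occur on coordinate $q$ of $R$ while $c$ does not occur in $R$ at all.
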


Hence, modifying the color of a class on all the blocks of a region $R_j$ with $j \geq s$ of an almost valid buffer leaves a waiting region or a color region and maintains Property \ref{property-valid-buffer:col-buf} of almost valid buffers. The next lemma ensures that under technical assumptions the continuity property (Property \ref{property-valid-buffer:continuity} of valid buffers) can also be kept.

\begin{lemma}\label{obs:maintain-continuity}
Let $(\mathcal{B}, \bnu)$ be an almost valid buffer and $R_i, R_j$ be two regions of $\mathcal{B}$ with $1 \leq i \leq j$. Let $X$ be a block in $\{B_{i}, C_{i}\}$ and let $Y = B_j$ or $Y = C_N$. Then recoloring $(X, \ldots, Y, p)$ with $c$ preserves the continuity property.
\end{lemma}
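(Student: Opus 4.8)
The statement to prove is Lemma~\ref{obs:maintain-continuity}: given an almost valid buffer $(\mathcal{B},\bnu)$, two regions $R_i, R_j$ with $i \le j$, a block $X \in \{B_i, C_i\}$, a block $Y \in \{B_j, C_N\}$, and a color $c$, recoloring the vertices of class $p$ on all blocks from $X$ through $Y$ with $c$ preserves the continuity property (Property~\ref{property-valid-buffer:continuity}). The plan is to argue directly from the definition of continuity. Recall continuity states $\nu_{C_i} = \nu_{A_{i+1}}$ for all $i$. When we perform the vectorial recoloring that sets $\nu_Q(p) = c$ for every block $Q$ in the contiguous sequence $X, \ldots, Y$, the only coordinate that changes in any vector is coordinate $p$, and it changes to $c$ uniformly across exactly the blocks in that sequence. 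So I would examine, for each consecutive pair of blocks $(Q, Q')$ in the buffer (where $Q'$ immediately follows $Q$), whether the new vectors still agree on coordinate $p$.

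**The case analysis.** There are three kinds of consecutive pairs to check. (1) Both $Q$ and $Q'$ lie strictly inside the recolored sequence $X,\ldots,Y$: then both have $p$-th coordinate equal to $c$ after recoloring, and all other coordinates are unchanged, so continuity between them is preserved. (2) Both $Q$ and $Q'$ lie strictly outside the sequence: nothing changed, so continuity holds as before. (3) The boundary pairs: either $Q'$ is the first block $X$ of the sequence and $Q$ its predecessor, or $Q$ is the last block $Y$ and $Q'$ its successor. Here the key point is that the "predecessor of $X$" and the "successor of $Y$" are blocks lying \emph{within the same region} as $X$ and $Y$ respectively --- this is exactly why the hypotheses restrict $X$ to $\{B_i, C_i\}$ and $Y$ to $\{B_j, C_N\}$. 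Continuity (Property~\ref{property-valid-buffer:continuity}) only constrains pairs of the form $(C_m, A_{m+1})$, i.e. the last block of one region and the first block of the next. If $X = B_i$, its predecessor is $A_i$, and the pair $(A_i, B_i)$ is \emph{not} governed by the continuity property at all (it is governed internally by the region's type: waiting/color/transposition). Similarly if $X = C_i$, its predecessor is $B_i$, and $(B_i, C_i)$ is again an intra-region pair. The same holds on the right: if $Y = B_j$ the successor is $C_j$ and $(B_j, C_j)$ is intra-region; if $Y = C_N$ there is no successor block at all. So no inter-region pair $(C_m, A_{m+1})$ ever straddles the boundary of the recolored sequence, and every inter-region pair is either entirely inside (case 1) or entirely outside (case 2) the sequence. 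Hence $\nu_{C_m} = \nu_{A_{m+1}}$ continues to hold for every $m$, which is precisely the continuity property.

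**What needs care, and what I'd keep brief.** The one thing worth stating explicitly is why the allowed choices $X\in\{B_i,C_i\}$ and $Y\in\{B_j,C_N\}$ are exactly the ones that make this work: starting or ending the recolored segment at an $A$-block (other than as the successor/predecessor situation) would place a region boundary strictly inside the segment on one side and outside on the other, breaking the matching of $\nu_{C_m}$ with $\nu_{A_{m+1}}$. I would remark that this lemma says nothing about whether the resulting region types are still waiting/color/transposition (that is the content of Observation~\ref{obs:maintain-nature-color-waiting-2} and the surrounding discussion, used separately) --- here we only claim the continuity equalities survive. I expect no genuine obstacle: the proof is a short verification once the boundary blocks are identified correctly. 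The only mild subtlety is bookkeeping the region indices at the two ends (and handling the degenerate sub-case $i = j$, where $X$ and $Y$ may lie in the same region, which the argument above covers without change since we still only ever compare coordinate $p$ and the inter-region pairs are unaffected). I would write the whole thing as: change affects only coordinate $p$, only on a contiguous block-segment whose endpoints are non-first / non-last-of-buffer in a way that avoids every inter-region junction being split; therefore every equality $\nu_{C_m}=\nu_{A_{m+1}}$ is preserved.
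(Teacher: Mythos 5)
Your proof is correct and follows essentially the same approach as the paper: observe that only coordinate $p$ changes, and that because $X \in \{B_i, C_i\}$ and $Y \in \{B_j, C_N\}$, every constrained pair $(C_t, A_{t+1})$ lies either entirely inside or entirely outside the recolored segment, so the equalities $\nu_{C_t}=\nu_{A_{t+1}}$ survive. Your version spells out the boundary bookkeeping more explicitly than the paper's terse "since both do not change or both are $c$," but the core argument is identical.
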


\begin{proof}
Since $(\mathcal{B}, \bnu)$ is almost valid, for every $p \in \llbracket 1, \omega \rrbracket$ and every $t \in \llbracket 1 ,  N-1 \rrbracket$, we have $\nu_{C_t}(p) = \nu_{A_{t+1}}(p)$. As the sequence of recolored blocks starts in $\{B_i, C_i\}$ and ends either in $B_j$ or in $C_N$, after the recoloring we still have $\nu_{C_t}(p) = \nu_{A_{t+1}}(p)$ for every $t \in \llbracket 1 ,  N-1 \rrbracket$ and $p \in \llbracket 1 ,  \omega \rrbracket$ since both do not change or both are $c$.
\end{proof}

Given a color $c$, a class $p$, and a sequence of consecutive blocks $(Q_i, \ldots, Q_j)$, we say that $(Q_i, \ldots, Q_j, p)$ is \emph{$c$-free} if no vertex of $N[\cup_{t = i}^j Q_t \cap X_p]$ is colored with $c$. Note that a sequence of (proper) vectorial recolorings of a buffer $(\mathcal{B}, \bnu)$ that does not recolor $A_1$ and such that all the classes recolored on $C_N$ are internal to $\mathcal{B}$ yields a (proper) sequence of single vertex recolorings of $G$ by Observation~\ref{obs:vector-recol-buff}. With the definition of clique-tree we can make the following observation:

\begin{observation}\label{lem:neighbour-prop}
Let $C$ be a clique associated with $\nu_C$, $S$ be a child of $C$, and $(\mathcal{B}, \bnu)$ be the buffer rooted at $S$. If $\nu_{C_N}(\ell) \neq \nu_C(\ell)$ for some $\ell \leq \omega$, then the class $\ell$ is internal to $\mathcal{B}$.
\end{observation}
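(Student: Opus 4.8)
The plan is to argue by contradiction. Assume $\nu_{C_N}(\ell)\neq\nu_C(\ell)$ and that $\ell$ is \emph{not} internal to $\mathcal{B}$; then there are a vertex $u\in X_\ell\cap R_N$ and a neighbour $w\in N(u)\setminus(R_{N-1}\cup R_N)$. The subtlety here is that the separation property genuinely fails at the topmost region $R_N$, since a vertex of $R_N$ may still lie in a bag outside $T_S$ (for instance in $C$ itself); so the first task is to locate $u$ and $w$ precisely using the clique-tree structure.

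First I would show that $u$ lies in a bag of $T$ that is \emph{not} a node of $T_S$. Suppose instead that every bag containing $u$ is a node of $T_S$; since $u\in R_N$, all of these have height at most $3\Delta-1$ in $T_S$. As $u$ and $w$ are adjacent they share a bag $B$, which is then a node of $T_S$ of height at most $3\Delta-1$, and $w\in B$; Observation~\ref{obs:max-deg-ct} bounds the starting height of $w$ in $T_S$ by $4\Delta-1<3\Delta N-1$, so $w\in\mathcal{B}$ with $w\in Q_{3N-3}\cup\dots\cup Q_{3N}\subseteq R_{N-1}\cup R_N$, contradicting the choice of $w$. Hence $u$ belongs to some bag outside $T_S$. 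The set of bags containing $u$ is a subtree of $T$ that meets $T_S$ (because $u\in\mathcal{B}$) and $T\setminus T_S$, so it contains the edge joining $T_S$ to the rest of $T$; that is, it contains $S$ and its parent $C$, and therefore $u\in S\cap C$.

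Next I would pin $u$ down to the block $C_N$ itself. Since $u\in C$ and $u$ belongs to some bag $\hat{B}$ of $T_S$, the path of $T$ from $C$ through $S$ to $\hat{B}$ has length $1+h_{T_S}(\hat{B})$ and every bag on it contains $u$, so the counting argument behind Observation~\ref{obs:max-deg-ct} forces $1+h_{T_S}(\hat{B})\le\Delta$. Applying this to the deepest bag of $T_S$ containing $u$ gives that $u$ starts at height at most $\Delta-1$ in $T_S$, i.e. $u\in Q_{3N}=C_N$. Now I read the colour of $u$ in two ways: $C_N$ is well-colored for $\nu_{C_N}$ and $u\in X_\ell\cap C_N$, so $\alpha(u)=\nu_{C_N}(\ell)$; and $X_\ell$ is independent while $C$ is a clique, so $u$ is the unique vertex of $X_\ell\cap C$ and $\nu_C(\ell)=\alpha(u)$. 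Hence $\nu_{C_N}(\ell)=\nu_C(\ell)$, contradicting our assumption, so $\ell$ is internal to $\mathcal{B}$.

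I expect the crux to be the second step, forcing $u$ into $C_N$ and not merely into $R_N$: this is exactly where the bounded maximum degree enters (a vertex lying in $C$ and also deep inside $T_S$ would have more than $\Delta$ neighbours), and it is what makes the statement go through without assuming that $R_N$ is a waiting region. The first step, by contrast, is a routine unwinding of the behaviour of the separation property at the boundary of the buffer.
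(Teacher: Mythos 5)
Your proof is correct and follows the same route as the paper's (much terser) argument: one shows that a vertex $u\in X_\ell\cap R_N$ with a neighbour outside $R_{N-1}\cup R_N$ must lie in $C$ (via connectedness of the subtree of bags containing $u$), and then reads off $\nu_C(\ell)=\alpha(u)=\nu_{C_N}(\ell)$. You are more explicit than the paper in the intermediate step of pinning $u$ down to the block $C_N$ rather than merely the region $R_N$ --- a point the paper leaves implicit but which is genuinely needed to equate $\alpha(u)$ with $\nu_{C_N}(\ell)$.
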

\begin{proof}
Suppose that the class $\ell \leq \omega$ is not internal to $\mathcal{B}$. Then there exists a vertex $u \in X_{\ell} \cap R_N$ which has a neighbor $v$ that does not belong to $R_{N-1} \cup R_N$. Thus $u$ and $v$ must be contained in a clique $C'$ that is an ancestor of $C$ and then by property (ii) of clique tree, $u$ must be contained in $C$. Then, by definition of $\nu_C$, it must be that $\nu_C(\ell) = \nu_{C_n}(\ell)$.
\end{proof}

We also need the following technical lemma:
\begin{lemma}\label{obs:create-cancel-color}
Let $(\mathcal{B}, \bnu)$ be an almost valid buffer. Let $s < i < N$, $c$ be a color and $p$ be an internal class. If one of the following holds:
\begin{enumerate}
    \item\label{obs:create-cancel-color-1} $R_i$ is a waiting region, $c$ is a non-canonical color that does not appear in $R_s, \ldots, R_N$ and the class $p$ is not involved in a color region, or
    \item\label{obs:create-cancel-color-2} $R_i$ is a color region for the class $p$. Moreover $c$ is non-canonical and does not appear in $R_s, \ldots, R_N$, or
    \item\label{obs:create-cancel-color-3} $R_i$ is a color region for the class $p$ where $c$ is the canonical color that disappears.
\end{enumerate}
Then changing the color of $(B_i, \ldots, C_N, p)$ by $c$ also gives an almost valid buffer and a proper coloring of $G$.
\end{lemma}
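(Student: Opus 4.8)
The plan is to verify three things after recoloring $(B_i,\ldots,C_N,p)$ with the color $c$: (a) the continuity property is preserved, (b) the color buffer property (Property~\ref{property-valid-buffer:col-buf} of almost valid buffers) is preserved, i.e.\ every region of $R_{s+1},\ldots,R_{N-1}$ remains a waiting or color region and $R_N$ remains a waiting region, and (c) the resulting vectorial coloring is proper so that, by Observation~\ref{obs:vector-recol-buff} together with Observation~\ref{lem:neighbour-prop} (since $p$ is internal, the recoloring of $C_N$ is harmless), it lifts to a proper sequence of single vertex recolorings of $G$. Point (a) is exactly Lemma~\ref{obs:maintain-continuity}: the recolored block sequence starts at $B_i$ and ends at $C_N$, so we may apply it with $X=B_i$, $Y=C_N$. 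Point (b) for the regions strictly between $R_i$ and $R_N$ is handled by Observation~\ref{obs:maintain-nature-color-waiting-2}: on each such region we are setting the $p$-th coordinate to $c$ on all three blocks, which turns a waiting region into a waiting region, a color region for class $p$ into a waiting region, and a color region for a class $q\neq p$ into a color region.

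Next I would deal with the region $R_i$ itself and with $R_N$, and check that the overall vectorial coloring stays proper; this is where the three cases diverge. In case~\ref{obs:create-cancel-color-1}, $R_i$ is a waiting region, $c$ is non-canonical, does not appear anywhere in $R_s,\ldots,R_N$, and $p$ is not involved in a color region. After setting $\nu_{B_i}(p)=\nu_{C_i}(p)=c$ while keeping $\nu_{A_i}(p)$ canonical, $R_i$ becomes a color region for class $p$ with colors $(\nu_{A_i}(p),c)$ — here I must check $c\notin\{$ the other coordinates of $\nu_{A_i},\nu_{B_i},\nu_{C_i}\}$, which follows since $c$ does not appear in $R_s,\ldots,R_N$. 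Since $p$ was not in a color region, by Observation~\ref{obs:c-buffer-1} the class $p$ was colored canonically on all of $R_{s+1},\ldots,R_N$; after the recoloring it is colored $c$ on $B_i,C_i,\ldots,C_N$, so no region past $R_i$ that was a waiting region can fail to be a waiting region for a reason involving $p$, and $R_N$ (whose all three vectors are now equal on coordinate $p$ and unchanged elsewhere) is still a waiting region. Properness of the vectorial coloring at the junction $C_{i-1}/A_i$ is preserved because $A_i$ was not recolored and $c$ did not appear in $R_{i-1}$.

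In case~\ref{obs:create-cancel-color-2}, $R_i$ is already a color region for class $p$, say for colors $(c_1,z)$ where $z$ is the non-canonical color that appears; we now additionally overwrite $\nu_{B_i}(p)=\nu_{C_i}(p)$ from $z$ to $c$. Since $c$ is non-canonical and absent from $R_s,\ldots,R_N$, $R_i$ remains a color region for class $p$, now for colors $(c_1,c)$; the color $z$ which used to appear in $R_i$ and persist on $(B_i,C_i,\ldots,C_N,p)$ is replaced throughout by $c$, so by Observation~\ref{obs:c-buffer-1-bis} no conflict is created and $R_N$ stays a waiting region. In case~\ref{obs:create-cancel-color-3}, $R_i$ is a color region for class $p$ where $c=c_1$ is the canonical color that disappears; overwriting $\nu_{B_i}(p)=\nu_{C_i}(p)$ from $z$ back to $c_1$ turns $R_i$ into a waiting region (so Property~\ref{property-valid-buffer:col-buf} still holds, a waiting region being allowed), removes $z$ from the color buffer entirely, and again leaves $R_N$ a waiting region. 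In all three cases the transposition buffer $R_2,\ldots,R_{s-1}$, the regions $R_1$ and $R_s$, and $R_N$'s status are untouched except as noted, so Properties~\ref{property-valid-buffer:continuity}--\ref{property-valid-buffer:col-buf} and Property~$5'$ all hold and $(\mathcal{B},\bnu')$ is almost valid.

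The main obstacle is the bookkeeping in case~\ref{obs:create-cancel-color-1}: one must be careful that introducing the fresh non-canonical color $c$ does not collide with a coordinate other than $p$ in $R_i$ or in a neighbouring region, and that the hypothesis ``$p$ is not involved in a color region'' is genuinely needed so that the class $p$ was canonically colored all the way down to $C_N$ before the change — otherwise overwriting could leave $z$ (the old non-canonical color of $p$) dangling in the middle of the color buffer and break Observation~\ref{obs:c-buffer-1-bis}. The remaining verifications (properness at each block junction, the separation property ensuring the induced graph coloring is proper, and internality of $p$ ensuring $C_N$ may be recolored) are all direct applications of Observations~\ref{obs:separation}, \ref{obs:vector-recol-buff} and Lemma~\ref{obs:maintain-continuity}, \ref{obs:maintain-nature-color-waiting-2}, and require no new idea.
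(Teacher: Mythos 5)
Your proof follows essentially the same structure as the paper's: establish properness via internality of $p$ and the absence (or unique location) of $c$; preserve continuity via Lemma~\ref{obs:maintain-continuity}; preserve the color-buffer property on regions past $R_i$ via Observation~\ref{obs:maintain-nature-color-waiting-2}; and classify $R_i$ itself by the same three-way case analysis (new color region, recolored color region, color region collapsing to a waiting region). The bookkeeping details you flag — uniqueness of the color region for $p$ in Case~\ref{obs:create-cancel-color-1} and the role of Observation~\ref{obs:c-buffer-1-bis} in Case~\ref{obs:create-cancel-color-2} — are the same points the paper verifies, so the argument is correct and matches the paper's proof.
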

\begin{proof}
Let $\bnu'$ be the resulting coloring. As the class $p$ is internal and $c$ does not appears in $R_s, \ldots R_N$, Case \ref{obs:create-cancel-color-1} and Case \ref{obs:create-cancel-color-2} describe a proper recoloring sequence. As $(\mathcal{B}, \bnu)$ is almost valid, Observation \ref{obs:c-buffer-1} ensures that the only class colored with $c$ in $R_s, \ldots, R_N$ in Case \ref{obs:create-cancel-color-3} is the class $p$ on the set of blocks $(A_s, \ldots, A_i)$. Since the class $p$ is internal, case \ref{obs:create-cancel-color-3} also defines a proper recoloring sequence.

Let us now show that $(\mathcal{B}, \bnu')$ is almost valid. First note that no block of $R_1, \ldots, R_s$ is recolored thus Properties~\ref{property-valid-buffer:first-canonical} and~\ref{property-valid-buffer:transp-buf} of almost valid buffers are still satisfied. The recolored blocks $(B_i, \ldots, C_N)$ satisfy the condition of Lemma~\ref{obs:maintain-continuity} thus Property \ref{property-valid-buffer:continuity} also holds. Observation \ref{obs:maintain-nature-color-waiting-2} ensures that the regions $R_{i+1}, \ldots, R_N$ remain either waiting or color regions and, in particular, that $R_N$ remains a waiting region. Along with the fact that $R_s$ is not modified, Property \hyperref[property-almost-buffer:waiting]{5'} is satisfied. We finally have to check Property \ref{property-valid-buffer:col-buf}. By Observation \ref{obs:maintain-nature-color-waiting-2},  the regions $R_{i+1}, \ldots, R_{N-1}$ remain either waiting or color regions as well as regions $R_{s+1}, \ldots, R_{i-1}$ that are not modified. Let us concentrate on $R_i$. Note that in the three cases, $\nu_{Q}(m) = \nu'_{Q}(m)$ for every $m \neq p$ and $Q$ block of $R_i$, as $R_i$ is either a waiting region or a color region for the class $p$. We now have to distinguish the three cases:

\begin{itemize}
    \item In Case \ref{obs:create-cancel-color-1}, let $c_1 := \nu_{C_s}(p)$. By definition of almost valid buffers, $\nu_{A_i}(p) = c_1$ and is canonical. After the recoloring, we have $\nu'_{A_i}(p) = c_1$ and $\nu'_{B_i}(p) = \nu'_{C_i}(p) = c$. Thus $R_i$ becomes a color region in $\bnu'$ for the class $p$ where $c_1$ disappears and $c$ appears. Since by hypothesis $c$ does not appear in $R_{s+1}, \ldots R_N$ in $\bnu$ and since there is no color region for the class $p$ in $\bnu$, there exists exactly one color region for the class $p$ and colors $c_1,c$ in $\bnu'$ and Property \ref{property-valid-buffer:col-buf} follows. 
    
    \item In Case \ref{obs:create-cancel-color-2}, let $c' := \nu'_{A_i}(p)$. Since $R_i$ is a color region for the class $p$ in $\bnu$ where $c$ disappears, $c'$ is canonical. As $c$ is non-canonical $R_i$ becomes a color region for the class $p$ and colors $c',c$ in $\bnu'$. Since $\bnu$ is almost valid, $R_i$ is the only color region for the class $p$ in $\bnu$. Furthermore, the color $c$ does not appear in $\bnu$. Thus, $R_i$ is the only color region for the class $p$ and colors $c', c$ in $\bnu'$ and Property \ref{property-valid-buffer:col-buf} follows.
    
    \item In Case \ref{obs:create-cancel-color-3}, we have $\nu'_{A_i}(p) = \nu'_{B_i}(p) = \nu'_{C_i}(p) = c$. Since $R_i$ is a color region for the class $p$ in $\bnu$ it becomes a waiting region in $\bnu'$ and again Property \ref{property-valid-buffer:col-buf} follows.
\end{itemize}
\end{proof}

%Let us first show that the class $k$ has the neighbourhood property. Suppose not. Then there exists a vertex $v$ of $(R_N, k)$ that has a neighbour $u$ that does not start in a clique of $\mathcal{B}$. By the definition of the tree decomposition, there exists a clique $C_{uv}$ of $T$ that contains both $u$ and $v$ and by the separation property $C_{uv} \notin T_S$. Furthermore, the sub-tree of $T$ induced by the nodes that contain $v$ is connected thus $v$ belongs to $C \cap S$ since any path from a clique of $\mathcal{B}$ to $C_{uv}$ contains both $S$ and $C$. It is a contradiction as by the definition of $\nu_C$ we have $\nu_{D_N}(\ell) = \nu_C(\ell)$ if there exists a vertex of class $\ell$ in $C \cap S$. \\

We can now give the proof of Lemma \ref{lem:step-1}:
\begin{proof}[Proof of Lemma~\ref{lem:step-1}]
Let $C$ be the clique associated with vector $\nu_C$ and $S$ be a child of $C$. Let $(\mathcal{B}, \bnu)$ be the valid buffer rooted at $S$. Assume that $D_{\mathcal{B}}(\nu_c, \bnu) > 0$. Then there exists $p \leq \omega$ such that $\nu_{C_N}(p) \neq \nu_C(p) := c$ and by Observation \ref{lem:neighbour-prop} the class $p$ is internal to $\mathcal{B}$. \\
The following sequences of recolorings only recolor blocks of $R_s, \ldots, R_N$ and all the recolorings fit in the framework of Lemma~\ref{obs:maintain-continuity}. Thus Properties \ref{property-valid-buffer:continuity},~\ref{property-valid-buffer:first-canonical} and~\ref{property-valid-buffer:transp-buf} of almost valid buffers are always satisfied. We then only have to check Properties~\ref{property-valid-buffer:col-buf} and~\hyperref[property-almost-buffer:waiting]{5'} to conclude the proof. Let us distinguish several cases:

\begin{figure}
\begin{subfigure}{.5\textwidth}
  \includegraphics[scale=0.7]{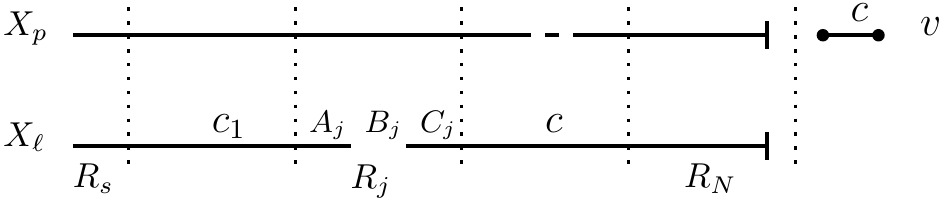}
  \caption{Case 3}
  \label{fig:step-1-case-3}
\end{subfigure}%
\begin{subfigure}{.5\textwidth}
  \includegraphics[scale=0.7]{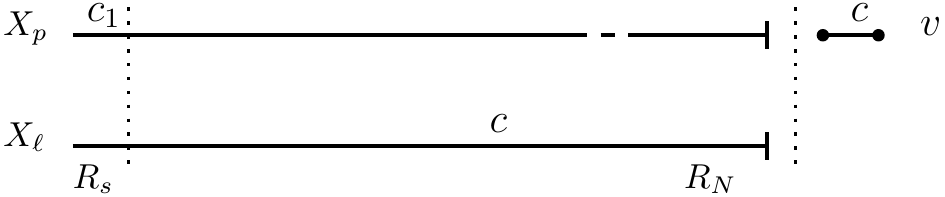}
  \caption{Case 4}
  \label{fig:step-1-case-4}
\end{subfigure}
\begin{subfigure}{.5\textwidth}
  \centering
  \includegraphics[scale=0.7]{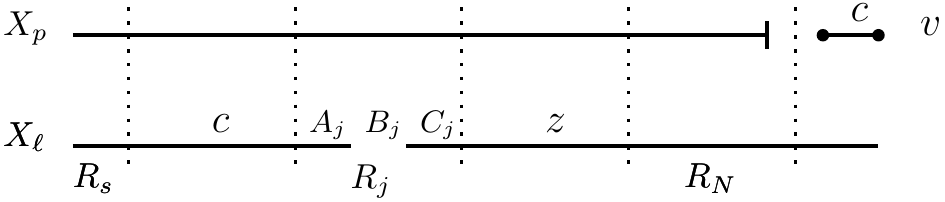}
  \caption{Case 5}
  \label{fig:step-1-case-5}
\end{subfigure}
\caption{The initial coloring $\bnu$ for cases $3$, $4$, and $5$ in the proof of Lemma~\ref{lem:step-1}. The rows represent the classes. A blank indicates a color region for the class, a dashed line indicates that the class may or may not be involved in a color region. The vertical segment at the end of the buffer indicates that the class is internal. The dotted vertical lines separate the different regions.}
\label{fig:step-1}
\end{figure}

\begin{enumerate}
    \item No class is colored with $c$ on $R_{s}, \ldots, R_N$ in $\bnu$. Then $c$ is not canonical since $\nu_{A_s}$ is a permutation of the canonical colors by Observation \ref{obs:t-buffer-1}. Suppose first that there does not exist a color region for the class $p$ in $\bnu$. Since $c$ does not appear in the color buffer, Observation \ref{obs:non-canonical-waiting} ensures that there exists a waiting region $R_i$ with $s < i < N$. Then by Lemma \ref{obs:create-cancel-color}.\ref{obs:create-cancel-color-1}, we can recolor $(B_i, \ldots, C_N, p)$ with $c$ and obtain an almost valid buffer. Suppose otherwise that there exists a color region $R_j$ for the class $p$ in $\bnu$. By Lemma \ref{obs:create-cancel-color}.\ref{obs:create-cancel-color-2}, we can recolor $(B_j, \ldots, C_N, p)$ with $c$ and obtain an almost valid buffer. In both cases, the border error decreases.
    
    \item The class $p$ is colored with $c$ on a sequence of consecutive blocks of $R_s, \ldots, R_N$. As $c \neq \nu_{C_N}(p)$, Observation~\ref{obs:c-buffer-1} ensures that $c$ and disappears in a color region $R_i$ for the class $p$ and that $c$ is canonical. By Lemma~\ref{obs:create-cancel-color}.\ref{obs:create-cancel-color-3}, we can recolor $(B_i, \ldots, C_N, p)$ with $c$ and obtain an almost valid buffer where the border error has decreased.
    
    \item A class $\ell \neq p$ is colored with $c$ and $c$ is not canonical (see Figure~\ref{fig:step-1-case-3} for an illustration).
    Since $(\mathcal{B}, \bnu)$ is valid, $c$ appears in a color region $R_j$ with $s < j < N$ for the class $\ell$ and a canonical color $c_1$. By Observation~\ref{obs:c-buffer-1-bis}, we have $\nu_{C_N}(\ell) = \nu_C(p) = c$, thus $\nu_{C_N}(\ell) \neq \nu_C(\ell)$ which implies by Observation \ref{lem:neighbour-prop} that the class $\ell$ is internal. Thus, by Lemma \ref{obs:create-cancel-color}.\ref{obs:create-cancel-color-3}, we can recolor $(B_j, \ldots, C_N, \ell)$ with $c_1$ and obtain an almost valid coloring $\bnu^{tmp}$. Since $(\mathcal{B}, \bnu)$ is valid and $R_s$ is not recolored, $(\mathcal{B}, \bnu^{tmp})$ is valid. Moreover, since $\nu_{C_N}(\ell) \neq \nu_C(\ell)$, $D_{\mathcal{B}}(\nu_C, \bnu^{tmp}) \leq D_{\mathcal{B}}(\nu_C, \bnu)$. As no class is colored with $c$ in $\bnu^{tmp}$, we can apply case 1 to $(\mathcal{B}, \bnu^{tmp})$. 
    
    \item A class $\ell \neq p$ is colored with $c$, $c$ is canonical and does not disappear in the color buffer  (see Figure~\ref{fig:step-1-case-4} for an illustration). Since $c$ is canonical, the class $\ell$ is colored with $c$ on $R_s, \ldots, R_N$ by Observation \ref{obs:c-buffer-1}. Since $\nu_{C_N}(\ell) = c = \nu_C(p) \neq \nu_C(\ell)$ the class $\ell$ is internal. \\
    Let $z, z'$ be two non-canonical colors and let $c_1 = \nu_{A_s}(p)$. Suppose first there exists a color $y$ that is not contained in $B_s, C_s, R_{s+1}, \ldots, R_N$ in $\bnu$. Then we apply the following recolorings: 
    \begin{enumerate}[label=(\arabic*)]
        \item Recolor $(B_s, p)$ with $z$ and $(B_s, \ell)$ with $z'$,
        \item Recolor $(C_s,\ldots, C_N, \ell)$ with $y$,
        \item Recolor $(C_s,\ldots, C_N, p)$ with $c$,
        \item Recolor $(C_s,\ldots, C_N, \ell)$ with $c_1$.
    \end{enumerate}
    If such a color $y$ does not exist then in particular all the non-canonical colors appear in $(\mathcal{B}, \bnu)$. Note that in that transformation we do not assume that $y$ is not canonical. Since $k \geq \omega+3$ and every non-canonical color appears in at most one class of the color buffer, there exists a class $q \notin \{p, \ell\}$ and a color region $R_j$ where some canonical color $y$ disappears and the non-canonical color $z''$ appears. Free to modify $z$ and $z'$, we can assume that $z'' \notin \{z, z'\}.$ Then we can add the following recolorings to the sequence: 
    \begin{enumerate}[label=(\arabic*)]
        \item[(0)] Recolor $(B_s, \ldots, A_j, q)$ with $z''$,
        \setcounter{enumi}{4}
        \item[(5)] Recolor $(B_s, \ldots, A_j, q)$ with $y$.
    \end{enumerate}
    And after recoloring $0$, the color $y$ is not contained anymore in $B_s,C_s, R_{s+1}, \ldots, R_N$. So we can apply recolorings $1$ to $4$. Let us justify that the colorings are proper. Recoloring $0$ is proper since $q$ is the only class that contains $z''$ in $R_s, \ldots, R_N$ by Observation \ref{obs:c-buffer-1-bis}. Recoloring $1$ is proper by the separation property and the fact that after recoloring $0$, the only non-canonical color in region $R_s$ is $z'' \notin \{z, z'\}$ (if exists). Recolorings $2$, $3$ and $4$ are proper since the classes $p$ and $\ell$ are internal and, by the separation property, after recoloring $0$, $(C_s,\ldots, C_N, \ell)$ is $y$-free, after recolorings $1$ and $2$ $(C_s,\ldots, C_N, p)$ is $c$-free and after recolorings $1$ and $3$ $(C_s,\ldots, C_N, \ell)$ is $c_1$-free. Finally recoloring $5$ is proper since after recoloring $4$, the only class colored with $y$ on $R_s,\ldots, R_N$ is the class $q$ on $A_s$.\\
    Let us show Properties~\ref{property-valid-buffer:col-buf} and~\hyperref[property-almost-buffer:waiting]{5'} of almost valid buffers are satisfied. First note the colorings $0$ and $5$ cancel each other thus the coloring is only modified on classes $p$ and $\ell$. Furthermore recolorings $3$ and $4$ consist in swapping the coordinates $p$ and $\ell$ on the regions $R_{s+1}, \ldots, R_N$. By Observation \ref{obs:swapping-colors}, the regions $R_{s+1}, \ldots, R_N$ remain either waiting or color regions (in particular $R_N$ remains a waiting region). As furthermore no color region is created Property \ref{property-valid-buffer:col-buf} holds. Finally note that in $\bnu'$, $R_s$ is a transposition region and then Property \hyperref[property-almost-buffer:waiting]{5'} follows.
    
    \item A class $\ell \neq p$ is colored with $c$, $c$ is canonical, and $c$ disappears in the color region $R_j$ for class $\ell$ where the non-canonical color $z$ appears  (see Figure~\ref{fig:step-1-case-5} for an illustration). Let $z' \neq z$ be a non-canonical color and let $c_1 = \nu_{A_s}(p)$. We apply the following recolorings:
    \begin{enumerate}[label=(\arabic*)]
        \item Recolor $(B_s, \ldots, A_j, \ell)$ with $z$,
        \item Recolor $(B_s, p)$ with $z'$,
        \item Recolor $(C_s,\ldots, C_N, p)$ with $c$,
        \item Recolor $(C_s, \ldots, A_j, \ell)$ with $c_1$.
    \end{enumerate}
    Recoloring $1$ is proper since Observation \ref{obs:c-buffer-1-bis} ensures that the only class colored with $z$ in $R_s, \ldots, R_N$ is the class $\ell$ on $(B_j, \ldots, C_N)$. By the separation property, $(B_s, \ldots, A_j, \ell)$ is $z$-free in $\bnu$. Recoloring $2$ is proper as the only non-canonical color in $R_s$ after recoloring $1$ is $z \neq z'$. Recoloring $3$ is proper as the class $p$ is internal and thus after recoloring $1$, $(C_s,\ldots, C_N, p)$ is $c$-free by the separation property. Finally, recoloring $4$ is proper as after recolorings $2$ and $3$, $(C_s, \ldots, A_j, \ell)$ is $c_1$-free by the separation property.
    
    Let us show that the resulting coloring defines an almost valid buffer. First note that regions $R_{j+1}, \ldots, R_N$ are only modified by recoloring $3$ and Observation \ref{obs:maintain-nature-color-waiting-2} ensures they remain either waiting or color regions. In particular $R_N$ remains a waiting region. Note that after recolorings $3$ and $4$, $\bnu'$ on regions $R_{s+1}, \ldots, R_{j-1}$ is obtained from $\bnu$ by swapping coordinates $p$ and $\ell$ (on these regions). Thus the nature of these regions is maintained by Observation \ref{obs:swapping-colors}. Since $R_j$ was a color region for the class $\ell$ and colors $c,z$ in $\bnu$, $B_j,C_j$ are not modified and since $\nu'_{A_j}(\ell)=c_1$, $R_j$ is a color region for class $\ell$ and colors $c_1, z$ in $\bnu'$. So the regions $R_{s+1}, \ldots, R_N$ remain either waiting or color regions. Furthermore no new color region is created and colors $c_1, z$ are involved in exactly one color region thus Property \ref{property-valid-buffer:col-buf} is satisfied.
    Finally, $R_s$ is indeed a transposition region in $\bnu'$ since $\bnu$ is a valid buffer and $z$ and $z'$ are non-canonical colors, thus Property \hyperref[property-almost-buffer:waiting]{5'} holds.
    \end{enumerate}
    
    Note that by Lemma \ref{obs:choose-temporary} we can always suppose that up to a recoloring the temporary colors $z, z'$ used in $R_s$ are the same than the ones used in the transposition buffer of $(\mathcal{B}, \bnu')$. So in every case we are able to decrease the border error of $(\mathcal{B}, \bnu)$ by one by recoloring each coordinate of $R_s, \ldots, R_N$ at most three times and obtain an almost valid buffer $(\mathcal{B}, \bnu')$, which completes the proof.
\end{proof}

\section{Proof of Lemma \ref{lem:step-2}}\label{ssec:pf-step2}
The proof distinguishes two cases:
\smallskip

\noindent\textbf{Case 1:} there is a region of the transposition buffer of $\mathcal{B}$ that is a waiting region.\\ 
The core of the proof is the following lemma:

\begin{lemma}\label{lem:transp-shift}
Let $(\mathcal{B}, \bnu)$ be an almost valid buffer and $R_i, R_{i+1}$ be two consecutive regions with $1 < i < s$ such that $R_i$ is a waiting region and $R_{i+1}$ is a transposition region. Then there exists a recoloring sequence of $R_i \cup R_{i+1}$ such that, in the resulting coloring $\bnu'$, $R_i$ is a transposition region, $R_{i+1}$ is a waiting region, and $(\mathcal{B}, \bnu')$ is almost valid. Moreover only coordinates of $R_i \cup R_{i+1}$ are recolored at most twice.
\end{lemma}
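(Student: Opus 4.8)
The plan is to keep the block $A_i$ untouched --- so that the boundary with $R_{i-1}$ is preserved --- and to slide the transposition carried by $R_{i+1}$ one region to the left. Since $R_i$ is a waiting region and $i\le s$, Observation~\ref{obs:t-buffer-1} tells us that the common vector $\mu:=\nu_{A_i}=\nu_{B_i}=\nu_{C_i}$ is a permutation of the canonical colors, and the continuity property gives $\nu_{A_{i+1}}=\mu$. Let $p,q$ be the classes transposed in $R_{i+1}$, set $c_1:=\mu(p)$ and $c_2:=\mu(q)$, let $z,z'$ be the temporary colors of $R_{i+1}$ (so $\nu_{B_{i+1}}$ equals $\mu$ except $\nu_{B_{i+1}}(p)=z$ and $\nu_{B_{i+1}}(q)=z'$), and let $\mu'$ be $\mu$ with the values on $p$ and $q$ exchanged, so $\nu_{C_{i+1}}=\mu'$. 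If $i=s-1$ (the only case in which $R_{i+1}=R_s$ may use temporary colors different from those of the rest of the transposition buffer) we first apply Lemma~\ref{obs:choose-temporary} at most twice to $R_{i+1}$ so that $z,z'$ agree with the temporary colors of the transposition regions of $R_2,\dots,R_{s-1}$, if any; this recolors $B_{i+1}$ at most twice. The target coloring $\bnu'$ is: $A_i$ still $\mu$; $B_i$ equal to $\mu$ except $B_i(p)=z$, $B_i(q)=z'$; and $C_i=A_{i+1}=B_{i+1}=C_{i+1}=\mu'$. In $\bnu'$, $R_i$ is a transposition region transposing $p,q$ with temporary colors $z,z'$ and $R_{i+1}$ is a waiting region, while $A_i$ and $C_{i+1}$ --- hence everything outside $R_i\cup R_{i+1}$ and the boundaries with $R_{i-1},R_{i+2}$ --- are unchanged.

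The obstacle is that one cannot recolor a class directly from $c_1$ to $c_2$ on a block while its sibling class still holds $c_2$: the swap of $c_1$ and $c_2$ must be routed through the temporary colors, and every intermediate tuple of vectors must remain a proper vectorial coloring --- i.e.\ no color may sit on two distinct classes of two consecutive blocks --- which by the separation property (Observation~\ref{obs:separation}) is enough to make the corresponding single-vertex recolorings of $G$ proper. I would recolor one class of one block at a time, in four phases. \emph{Phase A}: recolor $(B_i,p)$ with $z$, $(B_i,q)$ with $z'$, $(C_i,p)$ with $z$, $(C_i,q)$ with $z'$; each step is proper because the blocks adjacent to the one being recolored are among $A_i$, $A_{i+1}$ (both still $\mu$) and the partially modified $B_i$, which carry $z$ or $z'$ only on the class currently being recolored. \emph{Phase B}: recolor $(A_{i+1},q)$ with $z'$, then $(A_{i+1},p)$ with $c_2$, then $(A_{i+1},q)$ with $c_1$, turning $A_{i+1}$ into $\mu'$; this is legal because $A_{i+1}$ is now flanked by $C_i$ and $B_{i+1}$, both colored only with temporary colors on $p,q$, so $c_1$ and $c_2$ never meet across an edge. \emph{Phase C}: recolor $(B_{i+1},p)$ with $c_2$ and $(B_{i+1},q)$ with $c_1$, turning $B_{i+1}$ into $\mu'$; proper since both neighbours $A_{i+1}$ and $C_{i+1}$ are already $\mu'$. \emph{Phase D}: recolor $(C_i,p)$ with $c_2$ and $(C_i,q)$ with $c_1$, turning $C_i$ into $\mu'$; proper since $B_i$ carries only temporary colors on $p,q$ and $A_{i+1}$ is already $\mu'$.

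After these recolorings the coloring is exactly $\bnu'$. A direct count shows the only coordinates recolored twice are $p$ and $q$ on $C_i$, $q$ on $A_{i+1}$, and (in the corner case) $p$ and $q$ on $B_{i+1}$ --- every other coordinate at most once --- and nothing outside $R_i\cup R_{i+1}$ is touched. Checking $(\mathcal{B},\bnu')$ is almost valid is then routine: Property~\ref{property-valid-buffer:continuity} holds since $\nu'_{C_i}=\nu'_{A_{i+1}}=\mu'$ and all other boundaries are unchanged; Properties~\ref{property-valid-buffer:first-canonical}, \ref{property-valid-buffer:col-buf} and~\hyperref[property-almost-buffer:waiting]{5'} hold because $R_1$, $R_N$, the color buffer and $R_s$ (which, when it equals $R_{i+1}$, has become a waiting region) were untouched; and Property~\ref{property-valid-buffer:transp-buf} holds because $R_i$ is now a transposition region using the temporary colors of the transposition buffer and $R_{i+1}$ is a waiting region. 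Finally, since no recoloring touches $A_1$ or $C_N$, this sequence of proper vectorial recolorings lifts to a proper sequence of single-vertex recolorings of $G$ by Observation~\ref{obs:vector-recol-buff}, which finishes the proof. The main place where the argument is delicate, and where I would spend the most care, is keeping every intermediate vectorial coloring proper in Phases B and D (ensuring the two canonical colors really are insulated by temporary colors on both sides) and the temporary-color bookkeeping of the $i=s-1$ corner case.
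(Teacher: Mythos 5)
Your proof is correct and takes essentially the same approach as the paper: slide the transposition from $R_{i+1}$ to $R_i$ by recoloring block by block, routing the two swapped canonical colors through the temporary colors $z,z'$ so that every intermediate tuple remains a proper vectorial coloring and no coordinate is recolored more than twice, ending at the same target ($C_i=A_{i+1}=B_{i+1}=\mu'$, with $B_i$ carrying $z,z'$ on the two swapped classes). The paper organizes the recolorings into four symmetric sweeps ($p\to z$ on $B_i,C_i,A_{i+1}$; then $\ell\to z'$ on the same blocks; then $p$ and $\ell$ to their final values on $C_i,A_{i+1},B_{i+1}$), whereas your Phases A--D do the same job in a slightly different block-by-block order; both variants are valid and give the same per-coordinate bound. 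The one genuine addition on your side is the explicit normalization, via Lemma~\ref{obs:choose-temporary}, of $R_{i+1}$'s temporary colors in the corner case $i=s-1$ so that the newly created transposition region $R_{s-1}$ matches the temporary colors of the rest of the transposition buffer and Property~\ref{property-valid-buffer:transp-buf} is preserved; the paper's proof of this lemma takes $z,z'$ from $R_{i+1}$ without comment and leaves that corner case to the normalization performed at the end of the proof of Lemma~\ref{lem:step-1}.
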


\begin{proof}
Let $p,\ell$ be the classes permuted in $R_{i+1}$. Let $c_1 = \nu_{A_{i+1}}(\ell) = \nu_{C_{i+1}}(p)$ and $c_2 = \nu_{A_{i+1}}(p) = \nu_{C_{i+1}}(\ell)$ and $z,z'$ be the temporary colors of $R_{i+1}$. Recall that since $i < s$, Observation \ref{obs:t-buffer-1} ensures that $R_i$ only contains canonical colors. We apply the following recolorings:
\begin{enumerate}
    \item Recolor $(B_i, C_i, A_{i+1}, p)$ with $z$,
    \item Recolor $(B_i, C_i, A_{i+1}, \ell)$ with $z'$,
    \item Recolor $(C_i, A_{i+1}, B_{i+1}, p)$ with $c_1$,
    \item Recolor $(C_i, A_{i+1}, B_{i+1}, \ell)$ with $c_2$.
\end{enumerate}
Since $R_i$ is a waiting region that only contains canonical colors and $R_{i+1}$ is a transposition region for classes $p$ and $\ell$ in $\bnu$, recolorings $1$ and $2$ are proper by the separation property. After the first two recolorings, the color $c_1$ (resp. $c_2$) is only contained in $(A_i, \ell)$ and $(C_{i+1}, p)$ in $R_i \cup R_{i+1}$ (resp. $(A_i, p)$ and $(C_{i+1}, \ell)$). Thus the separation property ensures that recolorings $3$ and $4$ are proper.

One can easily check that in $\bnu'$, $R_i$ is a transposition region for classes $p$ and $\ell$ and that $R_{i+1}$ is a waiting region. Let us finally prove that $(\mathcal{B},\nu')$ is almost valid.
Since $\nu'_{A_i}=\nu_{A_i}$, $\nu'_{C_{i+1}}=\nu_{C_{i+1}}$ and $\nu'_{C_{i}}=\nu_{A_{i+1}}$, the continuity property holds. The other properties are straightforward  since we only recolor the regions $R_i, R_{i+1}$ which are waiting or transposition regions in $\bnu'$.
\end{proof}

By assumption there exists a waiting region in $R_2, \ldots R_{s-1}$. Amongst all these regions let $R_i$ with $1 < i < s$ be the one with the largest index. We iteratively apply Lemma \ref{lem:transp-shift} to the regions $(R_i,R_{i+1}), (R_{i+1}, R_{i+2}), \ldots,(R_{s-1}, R_{s})$. Each class is recolored at most four times and the resulting coloring $\bnu'$ defines a valid buffer. 
\smallskip

\noindent\textbf{Case 2:} All the regions of the transposition buffer of $\mathcal{B}$ are transposition regions. \\
As there are $3{{\omega}\choose{2}}$ regions in the transposition buffer and only ${{\omega}\choose{2}}$ distinct transpositions of $\llbracket 1,\omega \rrbracket$, there must exist two distinct regions $R_i$ and $R_j$ with $1 < i < j < s$ for which the same pair of colors is transposed (note that the colors might be associated to different classes in $R_i$ and $R_j$ but it does not matter). Let us prove the following lemma:
\begin{lemma}\label{lem:transp-cancel}
Let $(\mathcal{B}, \bnu)$ be an almost valid buffer and $c_1, c_2$ be two canonical colors. If there exist two transposition regions $R_i$ and $R_j$  where colors $c_1$ and $c_2$  are transposed, then there exists a sequence of recolorings of $\cup_{t = i}^j R_t$ such that each coordinate is recolored at most twice, $R_i$ and $R_j$ are waiting regions in the resulting coloring $\bnu'$, and $(\mathcal{B}, \bnu')$ is almost valid.
\end{lemma}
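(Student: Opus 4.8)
The plan is to cancel the two transpositions $R_i$ and $R_j$ by "pushing" the inner transposition along the intermediate waiting/transposition regions until $R_i$ and $R_j$ become adjacent, and then observe that two consecutive transpositions of the same pair of colors compose to the identity and can be replaced by two waiting regions. The key point to keep in mind throughout is that between $R_i$ and $R_j$ the vectors only use canonical colors (by Observation~\ref{obs:t-buffer-1}, since $j < s$), so there is always room to maneuver with the $k \ge \omega+3$ colors; and by Lemma~\ref{obs:choose-temporary} we may assume all the transposition regions in the transposition buffer use the same temporary colors $z, z'$.

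\textbf{Step 1: Normalize the intermediate regions.} First I would use Lemma~\ref{lem:transp-shift} repeatedly to slide all the regions strictly between $R_i$ and $R_j$ so that, after recoloring each coordinate a constant number of times, the two transposition regions $R_i$ and $R_j$ are consecutive. Concretely, starting from $R_{j-1}$ and moving left, whenever $R_t$ ($i < t < j$) is a waiting region and $R_{t+1}$ is a transposition region I apply Lemma~\ref{lem:transp-shift} to swap their roles; iterating this pushes every transposition region towards $R_i$. Since the total number of regions in the transposition buffer is a constant $3\binom{\omega}{2}$, each coordinate is recolored $O(1)$ times. One must be slightly careful: this reshuffles \emph{which} transposition occurs in $R_i$ and in the region just to its right, but the crucial invariant — that somewhere among $R_i, \ldots, R_j$ there are exactly two transpositions and they are of the pair $\{c_1, c_2\}$, the rest being waiting regions — is preserved because Lemma~\ref{lem:transp-shift} only moves transpositions around, it never creates or destroys them. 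Alternatively, and perhaps cleaner, I would first slide the transposition $R_i$ rightwards until it sits immediately to the left of $R_j$.

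\textbf{Step 2: Cancel two adjacent transpositions of $\{c_1, c_2\}$.} Now assume $R_i$ and $R_{i+1}$ are both transposition regions for the pair $\{c_1, c_2\}$. By the continuity property and Observation~\ref{obs:t-buffer-1}, $\nu_{A_i}$, $\nu_{C_i} = \nu_{A_{i+1}}$ and $\nu_{C_{i+1}}$ are permutations of the canonical colors, and $\nu_{C_{i+1}}$ is obtained from $\nu_{A_i}$ by composing the two transpositions, which is the identity on the coordinates involved — so in fact $\nu_{C_{i+1}} = \nu_{A_i}$ on all coordinates that matter, and more importantly the classes carrying $c_1$ and $c_2$ in $A_i$ are the same as in $C_{i+1}$ (call them $p, q$ with $\nu_{A_i}(p) = \nu_{C_{i+1}}(p) = c_1$, $\nu_{A_i}(q) = \nu_{C_{i+1}}(q) = c_2$). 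The goal coloring $\bnu'$ has $R_i$ and $R_{i+1}$ both waiting regions colored by $\nu_{A_i}$, which means I must recolor the six blocks $B_i, C_i, A_{i+1}, B_{i+1}$ (the blocks $A_i$ and $C_{i+1}$ are already correct), namely I must set $\nu'_Q(p) = c_1$ and $\nu'_Q(q) = c_2$ on each of $B_i, C_i, A_{i+1}, B_{i+1}$. I would do this with an explicit constant-length recoloring sequence: for instance, recolor $(B_i, C_i, A_{i+1}, B_{i+1}, p)$ to $c_1$ and $(B_i, C_i, A_{i+1}, B_{i+1}, q)$ to $c_2$, interleaving with the temporary colors $z, z'$ as needed to keep each single step proper — the separation property localizes each conflict check to three consecutive blocks, and since no class in $C_i$ or $A_{i+1}$ currently uses $c_1$ or $c_2$ outside of $p, q$ (the middle blocks of transposition regions avoid $c_1, c_2$ by definition), the intermediate colorings are easily seen to be proper. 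This recolors each of the four middle coordinate-slots at most twice. Finally, to verify $(\mathcal{B}, \bnu')$ is almost valid: only regions $R_i, \ldots, R_j$ were touched, the continuity property holds because $\nu_{A_i}$ and $\nu_{C_{i+1}}$ were never changed (and likewise the boundary of the pushed region in Step~1), every region among $R_i, \ldots, R_j$ is now a waiting region while $R_2, \ldots, R_{s-1}$ outside this range are untouched, so Properties~\ref{property-valid-buffer:continuity}--\ref{property-valid-buffer:transp-buf} and~\hyperref[property-almost-buffer:waiting]{5'} all survive (in particular $R_s$ is not modified), and $R_{s+1}, \ldots, R_{N-1}, R_N$ are untouched so Property~\ref{property-valid-buffer:col-buf} survives.

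\textbf{The main obstacle} I anticipate is Step~1: carefully tracking that iteratively applying Lemma~\ref{lem:transp-shift} really does bring two transpositions of the \emph{same pair} $\{c_1, c_2\}$ adjacent to each other, rather than just bringing \emph{some} two transpositions together — sliding one transposition past another does not commute in general, so one has to argue about the relative order of the $\{c_1, c_2\}$-transposition and the others, or instead only ever slide \emph{waiting} regions (which is exactly what Lemma~\ref{lem:transp-shift} does: it moves a waiting region right past a transposition region) so that all transpositions get packed against $R_i$ with the waiting regions pushed to the right, after which $R_i$ and its right neighbor are the only two candidates and, since $\{c_1, c_2\}$ appears an even number of times (namely twice) among a block of transpositions whose composition must be... — here one invokes that actually we only need \emph{some} adjacent pair transposing $\{c_1,c_2\}$, which is guaranteed as soon as all transpositions are consecutive and $\{c_1,c_2\}$ occurs exactly twice among them, because then conjugating to bring two equal transpositions together is a standard symmetric-group manipulation, each conjugation realized by an application of Lemma~\ref{lem:transp-shift}-type moves at $O(1)$ recoloring cost per coordinate. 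The bookkeeping of the recoloring budget — showing it stays $O(1)$ per coordinate, hence compatible with the "recolored at most twice" claim — is the routine but delicate part.
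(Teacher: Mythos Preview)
Your approach is genuinely different from the paper's, and it has two real gaps.

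First, the recoloring budget. The lemma asserts that each coordinate is recolored \emph{at most twice}. Your Step~1 applies Lemma~\ref{lem:transp-shift} once per waiting/transposition pair encountered while packing the transpositions together; a fixed region $R_t$ in the interval $(R_i,R_j)$ can be involved in many such applications (as successive transpositions slide past it), so its coordinates get recolored $O(j-i)$ times, not twice. You flag this as ``routine but delicate bookkeeping'', but in fact no amount of bookkeeping saves it: the sliding strategy inherently touches each intermediate region a non-constant number of times.

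Second, Step~1 does not bring the two $\{c_1,c_2\}$-transpositions adjacent in general. Lemma~\ref{lem:transp-shift} only swaps a \emph{waiting} region with a transposition region; it cannot commute two transposition regions past each other. If between $R_i$ and $R_j$ there are other transposition regions (for different color pairs), packing all transpositions against $R_i$ leaves you with a block of consecutive transpositions in which the two $\{c_1,c_2\}$-transpositions need not be adjacent. Your ``standard symmetric-group manipulation'' via conjugation is exactly the missing ingredient --- the paper later proves a lemma (Lemma~\ref{lem:switch-transpo}) that reorders adjacent transpositions, but it is substantially more work and again costs more than two recolorings per coordinate.

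The paper avoids both problems with a direct argument: it swaps the roles of $c_1$ and $c_2$ simultaneously on every block from $C_i$ to $A_j$, using a fresh non-canonical color $z''$ (which exists because $k\ge\omega+3$) as a temporary parking spot. Concretely: recolor every occurrence of $c_1$ in $C_i,\ldots,A_j$ to $z''$; recolor every occurrence of $c_2$ to $c_1$; recolor the $z''$-slots to $c_2$; finally fix up $B_i$ and $B_j$. Each coordinate is touched at most twice, and because the net effect on each intermediate region is just the relabeling $c_1\leftrightarrow c_2$, waiting and transposition regions stay waiting and transposition regions, while $R_i$ and $R_j$ become waiting regions.
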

\begin{proof}
%We show here that there exists a sequence of recolorings of $R_p, \ldots, R_q$ that transforms $R_p$ and $R_q$ into waiting regions while maintaining an almost valid buffer. \\
Let $p,\ell$ (resp. $p'$, $\ell'$) be the classes permuted in $R_i$ (resp. $R_{j}$). Without loss of generality, we can assume that $i < j$, $\nu_{A_i}(p) = \nu_{C_i}(\ell) = \nu_{A_j}(\ell') = \nu_{C_j}(p') = c_1$ and $\nu_{A_i}(\ell) = \nu_{C_i}(p) = \nu_{A_j}(p') = \nu_{C_j}(\ell') = c_2$. By Property \ref{property-valid-buffer:transp-buf} of almost valid buffers, all the transposition regions use the same temporary colors. Let $z,z'$ be these colors and let $z'' \notin \{z, z'\}$ be another non-canonical color, which exists since $k \geq \omega + 3$. Note that $z''$ does not appear in $R_1,\ldots, R_s$. Let $I_1$ (resp. $I_2$) be the set of blocks of $C_i, \ldots, A_j$ that contains color $c_1$ (resp. $c_2$). For each block $Q \in I_1$ (resp. $Q' \in I_2$), there exists a class $p_Q$ (resp. $\ell_{Q'}$) such that $(Q, p_Q)$ (resp. $(Q',\ell_{Q'})$) is colored with $c_1$ (resp. $c_2$). We apply the following recolorings:
\begin{enumerate}
    \item For every $Q \in I_1$ recolor $(Q, p_Q)$ with $z''$,
    \item For every $Q \in I_2$ recolor $(Q, \ell_Q)$ with $c_1$,
    \item For every $Q \in I_1$ recolor $(Q_, p_Q)$ with $c_2$,
    \item Recolor $(B_i, p)$ and $(B_j, p')$ with $c_1$,
    \item Recolor $(B_i, \ell)$ and $(B_j, \ell')$ with $c_2$.
\end{enumerate}
Let us justify that the recolorings are proper. Recoloring 1 is proper since $(R_1,\ldots,R_s)$ is $z''$-free and all the blocks of $Q$ were colored with $c_1$ in $\bnu$. 
%Since $(\mathcal{B}, \bnu)$ is an almost valid buffer, given two blocks $Q, Q' \in I_1$ (resp. $I_2$) we have either $p_Q = p_{Q'}$ or $Q$ and $Q'$ are not consecutive. Thus recoloring $1$ is valid.
After recoloring $1$, $(C_i, \ldots, A_j)$ is $c_1$-free since $B_i$ and  $B_j$ does not contain $c_1$ since $R_i$ and $R_j$ are transposition regions in $\bnu$ for the color $c_1$. Then recoloring $2$ is proper. Recoloring $3$ is proper since, after recoloring $2$, $(C_i, \ldots, A_j)$ is $c_2$-free. At this point, the only class that contains $c_1$ in $R_i$ (resp. $R_j$) is the class $p$ (resp. $p'$) and the only class that contains $c_2$ in $R_i$ (resp. $R_j$) is the class $\ell$ (resp. $\ell'$). Thus, the recolorings $4$ and $5$ are proper. Note that each coordinate of $R_i, \ldots, R_j$ is recolored at most twice in this sequence.

Let us show that $(\mathcal{B}, \bnu')$ is an almost valid buffer. Properties \ref{property-valid-buffer:first-canonical}, \ref{property-valid-buffer:col-buf} and \hyperref[property-almost-buffer:waiting]{5'} are indeed satisfied since $R_1$ and $R_s, \ldots, R_N$ have not been recolored. By Observation \ref{obs:t-buffer-1}, for every $t \in \llbracket i, j-1 \rrbracket$, the blocks $C_t, A_{t+1}$ contain all the canonical colors, and thus belong to $I_1 \cap I_2$. As the color $c_1$ has been replaced by the color $c_2$ and conversely, and as no other color is modified on these blocks, the continuity property holds.
Let us prove Property \ref{property-valid-buffer:transp-buf} of almost valid buffers holds.
%For every $t$ such that $i < t < j$, colors $c_1$ and $c_2$ have been swapped on $R_t$ and, by Observation \ref{obs:swapping-colors}, $R_t$ is a waiting region or a transposition region in $\bnu'$. 
For every $t$ such that $i < t < j$, the coloring of $R_t$ in $\bnu'$ is obtained from the coloring of $R_t$ in $\bnu$ by replacing the color $c_1$ by the color $c_2$ and conversely. One can easily check that since $c_1$ and $c_2$ are canonical, $R_t$ remains either a waiting region or a transposition region after this operation.
Finally we have that in $R_i$ (resp. $R_j$), the classes $p$ and $\ell$ (resp. $p'$ and $\ell'$) are colored with the same colors $c_1$ and $c_2$ on the three blocks of the region. Thus the regions $R_i$ and $R_j$ are waiting regions and Property \ref{property-valid-buffer:transp-buf} is satisfied.
\end{proof}
The coloring $\bnu'$ of $\mathcal{B}$ obtained after applying Lemma \ref{lem:transp-cancel} is almost valid and $R_j$ is a waiting region for $\bnu'$. So case $1$ applies, which concludes the proof of Lemma \ref{lem:step-2}.

\section{Proof of Lemma \ref{lem:step-3}}\label{ssec:pf-step3}

Let $C$ be a clique associated with a vector $\nu_C$ and $S_1, \ldots, S_e$ be the children of $C$. We denote the buffer of $S_i$ by $\mathcal{B}_i = R^i_1, \ldots, R^i_N$ and its vectorial coloring by $\bnu^i)$. By assumption, for every $i \in \llbracket 1, e \rrbracket $, $(\mathcal{B}_i, \bnu^i)$ is valid and $D_{\mathcal{B}}(\nu_C, \bnu^i) = 0$. \\
The proof is divided in two main steps. We will first show that there exist sequences of recolorings such that the color buffer of all the $\mathcal{B}_i$s have the same coloring and match with the coloring of $C$. We will then show that, after this first step, there exist sequences of recolorings of the transposition buffers of the $\mathcal{B}_i$s such that the buffers all have the same coloring. \smallskip
%We will first show that for every $i \geq 2$, there exists a sequence of recolorings of $\mathcal{B}_i$ such that for every $j$ in $s, \ldots, N$ the regions $R^i_j$ and $R^1_j$ have the same coloring. Let us start with the following Lemma that shows that up to a recoloring we can change the index of a color region in a valid buffer.

\noindent\textbf{Step 1.} Agreement on the color buffers \\
%In order to prove it, let us first prove the following lemma: 
Let $R_i$ be a color region of an almost valid buffer where the color $c$ disappears and $z$ appears. The following lemma shows that, up to recoloring each coordinate of the color buffer at most once, we can "choose" the index of the region in which $c$ disappears and $z$ appears.

\begin{lemma}\label{lem:move-col-r}
Let $(\mathcal{B}, \bnu)$ be a valid buffer and $s < i,j < N$ be such that $R_i$ is a color region for a class $p \leq \omega$ and the colors $c, z$. There exists a sequence of recolorings of $R_{s+1}, \ldots, R_{N-1}$ such that each coordinate is recolored at most once and in the resulting coloring $\bnu'$, the region $R_j$ is a color region for the class $p$ and colors $c,z$ and $(\mathcal{B}, \bnu')$ is valid. Furthermore, if $R_j$ was a waiting region in $\bnu$ then $R_i$ is a waiting region in $\bnu'$ and if $R_j$ was a color region for the colors $c',z'$ in $\bnu$ then $R_i$ is a color region for the colors $c', z'$ in $\bnu'$. Other regions remain either waiting region or color regions for the same pair of colors. 

\end{lemma}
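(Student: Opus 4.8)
The plan is to move the color region "one step at a time" toward index $j$, using the structure of the color buffer. First I would reduce to the case $|i-j|=1$: if I can show that whenever $R_i$ is a color region for $(X_p,c,z)$ and $R_{i+1}$ (resp.\ $R_{i-1}$) is a waiting region or a color region for a different pair of colors, then one round of recolorings swaps the "roles" of $R_i$ and $R_{i+1}$ (resp.\ $R_{i-1}$) while recoloring each coordinate at most once, then iterating from $i$ to $j$ proves the statement — except that we must be careful that iterating still recolors each coordinate at most once overall, not once per step. To handle this, rather than literally iterating the one-step lemma, I would directly exhibit a global recoloring: by Observation~\ref{obs:c-buffer-1-bis}, the vertices of the color buffer colored with $z$ are exactly $(B_i,C_i,\ldots,C_N,p)$, and by Observation~\ref{obs:c-buffer-1} the class $p$ is colored with the canonical color $c$ on $(A_{s+1},\ldots,A_i)$; so the "interface" where $c$ and $z$ meet is precisely region $R_i$. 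Moving this interface to $R_j$ means recoloring $(A_j,\ldots,A_i,p)$ with $z$ if $j<i$, or recoloring $(B_i,C_i,\ldots,A_j,p)$ with $c$ if $j>i$ — a single recoloring of the class $p$ on a contiguous range of blocks.

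The key steps, in order, are: (1) Establish that the class $p$ is internal-free for this operation in the sense needed — since we only recolor blocks $R_{s+1},\ldots,R_{N-1}$ (never $R_N$ or $A_1$), Observation~\ref{obs:vector-recol-buff} tells us the vectorial recoloring lifts to a proper vertex recoloring of $G[\mathcal{B}]$ as long as the intermediate vectorial colorings are proper. (2) Check properness of the single recoloring of $(\cdots,p)$ with $c$ (or $z$): for $j>i$, the blocks $B_i,C_i,\ldots,A_j$ must be $c$-free for class $p$; this holds by the separation property together with the fact that, in a color buffer, the canonical color $c$ appears only as $\nu_{A_{s+1}}(p)=\cdots=\nu_{A_i}(p)=c$ (no other class carries the canonical color $c$ on those blocks, because each vector restricted to canonical colors is a partial permutation and $c$ "belongs" to class $p$ throughout the left part of the color buffer). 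Symmetrically for $j<i$ using that $z$ appears only on $(B_i,\ldots,C_N,p)$. (3) Verify that the resulting $\bnu'$ is still valid: Properties~\ref{property-valid-buffer:continuity}, \ref{property-valid-buffer:first-canonical}, \ref{property-valid-buffer:transp-buf}, \ref{property-valid-buffer:waiting} are untouched since we do not recolor $R_1,\ldots,R_s$ or $R_N$; for Property~\ref{property-valid-buffer:col-buf}, apply Observation~\ref{obs:maintain-nature-color-waiting-2}: setting $\nu_Q(p)=z$ (or $c$) on each block $Q$ of an intermediate region $R_t$ turns a waiting region into a color region for $(X_p,c,z)$ and turns a color region for a \emph{different} pair $(c',z')$ into a color region for $(c',z')$ still (since $p$ is not the class involved there), and in particular turns the old $R_i$ into a waiting region (if $R_j$ was waiting) or into a color region for $(c',z')$ (if $R_j$ was a color region for that pair), while $R_j$ becomes the color region for $(X_p,c,z)$. (4) Count recolorings: only the single class $p$ on the range of blocks strictly between (and including one of) $R_i,R_j$ is recolored, each such coordinate exactly once.

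The main obstacle I expect is step (2) — the properness check — more precisely, confirming that on the blocks being recolored with $c$, no \emph{other} class is already colored $c$, and that no block being recolored with $z$ neighbors a vertex colored $z$. This is where one genuinely uses that we are inside the \emph{color buffer} of a \emph{valid} buffer (so every region $R_{s+1},\ldots,R_{N-1}$ is a waiting or color region, hence the canonical colors form a fixed partial permutation on the left portion and each non-canonical color is confined to a single class by Observation~\ref{obs:c-buffer-1-bis}), combined with the separation property to bound $N[\cdot,p]$ within three consecutive blocks. The bookkeeping for the two directions $j<i$ and $j>i$ is symmetric but must be written out separately, and one should also note the degenerate cases $R_i=R_j$ (nothing to do) and $j$ adjacent to $i$.
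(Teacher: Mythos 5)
You correctly identify the waiting-region case: on a valid buffer, $z$ is confined to class $p$ on blocks $(B_i,\ldots,C_N)$ and $c$ is confined to class $p$ on $(A_{s+1},\ldots,A_i)$, so the $c/z$ interface can slide to $R_j$ by recoloring class $p$ alone on a contiguous range, which is exactly the paper's first case. (Note the range for $j<i$ should start at $B_j$, not $A_j$; otherwise $\nu_{A_j}(p)$ becomes $z$ while $\nu_{C_{j-1}}(p)$ stays $c$, breaking continuity.)

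The genuine gap is the case where $R_j$ is a color region for a different class $\ell\neq p$ and colors $c',z'$. You only recolor class $p$, so after your recoloring $R_j$ has \emph{two} classes ($p$ and $\ell$) whose colors change within the region, which is not a valid color region by definition, and $R_i$ has \emph{no} class changing color, so it is a waiting region rather than the color region for $(c',z')$ the lemma requires. The validity of $\bnu'$ therefore fails. The paper fixes this by simultaneously recoloring class $\ell$ on the same range $(B_i,\ldots,A_j)$ with $z'$, sliding $\ell$'s $c'/z'$ interface from $R_j$ to $R_i$ while $p$'s $c/z$ interface moves from $R_i$ to $R_j$, so the two color regions swap cleanly. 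A secondary slip: your step~(3) misquotes Observation~\ref{obs:maintain-nature-color-waiting-2} --- recoloring class $p$ uniformly on all three blocks of an intermediate waiting region keeps it a waiting region; it does not turn it into a color region for $(X_p,c,z)$.
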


\begin{proof}
We have two cases to consider, either $R_j$ is a waiting region or is a color region for a class $\ell \neq p$. 
\smallskip

\noindent\textit{Case 1.} $R_j$ is a waiting region. 
In that case we simply recolor $(B_i, \ldots, A_j, p)$ with $c$ if $i < j$ or we recolor $(B_j, \ldots, A_i, p)$ with $z$ if $i > j$.
Since $(\mathcal{B}, \bnu)$ is valid, Observation \ref{obs:c-buffer-1} ensures that the class $p$ is the only class colored with $c$ in $R_s, \ldots, R_N$. So the recoloring is proper. By Observation \ref{obs:maintain-nature-color-waiting-2}, the regions $R_{i+1}, \ldots, R_{j-1}$ are still waiting or color regions for the same pairs of colors in $\bnu'$. One can then easily check that the region $R_i$ is a waiting region in $\bnu'$, that $R_j$ is a color region for the class $p$ and colors $c,z$ in $\bnu'$ and that $(\mathcal{B}, \bnu')$ is valid.
%Let us prove that the resulting coloring $\bnu'$ gives a valid buffer.
%Properties \ref{property-valid-buffer:first-canonical}, \ref{property-valid-buffer:transp-buf} and \ref{property-valid-buffer:waiting} are satisfied since the only recolored regions are $R_i, \ldots, R_j$. Furthermore, Property~\ref{property-valid-buffer:continuity} holds by Lemma \ref{obs:maintain-continuity}. One can then easily check that the region $R_i$ is a waiting region in $\bnu'$ and that $R_j$ is a color region for the class $p$ and colors $c,z$ in $\bnu'$. By Observation \ref{obs:maintain-nature-color-waiting-2}, the regions $R_{i+1}, \ldots, R_{j-1}$ are still waiting or color regions in $\bnu'$ thus Property \ref{property-valid-buffer:col-buf} of valid buffers is satisfied and $(\mathcal{B}, \bnu')$ is valid. \\
\smallskip
\noindent\textit{Case 2.} $R_j$ is a color region for the class $\ell \neq p$ and colors $c', z'$. \\
Since the recoloring will be symmetric, we assume that $i < j$. We apply the following recolorings:
\begin{enumerate}
    \item Recolor $(B_i, \ldots, A_j, \ell)$ with $z'$
    \item Recolor $(B_i, \ldots, A_j, p)$ with $c$
\end{enumerate}
Since $(\mathcal{B},\bnu)$ is valid, Observations \ref{obs:c-buffer-1} and \ref{obs:c-buffer-1-bis} ensures the class $p$ (resp. $\ell$) is the only class colored with $c$ and $z$ (resp. $c'$ and $z'$) in $R_{s}, \ldots, R_N$. Thus the two recolorings are proper. The regions of $R_{s+1}, \ldots, R_{i-1}$ and $R_{j+1}, \ldots, R_{N-1}$ are not recolored, and  by Observation \ref{obs:maintain-nature-color-waiting-2}, the regions $R_{i+1}, \ldots, R_{j-1}$ are either waiting or color regions for the same pairs of colors in $\bnu'$. One can then easily check that regions $R_i$ is a color region for the class $\ell$ and colors $c', z'$, that region $R_j$ is a color region for the class $p$ and colors $c, z$ in $\bnu'$ and that $(\mathcal{B}, \bnu')$ is valid. 
%Let us justify that $(\mathcal{B}, \bnu')$ is valid. Properties~\ref{property-valid-buffer:first-canonical},~\ref{property-valid-buffer:transp-buf} and~\ref{property-valid-buffer:waiting} are indeed satisfied since we only recolor the regions $R_i, \ldots, R_j$. By Lemma \ref{obs:maintain-continuity}, Property~\ref{property-valid-buffer:continuity} holds. Let us finally check that Property \ref{property-valid-buffer:col-buf} holds. The regions of $R_{s+1}, \ldots, R_{i-1}$ and $R_{j+1}, \ldots, R_{N-1}$ are not recolored. And  by Observation \ref{obs:maintain-nature-color-waiting-2}, the regions $R_{i+1}, \ldots, R_{j-1}$ are either waiting or color regions in $\bnu'$. One can then easily check that regions $R_i$ is a color region for the class $\ell$ and colors $c', z'$ and that region $R_j$ is a color region for the class $p$ and colors $c, z$ in $\bnu'$. Property \ref{property-valid-buffer:col-buf} holds and $(\mathcal{B}, \bnu')$ is a valid buffer. Note that each class is recolored at most once.
\end{proof}

The following lemma will permit to guarantee that the colorings agree on the vector $\nu_{C_s}$.

\begin{lemma}\label{lem:step-3-same-col-buf}
Let $C$ be a clique associated with a vector $\nu_C$. Let $S_1, S_2$ be two children of $C$ and $\mathcal{B}_i$ be the buffer rooted at $S_i$ for $i \le 2$. Assume moreover that $(\mathcal{B}_1, \bnu)$ and $(\mathcal{B}_2, \bmu)$ are valid and satisfy $D_{\mathcal{B}_1}(\nu_C, \bnu) = D_{\mathcal{B}_2}(\nu_C, \bmu)= 0$. Then there exists a sequence of recolorings of $\cup_{j = s}^{N - 1} R^2_j$ such that in the resulting coloring $\bmu'$ of $\mathcal{B}_2$ we have $\nu_{C_s}(p) = \mu'_{C_s}(p)$ for every $p \leq \omega$, and that $(\mathcal{B}_2, \bmu')$ is valid.  Moreover each coordinate is recolored at most $12(k-\omega)$ times.   
\end{lemma}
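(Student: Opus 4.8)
The goal is to transform the coloring $\bmu$ of $\mathcal{B}_2$ on its color buffer so that the vector $\mu_{C_s}$ agrees with $\nu_{C_s} = \nu_{C_s}(\bnu)$ coordinate by coordinate, while keeping $(\mathcal{B}_2,\bmu')$ valid. Since both buffers have border error $0$, both $\nu_{C_N}$ and $\mu_{C_N}$ equal $\nu_C$ on the coordinates where $\nu_C$ is defined, but the vectors $\nu_{C_s}$ and $\mu_{C_s}$ need not coincide: they differ in where the non-canonical colors of the target are "launched". By Observation~\ref{obs:t-buffer-1}, both $\nu_{C_s}$ and $\mu_{C_s}$ are permutations of the canonical colors, so they agree on canonical coordinates automatically; the real content is to fix, for each non-canonical color $z$ that survives to $R_N$, the class $p$ from which it is launched in $\mathcal{B}_2$, making it equal to the launching class of $z$ in $\mathcal{B}_1$.

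First I would set up the correspondence: by Observation~\ref{obs:c-buffer-1-bis}, a non-canonical color $z$ either does not appear in $R_{s+1},\dots,R_N$, or there is a unique color region $R^2_j$ where $z$ appears, for a unique class $p$; and the set of vertices colored $z$ in the color buffer is exactly $(B_j,C_j,\dots,C_N,p)$. Since $D_{\mathcal{B}_i}(\nu_C,\cdot)=0$, the colors appearing on $C_N$ in $\mathcal{B}_1$ and $\mathcal{B}_2$ are the same (they match $\nu_C$), so $z$ survives to $R_N$ in $\mathcal{B}_1$ iff it does in $\mathcal{B}_2$; moreover in both buffers it ends up on the same class $p_z$ on $C_N$ (again because $\nu_{C_N}=\mu_{C_N}$ on the defined coordinates, and internal coordinates are the only ones that can differ — but see below). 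So the only discrepancy is: the color region for $z$ in $\mathcal{B}_1$ might be for class $p_z$ while in $\mathcal{B}_2$ it is for a different class; however, a color region for $z$ changes exactly one class from canonical to $z$, and by the continuity property that class must be the one carrying $z$ on all blocks from the color region onward, i.e. exactly $p_z$. Hence in fact the launching class of $z$ is forced to be $p_z$ in both buffers, and $\mu_{C_s}(p_z)$ already equals $\nu_{C_s}(p_z)=z$. So the possible disagreement between $\nu_{C_s}$ and $\mu_{C_s}$ can only occur on coordinates $p$ that are \emph{not} launching classes of any surviving non-canonical color — i.e., on coordinates where both vectors take canonical values, but possibly different ones. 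The mechanism to move a canonical value from coordinate $p$ to coordinate $\ell$ on $C_s$ is: introduce a transposition-like exchange within the color buffer using a free non-canonical color as a temporary, analogous to the constructions in Cases 4 and 5 of Lemma~\ref{lem:step-1}.

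Concretely, for each coordinate $p$ with $\mu_{C_s}(p)\neq\nu_{C_s}(p)$, both values being canonical, I would find the coordinate $\ell$ with $\mu_{C_s}(\ell)=\nu_{C_s}(p)$ and swap the canonical values of $p$ and $\ell$ across the whole color buffer $R_{s+1},\dots,R_{N-1}$: pick a non-canonical color that is currently free there (if all non-canonical colors are used, use Lemma~\ref{lem:move-col-r} first to slide an existing color region so that one of the endpoint regions becomes free, or free up a temporary by the "free color" argument of Lemma~\ref{lem:step-1}, Case~4), recolor $(B_{s+1},\dots,C_{N-1},p)$ and then $(B_{s+1},\dots,C_{N-1},\ell)$ through the temporary, landing with the values swapped — which by the argument of Observation~\ref{obs:swapping-colors} leaves every region a waiting or color region, so validity is preserved, and Lemma~\ref{obs:maintain-continuity} handles the continuity property since the recolored sequences start in $B_{s+1}$. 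Since $\nu_{C_s}$ and $\mu_{C_s}$ are both permutations of the canonical colors agreeing on the launching coordinates, a bounded number (at most $\omega$, hence $O(k-\omega)$ accounting for each coordinate costing a constant number of recolorings) of such transpositions suffices; each such transposition recolors each coordinate of the color buffer $O(1)$ times, so the total per-coordinate recoloring count is $O(k-\omega)$, well within the claimed $12(k-\omega)$.

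\textbf{Main obstacle.} The delicate point is guaranteeing the existence of a free non-canonical color to act as the temporary throughout a transposition, while simultaneously not destroying the placement of the color regions of the surviving non-canonical colors. The color buffer has exactly $k-\omega$ regions for $k-\omega$ possible non-canonical colors, so there is no slack: if all non-canonical colors survive to $R_N$, there is no free color and no waiting region in the color buffer (Observation~\ref{obs:non-canonical-waiting}). The fix is the observation that performing the swap on coordinates $p,\ell$ which are \emph{not} launching classes means $p$ and $\ell$ carry canonical colors on \emph{every} block of the color buffer (a class launching no color and not receiving one either — we may need Lemma~\ref{lem:move-col-r} to push any color region that happens to be "for class $p$ in the middle" out of the way, but such a region would have $p$ as a launching class, contradiction); so the swap is entirely among canonical colors, and within a single block we only need a temporary slot for \emph{one} coordinate at a time, for which the third "extra" color guaranteed by $k\geq\omega+3$ — reused from block to block by the separation property, exactly as in Lemma~\ref{obs:choose-temporary} — is enough, even when all genuine non-canonical colors are occupied, because that extra color is non-canonical and, being a launching-free fresh value, appears on no surviving class. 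Making this bookkeeping precise — that the temporary never collides with an occupied non-canonical color on the three relevant blocks of any region — is the crux, and it is handled exactly by the separation property together with Observation~\ref{obs:c-buffer-1-bis}.
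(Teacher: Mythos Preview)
Your analysis of where $\nu_{C_s}$ and $\mu_{C_s}$ can disagree is exactly backwards, and this makes the rest of the proposal address a non-problem while ignoring the real one.

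You correctly note that both $\nu_{C_s}$ and $\mu_{C_s}$ are permutations of the canonical colors (Observation~\ref{obs:t-buffer-1}). You also correctly argue that, for each surviving non-canonical color $z$, the launching class $p_z$ (the class with a color region in which $z$ appears) is the same in both buffers, since it is determined by $\nu_{C_N}=\mu_{C_N}=\nu_C$. But you then conclude ``$\mu_{C_s}(p_z)$ already equals $\nu_{C_s}(p_z)=z$''. This is wrong on two counts: first, $\nu_{C_s}(p_z)$ is a \emph{canonical} color (by Observation~\ref{obs:t-buffer-1} again), not $z$; second, and more importantly, even though both buffers have a color region for class $p_z$ ending in $z$, the \emph{canonical color that disappears} in that region --- which is precisely $\nu_{C_s}(p_z)$ in $\mathcal{B}_1$ and $\mu_{C_s}(p_z)$ in $\mathcal{B}_2$ --- need not be the same. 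Conversely, if $\nu_{C_N}(p)$ is canonical, then there is no color region for $p$ in either buffer, so $\nu_{C_s}(p)=\nu_{C_N}(p)=\mu_{C_N}(p)=\mu_{C_s}(p)$: the non-launching classes automatically agree. So the discrepancies at $C_s$ occur \emph{only} on launching classes, the opposite of your claim.

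This inversion breaks the whole plan. Your proposed swap on coordinates $p,\ell$ ``which are not launching classes'' is swapping two coordinates that already agree, and you never touch the coordinates that actually differ. Moreover, your recoloring sequence $(B_{s+1},\dots,C_{N-1},\cdot)$ does not recolor block $C_s$ at all, so $\mu_{C_s}$ cannot change. The paper's approach is quite different: given $p$ with $\mu_{C_s}(p)\neq\nu_{C_s}(p)=:c$, it finds $\ell$ with $\mu_{C_s}(\ell)=c$ (both $p$ and $\ell$ are launching classes), uses Lemma~\ref{lem:move-col-r} to slide their color regions to $R^2_{s+1}$ and $R^2_{s+2}$, and then performs a short recoloring that turns $R^2_s$ into a \emph{transposition region} for $p$ and $\ell$, swapping their canonical colors on $C_s$. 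This makes the buffer only almost valid, so Lemma~\ref{lem:step-2} is invoked to restore validity; each iteration costs at most $12$ recolorings per coordinate, and there are at most $k-\omega$ iterations since each disagreeing class must carry a color region.
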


\begin{proof}
Note that, since $D(\nu_C, \bnu) = D(\nu_C, \bmu)= 0$, if $\nu_{C_N}(p)$ is canonical then there is not color regions for the class $p$ in $\bnu$ or $\bmu$ and we have $\nu_{C_s}(p)=\nu_{C_N}(p)=\mu_{C_N}(p)= \mu_{C_s}(p)$. So if they differ on $C_s$ for some class $p$, there exists a color region for $p$. Assume that there exists a class $p$ such that $\mu_{C_s}(p) \neq \nu_{C_s}(p) := c$. By Observation~\ref{obs:t-buffer-1}, $\mu_{C_s}$ an $\nu_{C_s}$ are permutations of $\{1,\ldots,\omega \}$.
So there exists $\ell \neq p$ such that $\mu_{C_s}(\ell) = c$.  Since $\nu_{C_N}(p) = \mu_{C_N}(p)$ and $\nu_{C_N}(\ell) = \mu_{C_N}(\ell)$, there exists color regions for the classes $p$ and $\ell$ in $(\mathcal{B}_1, \bnu)$ and $(\mathcal{B}_2, \bmu)$. \\
By Lemma \ref{lem:move-col-r} we can assume that the color region for the class $p$ and color $c', z$ is $R_{s+1}$ and that the color region for the class $\ell$ and colors $c, z'$ is $R_{s+2}$.
Since $k \geq \omega + 3$, there exists a non-canonical color $z'' \notin \{z, z'\}$ such that $z''$ does not appear in $R^2_s, \ldots R^2_{s+2}$ in $\bmu$. We apply the following recolorings:
\begin{enumerate}
    \item Recolor $(B^2_s, \ldots, A^2_{s+2}, \ell)$ with $z''$,
    \item Recolor $(B^2_s, p)$ with $z'$,
    \item Recolor $(C^2_s, A^2_{s+1}, p)$ with $c$,
    \item Recolor $(C^2_s, \ldots, A^2_{s+2}, \ell)$ with $c'$.
\end{enumerate}
Let us call $\bmu''$ the resulting coloring.
Recoloring $1$ is proper since $(B_s^2, \ldots, A_{s+2}^2)$ is $z''$-free in $\bmu$. Since $(\mathcal{B}_2, \bmu)$ is valid, the only non-canonical color in $R_s$ after recoloring $1$ is $z'' \neq z$, thus recoloring $2$ is proper. After recoloring $1$, $(C^2_s, \ldots, A^2_{s+1})$ is $c$-free and after recoloring $3$, $(C^2_s, \ldots, A^2_{s+2})$ is $c'$-free thus recolorings $3$ and $4$ are proper. 

Let us show that $(\mathcal{B}_2, \bmu'')$ is almost valid. Properties \ref{property-valid-buffer:first-canonical} and \ref{property-valid-buffer:transp-buf} are indeed satisfied and Lemma~\ref{obs:maintain-continuity} ensures the continuity property holds. Let us check Property \ref{property-valid-buffer:col-buf}. The region $R^2_{s+1}$ (resp. $R^2_{s+2}$) is a color region for the class $p$ (resp. $\ell$) and colors $c, z'$ (resp. $c', z$) in $\bmu$ and thus becomes a color region for colors $(c, z')$ (resp. $c', z$) after recoloring $3$ (resp. $4$) in $\bmu''$ and Property \ref{property-valid-buffer:col-buf} follows. Finally one can easily check that the region $R_s^2$ becomes a transposition region for the classes $p$ and $\ell$. As the region $R_N^2$ is not recolored, Property \hyperref[property-almost-buffer:waiting]{5'} follows and $(\mathcal{B}_2, \bmu'')$ is almost valid. By Lemma \ref{obs:choose-temporary}, we can assume that the temporary colors used in $R_s$ are the same that the ones used in the transposition buffer of $\mathcal{B}_2$, free to recolor the coordinates of $\mu''_{B_s}$ at most twice. We can thus apply Lemma \ref{lem:step-2} to $(\mathcal{B}_2, \bmu'')$ to obtain a coloring $\bmu'$ such that $(\mathcal{B}_2, \bmu')$ is a valid buffer.

Let us count how many times a coordinate is recolored. If a class $p$ satisfies $\nu^2_{C_s}(p) \neq \nu^1_{C_s}(p)$, then there exists a color region for the class $p$ in $\mathcal{B}_2$ thus there are at most $(k - \omega)$ 
such classes. So we may have to apply the described sequence of recolorings and Lemma \ref{lem:step-2} at most $(k - \omega)$ times. As this sequence recolors each coordinate at most $6$ times and as Lemma \ref{lem:step-2} recolors each coordinate at most $6$ times the result follows.
\end{proof}

Now we apply Lemma~\ref{lem:step-3-same-col-buf} iteratively for $(\mathcal{B}_1,\bnu^1)$ and $(\mathcal{B}_i,\bnu^i)$ where we only recolor vertices in $\mathcal{B}_i \setminus R_{N}^i$. By the separation property, it indeed implies that no vertex of $C$ is recolored. At the end of this procedure, we have recolored vertices of $\mathcal{B}_i$ for $i \ge 2$ at most $12(k-\omega)$ times and the resulting coloring (still denoted by $\bnu^i$ for convenience) is such that $(\mathcal{B}_i, \bnu^i)$ is a valid buffer that satisfies $\nu^1_{C_s}(p) = \nu^i_{C_s}(p)$ and $\nu^1_{C_N}(p) = \nu^i_{C_N}(p)$ for every $p \leq \omega$. In particular, by Observations~\ref{obs:c-buffer-1} and \ref{obs:c-buffer-1-bis}, there exists a color region for class $p$ in $\bnu^i$ if and only if there exists a color region for class $p$ in $\bnu^1$. Moreover the colors that appear and disappear are the same. So in order to be sure that $\nu^1_Q(p)=\nu^i_Q(p)$ for every block $Q$ in $R_s,\ldots,R_N$, we just have to guarantee that the color change for class $p$ are in regions with the same index in $\bnu^i$ and $\bnu^1$. We can guarantee that by iteratively applying Lemma~\ref{lem:move-col-r}. Indeed let $R_j$ be the smallest region of the color buffer where $R_j^1$ and $R_j^i$ do not have the same coloring. If $R_j^1$ is a waiting region, then there exists a waiting region $R_{j'}^2$ in $\{ R_{j+1}^i,\ldots,R_{N-1}^2 \}$ since the number of waiting regions is the same. If $R_j^1$ is a color region where $z$ appears, then by minimality of $i$, there exists $j' > j$ such that $z$ appears in $R^2_{j'}$. By Lemma~\ref{lem:move-col-r}, we can assume that the two colorings $\nu^1$ and $\nu^i$ agree up to region $j$. We will apply at most $(k-\omega)$ times Lemma~\ref{lem:move-col-r}. And then in total, we recolor every vertex of the color buffer at most $(k-\omega)$ times.
\medskip
%In the second part of the proof we show there exists a sequence of recolorings of $R^1_1, \ldots, R^1_{s-1}$ and $R^2_1, \ldots, R^2_{s-1}$ such that for every $i \in 1, \ldots, s-1$ the regions $R^1_i$ and $R^2_i$ have the same coloring and the buffers remain valid. \\

\noindent\textbf{Step 2:} Agreement on the transposition buffers. \\
Let us first remark that since the proof of Lemma \ref{lem:transp-shift} is symmetric the following holds:

\begin{lemma}\label{lem:shift-transpo-right}
Let $(\mathcal{B}, \bnu)$ be an almost valid buffer and $R_i, R_{i+1}$ be two consecutive regions with $1 < i < s$ such that $R_{i}$ is a transposition region and $R_{i+1}$ is a waiting region. Then there exists a recoloring sequence of $R_i \cup R_{i+1}$ such that in the resulting coloring $\bnu'$, $R_i$ is a waiting region, $R_{i+1}$ is a transposition region and $(\mathcal{B}, \bnu')$ is almost valid. Moreover only coordinates of $R_i, R_{i+1}$ are recolored at most twice.
\end{lemma}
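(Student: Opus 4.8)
The plan is to observe that Lemma~\ref{lem:shift-transpo-right} is the mirror image of Lemma~\ref{lem:transp-shift} and to run the very same argument with the six blocks of $R_i\cup R_{i+1}$ read in the reverse order. In Lemma~\ref{lem:transp-shift} the waiting region lies immediately to the \emph{left} of the transposition region and the transposition is pushed one region to the left; here the waiting region lies to the \emph{right} of the transposition region and we push the transposition one region to the right. The separation property is symmetric, and inside $R_i\cup R_{i+1}$ the continuity property only constrains the three boundaries $\nu_{C_{i-1}}=\nu_{A_i}$, $\nu_{C_i}=\nu_{A_{i+1}}$ and $\nu_{C_{i+1}}=\nu_{A_{i+2}}$, so swapping the roles $A_i\leftrightarrow C_{i+1}$, $B_i\leftrightarrow B_{i+1}$, $C_i\leftrightarrow A_{i+1}$ turns the present configuration into the one treated there.

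Concretely, I would proceed as follows. Let $p,\ell$ be the two classes transposed in $R_i$, let $c_1:=\nu_{A_i}(p)=\nu_{C_i}(\ell)$ and $c_2:=\nu_{A_i}(\ell)=\nu_{C_i}(p)$ (these are canonical by Observation~\ref{obs:t-buffer-1} since $i<s$), and let $z,z'$ be the temporary colors of $R_i$; by the continuity property $R_{i+1}$ is well-colored for the single vector $\nu_{C_i}$. I then apply the mirror of the four recolorings of Lemma~\ref{lem:transp-shift}:
\begin{enumerate}
    \item recolor $(C_i,A_{i+1},B_{i+1},p)$ with $z$;
    \item recolor $(C_i,A_{i+1},B_{i+1},\ell)$ with $z'$;
    \item recolor $(B_i,C_i,A_{i+1},p)$ with $c_1$;
    \item recolor $(B_i,C_i,A_{i+1},\ell)$ with $c_2$.
\end{enumerate}
Since $R_{i+1}$ is a waiting region using only canonical colors and $R_i$ is a transposition region for $p,\ell$, recolorings $1$ and $2$ are proper by the separation property; after them the colors $c_1,c_2$ survive in $R_i\cup R_{i+1}$ only on the classes $p,\ell$ of the \emph{unmodified} blocks $A_i$ and $C_{i+1}$, and since a vertex of $X_p$ (resp. $X_\ell$) has no neighbour in $X_p$ (resp. $X_\ell$) and $\nu_{A_i},\nu_{C_{i+1}}$ are permutations of the canonical colors, recolorings $3$ and $4$ are also proper. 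Only the four inner blocks $B_i,C_i,A_{i+1},B_{i+1}$ are touched (the $p$- and $\ell$-coordinates of $C_i$ and $A_{i+1}$ twice, all other coordinates at most once), so $A_i$ and $C_{i+1}$ are left unchanged.

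It then remains to check the conclusions. A direct computation gives, in the resulting coloring $\bnu'$, $\nu'_{A_i}=\nu'_{B_i}=\nu'_{C_i}=\nu_{A_i}$, so $R_i$ becomes a waiting region; moreover $\nu'_{A_{i+1}}=\nu'_{C_i}$ (so continuity holds), $\nu'_{B_{i+1}}$ carries $z,z'$ on $p,\ell$ and is canonical elsewhere, and $\nu'_{C_{i+1}}=\nu_{C_{i+1}}$, so $R_{i+1}$ becomes a transposition region for $p,\ell$ with temporary colors $z,z'$ — the same temporary colors as the rest of the transposition buffer. Since no region outside $\{R_i,R_{i+1}\}$ is modified and $A_i,C_{i+1}$ are fixed, all the defining properties of an almost valid buffer are preserved, and each coordinate of $R_i\cup R_{i+1}$ is recolored at most twice. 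The only point I would take care to write out is the bookkeeping ensuring that this sequence really is the faithful mirror of the one in Lemma~\ref{lem:transp-shift} (in particular that one reuses $R_i$'s temporary colors $z,z'$ so that the new $R_{i+1}$ stays consistent with the surrounding transposition buffer, and that the $c_1,c_2$ surviving on the endpoint blocks $A_i,C_{i+1}$ cause no conflict in steps $3$ and $4$); everything else is literally the computation already carried out there.
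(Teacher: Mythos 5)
Your proposal is correct and is exactly what the paper has in mind: the paper proves Lemma~\ref{lem:transp-shift} in detail and then states Lemma~\ref{lem:shift-transpo-right} with the one-line remark that it follows ``since the proof of Lemma~\ref{lem:transp-shift} is symmetric.'' Your four recolorings are precisely the mirror of the four in Lemma~\ref{lem:transp-shift} under $A_i\leftrightarrow C_{i+1}$, $B_i\leftrightarrow B_{i+1}$, $C_i\leftrightarrow A_{i+1}$, and your checks of properness, of the resulting vectors, of the reused temporary colors $z,z'$, and of the ``at most twice'' bound are all accurate.
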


A valid buffer $(\mathcal{B},\bnu)$ is \emph{well-organized} if the first $2 {\omega \choose 2}$ regions of the transposition buffer are waiting regions.

\begin{lemma}\label{lem:well-organised-buf}
Let $(\mathcal{B}, \bnu)$ be a valid buffer. There exists a sequence of recolorings of $R_2, \ldots, R_{s-1}$ such that the resulting coloring $(\mathcal{B}, \bnu')$ is valid and well-organized. Moreover each coordinate is recolored $O(\omega^2)$ times.
\end{lemma}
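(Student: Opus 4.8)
\emph{Overview.} The plan is to first create enough waiting regions inside the transposition buffer and then to move all of them to its left end. Recall that in a valid buffer the transposition buffer $R_2,\dots,R_{s-1}$ consists of exactly $s-2 = 3{\omega \choose 2}$ regions, each of which is a waiting or a transposition region, all transposition regions sharing the same pair of temporary colors, and that being \emph{well-organized} asks that the first $2{\omega \choose 2}$ of these regions be waiting regions. Throughout, every recoloring performed will stay inside $R_2,\dots,R_{s-1}$, so $R_1$, $R_s$, $R_N$ and the whole color buffer are never touched; in particular we never leave the class of valid (and not merely almost valid) buffers.

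\emph{Step 1: creating $2{\omega \choose 2}$ waiting regions.} As long as the transposition buffer contains strictly more than ${\omega \choose 2}$ transposition regions, the pigeonhole principle applied to the ${\omega \choose 2}$ pairs of canonical colors yields two transposition regions $R_i,R_j$ transposing the same pair $\{c_1,c_2\}$; I apply Lemma~\ref{lem:transp-cancel} to them, turning both into waiting regions and recoloring only blocks of $R_i,\dots,R_j \subseteq R_2,\dots,R_{s-1}$. Since replacing $c_1$ by $c_2$ and conversely on an intermediate region keeps it a waiting or a transposition region with the same temporary colors, the resulting buffer is still valid. Each application decreases the number of transposition regions by two, so after at most ${\omega \choose 2}$ applications the transposition buffer has at least $2{\omega \choose 2}$ waiting regions, hence at most ${\omega \choose 2}$ transposition regions.

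\emph{Step 2: sorting.} I now push the at most ${\omega \choose 2}$ remaining transposition regions into the last ${\omega \choose 2}$ positions $R_{2{\omega \choose 2}+2},\dots,R_{s-1}$ of the transposition buffer, processing them from right to left. The rightmost one is moved rightward one step at a time, each step being an application of Lemma~\ref{lem:shift-transpo-right} to it together with the waiting region immediately to its right, until it reaches $R_{s-1}$; the next transposition region is then moved rightward until it becomes adjacent to the previous one; and so on. Every step recolors only two consecutive regions of $R_2,\dots,R_{s-1}$ — in particular it never involves $R_s$, so the forbidden index $i=s-1$ of Lemma~\ref{lem:shift-transpo-right} is never used — so validity is preserved, and at the end the first $2{\omega \choose 2}$ regions of the transposition buffer are all waiting, i.e.\ $(\mathcal{B},\bnu')$ is valid and well-organized.

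\emph{Counting and the main difficulty.} Step 1 consists of at most ${\omega \choose 2}$ applications of Lemma~\ref{lem:transp-cancel}, each recoloring every coordinate of every block at most twice, hence $O(\omega^2)$ recolorings per coordinate. In Step 2 a fixed region is involved in a shift only while one of the at most ${\omega \choose 2}$ transposition regions sits on it or immediately to its left; because each of these moves monotonically to the right and they are treated in order, each crosses a given position at most once, so each region undergoes $O(\omega^2)$ shift operations, each recoloring its coordinates at most twice; thus again $O(\omega^2)$ recolorings per coordinate. The only real difficulty is bookkeeping: one must check that neither step ever modifies $R_1$, $R_s$, $R_N$ or the color buffer (so that one stays among valid buffers), that the ``all transposition regions share the same temporary colors'' invariant survives Lemma~\ref{lem:transp-cancel} and Lemma~\ref{lem:shift-transpo-right}, and that the recoloring bound is proved per $(\text{class},\text{block})$ pair and not merely in total — which is precisely why the transposition regions are processed from right to left in Step 2.
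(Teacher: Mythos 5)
Your proof is correct and follows essentially the same two-step route as the paper: first iterate Lemma~\ref{lem:transp-cancel} (via pigeonhole on the $\binom{\omega}{2}$ color pairs) to cut the number of transposition regions down to at most $\binom{\omega}{2}$, then iterate Lemma~\ref{lem:shift-transpo-right} to push the remaining transposition regions to the right end of the transposition buffer. The paper's proof is terser but identical in structure; your extra bookkeeping (right-to-left processing, per-coordinate counting) is a correct elaboration, and the one small slip — calling $i=s-1$ a "forbidden index" of Lemma~\ref{lem:shift-transpo-right}, when the lemma's hypothesis $1<i<s$ in fact permits it — is harmless since your algorithm never invokes it.
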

\begin{proof}
By Lemma~\ref{lem:transp-cancel}, we can assume that (free to recolor at most $O(\omega^2)$ times each vertex of the transposition buffer) there are at most ${\omega \choose 2}$ transposition regions in the transposition buffer of $(\mathcal{B},\bnu)$ and at least $2 {\omega \choose 2}$ waiting regions. By Lemma~\ref{lem:shift-transpo-right}, we can then assume that the first $2 {\omega \choose 2}$ regions of the transposition buffer are waiting regions. Note that performing all these transformations require at most $O(\omega^2)$ recolorings of each coordinate.
\end{proof}
Lemma \ref{lem:well-organised-buf} ensures that we can assume that $(\mathcal{B}_i, \bnu^i)$ is well-organized for every $i \in \llbracket 1, e \rrbracket$.
Let $(\mathcal{B}, \bnu)$ be a valid buffer and $R_j$ with $1 < j < s$ be a region of the transposition buffer. By Observation \ref{obs:t-buffer-1}, $\nu_{A_j}$ and $\nu_{C_j}$ are permutations of $\llbracket 1, \omega \rrbracket$ and there exists a unique transposition $\tau_j$ (which might be the identity) such that $\nu_{C_j} = \tau_j \circ \nu_{A_j}$. Conversely, $\nu_{A_j}$ and $\tau_j$ define a unique coloring of $R_j$ (since the temporary colors used in a valid buffer are fixed). In what follows, for $i \in \llbracket 1, e \rrbracket$ and $j \in \llbracket 2, s-1\rrbracket$, $\tau^i_j$ denotes the transposition corresponding to the region $R^i_j$ of $(\mathcal{B}_i, \bnu^i)$. For the ease of notation, let $\Omega = {{\omega} \choose{2}}$. Note that after step $1$, and since we can assume that all the buffers $(\mathcal{B}_i, \nu^i)$ are well-organized, we have $\prod_{j = s-1}^{2\Omega+2} \tau_j^i = \nu^1_{A_s}$ for all $i \leq e$ (where the symbol $\prod$ denotes the composition). \\
From now on, we will only recolor the transposition buffers of $(\mathcal{B}_i, \bnu^i)$, for $i \ge 2$ to make them agree with $(\mathcal{B}_1, \bnu^1)$. Let us start with the two following lemmas:
\begin{lemma}\label{lem:insert-transpo}
Let $(\mathcal{B}, \bnu)$ be a valid buffer, $t_0,t_1 \in \llbracket 2, s-1 \rrbracket$ with $t_0 < t_1$, and $p \neq q$ be two integers in $\llbracket 1, \omega \rrbracket$. Assume that $R_{t_0}, \ldots, R_{t_1}$ are waiting regions. There exists a sequence of recolorings of $\cup_{j = t_0}^{j = t_1} R_{j}$ such that in the resulting coloring $\bnu'$, $\tau_{t_0} = \tau_{t_1} = \tau_{p,q}$ (and then $R_{t_0}$ and $R_{t_1}$ are transposition regions for the classes $p$ and $q$), for every $t_0 < i < t_1$ $R_i$ is a waiting region, and $(\mathcal{B}, \bnu')$ is valid. Moreover, each coordinate is recolored at most twice.
\end{lemma}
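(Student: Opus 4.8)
The plan is to observe that, by the continuity property (Property~\ref{property-valid-buffer:continuity}) and since $R_{t_0},\dots,R_{t_1}$ are waiting regions sitting inside the transposition buffer, all the blocks $A_{t_0},B_{t_0},C_{t_0},A_{t_0+1},\dots,C_{t_1}$ are well-colored for one and the same color vector $\nu$, which by Observation~\ref{obs:t-buffer-1} is a permutation of the canonical colors. Set $c_1:=\nu(p)$ and $c_2:=\nu(q)$; these are two distinct canonical colors. Let $z\ne z'$ be the two temporary colors used by the transposition buffer of $\mathcal{B}$ (if the transposition buffer currently contains no transposition region, pick any two distinct non-canonical colors; they exist since $k\ge\omega+3$). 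Let $\nu'$ be the color vector obtained from $\nu$ by swapping the coordinates $p$ and $q$. The target coloring $\bnu'$ leaves every block outside $B_{t_0},\dots,B_{t_1}$ untouched (in particular $A_{t_0}$ and $C_{t_1}$ keep the vector $\nu$), makes $R_{t_0}$ and $R_{t_1}$ transposition regions for the classes $p,q$ with temporary colors $z,z'$ (so that $\tau_{t_0}=\tau_{t_1}=\tau_{p,q}$), and makes $R_{t_0+1},\dots,R_{t_1-1}$ waiting regions for $\nu'$.

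To reach $\bnu'$ I would perform, in this order, four multi-block recolorings, each realized via Observations~\ref{obs:vector-recol-block} and~\ref{obs:vector-recol-buff} as a sequence of single-coordinate vectorial recolorings carried out block by block: \textbf{(1)} recolor $(B_{t_0},\dots,B_{t_1},p)$ with $z$; \textbf{(2)} recolor $(B_{t_0},\dots,B_{t_1},q)$ with $z'$; \textbf{(3)} recolor $(C_{t_0},\dots,A_{t_1},p)$ with $c_2$; \textbf{(4)} recolor $(C_{t_0},\dots,A_{t_1},q)$ with $c_1$. Note that the ranges in (3) and (4) deliberately skip the two blocks $B_{t_0}$ and $B_{t_1}$: this is exactly what leaves $z$ on $B_{t_0}$ and $B_{t_1}$ (making $R_{t_0}$ and $R_{t_1}$ transposition regions), while every block of $R_{t_0+1},\dots,R_{t_1-1}$ together with $C_{t_0}$ and $A_{t_1}$ ends up carrying $c_2$ on class $p$ and $c_1$ on class $q$ (and $\nu$ elsewhere), i.e. the vector $\nu'$. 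For the recoloring count: the coordinates $(B_{t_0},p),(B_{t_1},p),(B_{t_0},q),(B_{t_1},q)$ are recolored once, every class-$p$ or class-$q$ coordinate of $C_{t_0},\dots,A_{t_1}$ is recolored exactly twice, and no other coordinate is touched, so each coordinate is recolored at most twice.

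Properness rests on the separation property (Observation~\ref{obs:separation}): the closed neighbourhood of every recolored vertex lies within $A_{t_0},\dots,C_{t_1}$, blocks that are initially colored by the canonical permutation $\nu$. Hence recolorings (1) and (2) are proper because the non-canonical colors $z$ and $z'$ do not occur in those blocks; recoloring (3) is proper because after (1)--(2) the only class that carried $c_2$ (namely $q$) has been pushed to $z'$, so $c_2$ is free on the relevant blocks; and recoloring (4) is proper because after (3) the color $c_1$ has likewise disappeared from those blocks. The only mild care needed is to process the blocks of each step, say, from left to right, and to check that the intermediate tuples are still \emph{proper vectorial colorings} — which holds because whenever a color reappears on two consecutive blocks it is carried there by the same class $p$ or $q$.

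Finally, to see that $(\mathcal{B},\bnu')$ is valid I would just check the five defining properties. Property~\ref{property-valid-buffer:continuity} holds since $C_{t_0},A_{t_0+1},\dots,A_{t_1}$ all end equal to $\nu'$ while $A_{t_0}$ and $C_{t_1}$ (hence the interfaces with $R_{t_0-1}$ and $R_{t_1+1}$) keep $\nu$; Properties~\ref{property-valid-buffer:first-canonical}, \ref{property-valid-buffer:col-buf} and~\ref{property-valid-buffer:waiting} hold because $R_1$, the color buffer and $R_s,R_N$ are not modified (recall $2\le t_0<t_1\le s-1$); and Property~\ref{property-valid-buffer:transp-buf} holds because $R_{t_0}$ and $R_{t_1}$ become transposition regions using the buffer's common temporary colors $z,z'$, $R_{t_0+1},\dots,R_{t_1-1}$ become waiting regions, and the remaining regions of the transposition buffer are unchanged. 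I do not expect a genuine obstacle here: the statement is a bookkeeping strengthening of the ``shift a transposition'' argument of Lemma~\ref{lem:transp-shift}, the only slightly delicate point being that the exchange $c_1\leftrightarrow c_2$ across the middle blocks must be routed through the temporaries $z,z'$ so that no intermediate coloring becomes improper.
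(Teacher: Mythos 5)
Your proof is correct and follows essentially the same strategy as the paper's: move classes $p$ and $q$ to temporary colors on the relevant blocks, then route them to their swapped canonical colors, checking properness via the separation property and validity of $(\mathcal{B},\bnu')$ by inspecting the five defining conditions. The only cosmetic difference is that the paper recolors class $q$ to $z'$ only on the two blocks $B_{t_0}$ and $B_{t_1}$ (rather than on every $B$-block in between) and therefore must recolor class $q$ to $c_1$ before recoloring class $p$ to $c_2$ on the middle blocks; your variant, which pushes $q$ to $z'$ everywhere first, frees the order of the last two steps and still stays within the ``at most twice'' budget.
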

\begin{proof}
Since $(\mathcal{B}, \bnu)$ is valid, $\nu_{A_{t_0}}$ is a permutation of the canonical colors by Observation \ref{obs:t-buffer-1}. As $R_{t_0}, \ldots, R_{t_1}$ are waiting regions, the continuity property ensures that the class $m$ is colored with the same canonical color in $R_{t_0}, \ldots, R_{t_1}$ in $\bnu$. Let $c_1 = \nu_{A_{t_0}}(p)$, $c_2 = \nu_{A_{t_0}}(q)$, and $z \neq z'$ be the (non-canonical) temporary colors used in the transposition buffer of $\mathcal{B}$. We apply the following recolorings:
\begin{enumerate}
    \item Recolor $(B_{t_0}, \ldots, B_{t_1}, p)$ with $z$,
    \item Recolor $(B_{t_0}, q)$ and $(B_{t_1}, q)$ with $z'$,
    \item Recolor $(C_{t_0}, \ldots, A_{t_1}, q)$ with $c_1$,
    \item Recolor $(C_{t_0}, \ldots, A_{t_1}, p)$ with $c_2$.
\end{enumerate}
Recolorings $1$ and $2$ are proper since $R_{t_0}, \ldots, R_{t_1}$ only contain canonical colors in $\bnu$ and $z \neq z'$ are non-canonical. After recoloring $1$, $(C_{t_0}, \ldots, A_{t_1})$ is $c_1$-free and after recolorings $2$ and $3$, $(C_{t_0}, \ldots, A_{t_1})$ is $c_2$-free, thus recolorings $3$ and $4$ are proper. The regions $R_{t_0+1}, \ldots, R_{t_1-1}$ are waiting regions in $\bnu'$ by Observation \ref{obs:maintain-nature-color-waiting-2}. One can then easily check that in $\bnu'$, the regions $R_{t_0}$ and $R_{t_1}$ are transposition regions for the classes $p$ and $q$ and that $(\mathcal{B}, \bnu')$ is valid. 
\end{proof}

\begin{lemma}\label{lem:well-organized-extended}
%Let $(\mathcal{B}, \bnu)$ be a valid and well-organized buffer. There exists a sequence of recolorings of $\cup_{j = 2}^{2\Omega + 1} R_j$ such that in the resulting coloring $\bnu'$, $\tau_{2+j} = \tau_{2\Omega+1-j} = \tau^1_{2\Omega+2+j}$ for all $j \in \llbracket 0, \Omega - 1 \rrbracket$ and $(\mathcal{B}, \bnu')$ is valid. Moreover each coordinate is recolored at most $2{{\omega}\choose{2}}$ times.
Let $(\mathcal{B}, \bnu)$ be a valid and well-organized buffer and let 
$T_1, \ldots, T_{\Omega}$  be transpositions of $\llbracket 1, \omega \rrbracket$. There exists a sequence of recolorings of $\cup_{j = 2}^{2\Omega + 1} R_j$ such that in the resulting coloring $\bnu'$ of $\mathcal{B}$, $\tau_{2+j} = \tau_{2\Omega+1-j} = T_{1+j}$ for all $j \in \llbracket 0, \Omega - 1 \rrbracket$ and $(\mathcal{B}, \bnu')$ is valid. Moreover each coordinate is recolored at most $2 \Omega$ times.
\end{lemma}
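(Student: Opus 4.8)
The plan is to read this lemma as an immediate, iterative consequence of Lemma~\ref{lem:insert-transpo}: I will insert the transpositions $T_1,T_2,\dots,T_{\Omega}$ one at a time, in a nested (palindromic) way, working from the outside of the block $R_2,\dots,R_{2\Omega+1}$ inwards, so that after inserting $T_j$ the two regions $R_{j+1}$ and $R_{2\Omega+2-j}$ both become transposition regions for $T_j$ while the still-untouched regions in between remain waiting regions, ready for the next, deeper insertion.

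First I would record the starting configuration. Since $(\mathcal{B},\bnu)$ is valid and well-organized, the first $2\Omega$ regions of the transposition buffer, namely $R_2,R_3,\dots,R_{2\Omega+1}$, are all waiting regions, and by the continuity property they are all well-colored for the same vector, which by Observation~\ref{obs:t-buffer-1} is a permutation of the canonical colors. Note also that $s=3\Omega+2$, so $2\Omega+1\le s-1$ and there is room for all of these regions inside the transposition buffer.

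Then, for $j$ running from $1$ to $\Omega$, I would apply Lemma~\ref{lem:insert-transpo} with $t_0=j+1$, $t_1=2\Omega+2-j$ and the transposition $T_j$; if $T_j$ is the identity this step is simply skipped, since then $R_{j+1}$ and $R_{2\Omega+2-j}$ stay waiting regions, which already realizes $\tau_{j+1}=\tau_{2\Omega+2-j}=T_j=\mathrm{id}$. At step $j$ the hypotheses of Lemma~\ref{lem:insert-transpo} hold: we have $2\le t_0<t_1\le 2\Omega+1\le s-1$ (the strict inequality $t_0<t_1$ holds because $j\le \Omega$), and the regions $R_{t_0},\dots,R_{t_1}$ are waiting regions — at $j=1$ this is the well-organized hypothesis, and for $j>1$ it is exactly the conclusion of step $j-1$, which turned $R_{j}$ and $R_{2\Omega+3-j}$ into transposition regions and guaranteed that the regions \emph{strictly} between them, i.e.\ $R_{j+1},\dots,R_{2\Omega+2-j}$, remain waiting regions. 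Moreover step $j$ only recolors blocks of $R_{j+1},\dots,R_{2\Omega+2-j}$, whereas the earlier steps $j'<j$ created transposition regions at $R_{j'+1}$ and $R_{2\Omega+2-j'}$, which lie strictly outside that range, so they are not disturbed; and since Lemma~\ref{lem:insert-transpo} also yields a valid buffer, validity is preserved throughout. (It is here, internally to Lemma~\ref{lem:insert-transpo}, that the fact $T_j^2=\mathrm{id}$ gets used to keep continuity after inserting the same transposition at both ends.)

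After the $\Omega$ steps we have $\tau_{j+1}=\tau_{2\Omega+2-j}=T_j$ for every $j\in\llbracket 1,\Omega\rrbracket$; reindexing via $j=j'+1$ with $j'\in\llbracket 0,\Omega-1\rrbracket$ gives $\tau_{2+j'}=\tau_{2\Omega+1-j'}=T_{1+j'}$, which is the claimed configuration, and all recolored blocks lie in $\cup_{i=2}^{2\Omega+1}R_i\subseteq\cup_{j=2}^{2\Omega+1}R_j$. For the count, each application of Lemma~\ref{lem:insert-transpo} recolors each coordinate at most twice and we apply it at most $\Omega$ times, hence each coordinate is recolored at most $2\Omega$ times. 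The whole argument is essentially bookkeeping; the only point requiring care — and the place I would expect a careless write-up to slip — is verifying that, after each outside-in insertion, the still-untouched middle regions genuinely remain waiting regions so that the next insertion is legal, together with the trivial but necessary handling of the case where some $T_j$ is the identity.
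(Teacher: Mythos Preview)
Your proposal is correct and follows essentially the same approach as the paper: iterate Lemma~\ref{lem:insert-transpo} from the outside in, with $t_0=j+1$ and $t_1=2\Omega+2-j$ at step $j$ (the paper indexes $j$ from $0$ to $\Omega-1$, you from $1$ to $\Omega$), skipping when $T_j$ is the identity. Your write-up is in fact more careful than the paper's in verifying that the inner regions remain waiting regions and that the previously inserted transpositions are not disturbed.
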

\begin{figure}
        \centering
        \includegraphics[scale=1.1]{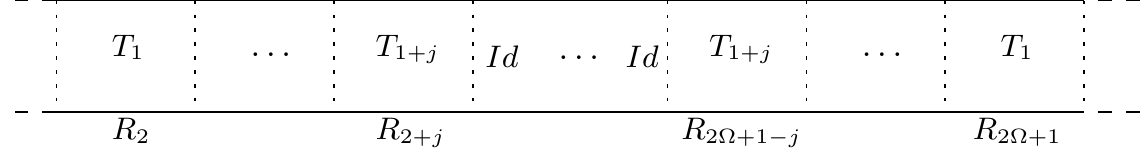}
        \caption{Coloring of regions $R_2, \ldots, R_{2\Omega+1}$ of the buffer $\mathcal{B}$, after iteration $j$ in Lemma \ref{lem:well-organized-extended}.}
        \label{fig:17-2}
\end{figure}
\begin{proof}
%For $j \in \llbracket 0, \Omega - 1 \rrbracket$ apply the following:
%\begin{itemize}
%    \item If $\tau^1_{2\Omega+2+j} = Id$, do nothing, else
%    \item Let $p \neq q$ be the classes that $\tau^1_{2\Omega+2+j}$ permutes. Apply Lemma \ref{lem:insert-transpo} to $\mathcal{B}$ with $t_0 = 2+j$, $t_1 = 2\Omega+1-j$ and integers $p$ and $q$.
%\end{itemize}
%As $(\mathcal{B}, \bnu)$ is well-organized, the regions $R_2, \ldots, R_{2\Omega + 1}$ are waiting regions. Then, after iteration $j$, we obtain a coloring of $\mathcal{B}$ such that for all $t \leq j$, $\tau_{2+t} = \tau_{2\Omega+1-t} = \tau^1_{2\Omega+2+t}$ and the regions $R_{2\Omega+j+2}, \ldots, R_{2\Omega-j}$ are waiting regions  (see figure $4$ for an illustration). Thus we can iterate and the algorithm terminates with the desired coloring $\bnu'$. As Lemma \ref{lem:insert-transpo} maintains a valid buffer, ($\mathcal{B}$, $\bnu'$) is valid. We applied Lemma \ref{lem:insert-transpo} at most $\Omega$ times, thus each coordinate is recolored at most $2\Omega$ times.
For $j \in \llbracket 0, \Omega - 1 \rrbracket$ apply the following:
\begin{itemize}
    \item If $T_{1+j} = Id$, do nothing, else
    \item Let $p \neq q$ be the classes that $T_{1+j}$ permutes. Apply Lemma \ref{lem:insert-transpo} to $\mathcal{B}$ with $t_0 = 2+j$, $t_1 = 2\Omega+1-j$ and integers $p$ and $q$.
\end{itemize}
As $(\mathcal{B}, \bnu)$ is well-organized, the regions $R_2, \ldots, R_{2\Omega + 1}$ are initially waiting regions. Then, after iteration $j$, we obtain a coloring of $\mathcal{B}$ such that for all $t \leq j$, $\tau_{2+t} = \tau_{2\Omega+1-t} = T_{1+j}$ and the regions $R_{2\Omega+j+2}, \ldots, R_{2\Omega-j}$ are waiting regions  (see figure \ref{fig:17-2} for an illustration). Thus we can iterate and the algorithm terminates with the desired coloring $\bnu'$. As Lemma \ref{lem:insert-transpo} maintains a valid buffer, ($\mathcal{B}$, $\bnu'$) is valid. We applied Lemma \ref{lem:insert-transpo} at most $\Omega$ times, thus each coordinate is recolored at most $2\Omega$ times.
\end{proof}

%Let $i \in \llbracket 2, p \rrbracket$. Let $\sigma^1_t = \tau^1_{s-1} \circ \ldots \circ \tau^1_{t+1}$ (resp. $\tau^i_{s-1} \circ \ldots \circ \tau^i_{t'+1}$) be the permutation for which $(\mathcal{B}_1, \bnu^1)$ (resp. $(\mathcal{B}_i, \bnu^i$) is well-organized. 
For every $i \in \llbracket 2, \omega \rrbracket$, we apply Lemma \ref{lem:well-organized-extended} to $(\mathcal{B}_i, \bnu^i)$ with $T_{1+j} = \tau^1_{2\Omega+2+j}$ for $j \in \llbracket 0, \Omega - 1 \rrbracket$. Recall that $\nu^1_{A_s} = \prod_{j = s-1}^{2\Omega + 2} \tau_j^1$ and that $(\nu^1_{A_s})^{-1} = \prod_{j = 2\Omega + 2}^{s-1} \tau_j^1$. Then after applying Lemma \ref{lem:well-organized-extended} to $\mathcal{B}_i$ with $i \geq 2$ we obtain a coloring $\bnu^i$ such that $\prod_{j=s-1}^{2} \tau^i_j = \nu^i_{A_s}\circ (\nu^1_{A_s})^{-1} \circ \nu^1_{A_s}$. We will show that we can use this coloring to "cancel" the initial transpositions $\tau_j^i$ (with $j \geq 2\Omega+2$) of $\mathcal{B}_i$. To do so, we need the following lemma:

\begin{lemma}\label{lem:switch-transpo}
Let $R_i, R_{i+1}$ be two consecutive regions of a valid buffer $(\mathcal{B}, \bnu)$ with $1 < i < s$ such that $\tau_{i+1} \neq Id$ and $a$ is a class permuted by $\tau_{i+1}$. There exists a recoloring sequence of $R_i \cup R_{i+1}$ such that in the resulting coloring $\bnu'$, $\tau'_i$ is either the identity or permutes the class $a$ and $\tau'_{i+1}$ does not permute the class $a$. Moreover, $(\mathcal{B}, \bnu')$ is valid and each coordinate is recolored at most $4$ times.
\end{lemma}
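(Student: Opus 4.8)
The plan is to proceed by a case analysis on the transposition $\tau_i$ attached to $R_i$. Write $\{a,b\}$ for the pair of classes permuted by $\tau_{i+1}$, let $z,z'$ be the temporary colors common to the whole transposition buffer, and note that since $k\geq\omega+3$ there is a spare non-canonical color $z''$. Because $1<i$, because $R_{i+1}$ is not a waiting region (so $i+1\neq s$ in a valid buffer), and by Observation~\ref{obs:t-buffer-1}, the vectors $\nu_{A_i}$, $\nu_{C_i}=\nu_{A_{i+1}}$ and $\nu_{C_{i+1}}$ are all permutations of the canonical colors. Throughout, I will only recolor blocks of $R_i\cup R_{i+1}$ and will never change $\nu_{A_i}$ or $\nu_{C_{i+1}}$; since the input is valid and $R_1$ and $R_s,\ldots,R_N$ are left untouched, to certify that $(\mathcal{B},\bnu')$ is valid it will be enough to check that in $\bnu'$ both $R_i$ and $R_{i+1}$ are waiting or transposition regions with temporary colors $z,z'$, and that continuity (Property~\ref{property-valid-buffer:continuity}) still holds across the $C_i/A_{i+1}$ boundary; every other property, including Property~\ref{property-valid-buffer:waiting}, is inherited. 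Keeping $\nu_{A_i}$ and $\nu_{C_{i+1}}$ fixed forces the composition of the two swaps to stay the same, which is exactly what preserves continuity with the rest of $\mathcal{B}$, and properness of every single-vertex recoloring will follow from the separation property (Observation~\ref{obs:separation}), which says that a vertex of $R_i$ (resp. $R_{i+1}$) only sees $R_i$ and its neighbouring regions.

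If $\tau_i=\mathrm{Id}$, i.e. $R_i$ is a waiting region, I would simply invoke Lemma~\ref{lem:transp-shift} on $R_i,R_{i+1}$: it produces a coloring in which $R_i$ is a transposition region for the classes $a,b$ (so $\tau'_i$ permutes $a$) and $R_{i+1}$ is a waiting region, recoloring each coordinate at most twice. If instead $\tau_i$ also permutes $\{a,b\}$, then writing $c_a=\nu_{A_i}(a)$ and $c_b=\nu_{A_i}(b)$ (both canonical) one sees that $R_i$ transposes the colors $c_a,c_b$ and, since $\nu_{A_{i+1}}=\nu_{C_i}$ colors $a,b$ with $c_b,c_a$, that $R_{i+1}$ transposes the same pair $\{c_a,c_b\}$; Lemma~\ref{lem:transp-cancel} applied to the consecutive regions $R_i,R_{i+1}$ then turns both into waiting regions ($\tau'_i=\tau'_{i+1}=\mathrm{Id}$) with each coordinate recolored at most twice. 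In both situations only $R_i,R_{i+1}$ are recolored and $R_s$ remains a waiting region, so the result is valid.

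The remaining case is $\tau_i$ permuting a pair $\{x,y\}$ meeting $\{a,b\}$ in at most one class. Here the target colorings are dictated by $\nu_{A_i}$ together with a new pair of transpositions chosen so that the composition is unchanged: if $\{x,y\}\cap\{a,b\}=\emptyset$, make $R_i$ permute $\{a,b\}$ and $R_{i+1}$ permute $\{x,y\}$ (legitimate since these transpositions commute); if $\{x,y\}=\{a,c\}$ with $c\notin\{a,b\}$, make $R_i$ permute $\{a,b\}$ and $R_{i+1}$ permute $\{b,c\}$, using the relation $(a\,b)(a\,c)=(b\,c)(a\,b)$ in the symmetric group; if $\{x,y\}=\{b,c\}$ with $c\notin\{a,b\}$, make $R_i$ permute $\{a,c\}$ and $R_{i+1}$ permute $\{b,c\}$, using $(a\,b)(b\,c)=(b\,c)(a\,c)$. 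These relations (valid regardless of the order in which the buffer composes the two swaps) show that the target agrees with the unchanged $\nu_{C_{i+1}}$, so continuity is kept. To realize the target I would give an explicit recoloring sequence in the spirit of Lemmas~\ref{lem:transp-shift} and~\ref{lem:transp-cancel}: first move through $z,z'$ (using $z''$ as a scratch color on the at most one class shared by the old and the new transposition, so that $R_i$ and $R_{i+1}$ never clash on the shared block $C_i=A_{i+1}$) the classes that must carry temporary colors in the new $B_i$ and $B_{i+1}$; then install on $C_i=A_{i+1}$ and on the relevant $B$-block the canonical colors prescribed by the new permutations, which is proper because none of those canonical colors already occurs on the block being recolored; and finally return the temporary-colored classes of the new transposition regions to $z,z'$. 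Each single step is proper by the separation property, and each of the at most six blocks of $R_i\cup R_{i+1}$ gets each of its classes recolored at most four times.

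I expect this third case to be the only real obstacle. Its difficulty is that all transposition regions of the buffer share the same temporary colors $z,z'$, so while reshaping $R_i$ and $R_{i+1}$ simultaneously one must keep them mutually consistent at the boundary block without ever repeating a temporary color across two adjacent classes; this is exactly what forces the use of the third non-canonical color $z''$ on the shared class and what pushes the recoloring bound from two up to four per coordinate. The first two cases, by contrast, reduce cleanly to the shifting and cancelling lemmas already proved.
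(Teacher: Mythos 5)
Your proposal follows essentially the same strategy as the paper: you split on how $\tau_i$ relates to $\tau_{i+1}$ (identity, identical, or overlapping in at most one class), dispatch the first two cases to Lemmas~\ref{lem:transp-shift} and~\ref{lem:transp-cancel} exactly as the paper does, and handle the last case by replacing the two transpositions with a conjugate pair via the same group-theoretic identities, using the spare non-canonical color $z''$ to avoid conflicts at the boundary block. You are less explicit than the paper about the exact block-by-block recoloring sequence in the last case (the paper spells out $8$ steps in the disjoint subcase), but the decomposition, the key algebraic observation that the composition $\tau_{i+1}\circ\tau_i$ must be preserved, and the role of $z''$ are all the same.
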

\begin{proof}
Let $z,z'$ be the temporary colors used in $(\mathcal{B}, \bnu)$ and let $z'' \notin\{z,z'\}$ be another non-canonical color. We have different cases to consider:
\begin{enumerate}
    \item $\tau_i = Id$. Then $R_i$ is a waiting region. Let $b \neq a$ be the other class permuted by $\tau_{i+1}$. We can apply Lemma \ref{lem:transp-shift} and obtain a valid buffer $(\mathcal{B}, \bnu')$ such that $R_i$ is a transposition region for classes $a$ and $b$ and $R_{i+1}$ is a waiting region. Moreover, the recolorings of Lemma \ref{lem:transp-shift} recolors each coordinate at most twice. 
    
    \item $\tau_i = \tau_{i+1}$. The regions $R_i$ and $R_{i+1}$ are consecutive and permute the same classes, thus they also permute the same colors. By Lemma \ref{lem:transp-cancel} we can recolor $R_{i},R_{i+1}$ into waiting regions and obtain a valid buffer.
    Moreover each coordinate is recolored at most twice.
    
    \item $\tau_i, \tau_{i+1}$ are transpositions which permute exactly one common class. Let  $b \neq c$ distinct from $a$ be the other classes that are permuted by $\tau_i, \tau_{i+1}$. Let $c_1 =\nu_{A_i}(a)$, $c_2 = \nu_{A_i}(b)$ and $c_3 = \nu_{A_i}(c)$. As the two transpositions only involve $\{a, b, c\}$, one of the three classes is permuted in $R_i$ and $R_{i+1}$, and by Lemma \ref{obs:choose-temporary} we can suppose this class is colored with $z$ on $B_i$ and $B_{i+1}$. As $z''$ is not contained in the transposition buffer we can recolor this class on $(B_i, \ldots, B_{i+1})$ with $z''$. The only remaining temporary colors on $R_{i},R_{i+1}$ is $z'$. Thus we can recolor one of the two other class with $z$ on $(B_i, \ldots, B_{i+1})$ and then recolor the third one with $z'$ on the same set of blocks, and these three recolorings are proper. Let $\bnu^{tmp}$ be the coloring we obtain.
    %As the two transpositions only involve three different classes there must exists a class $x_1 \in \{a,b,c\}$ that is involved in both. By Observation \ref{obs:choose-temporary} with can suppose w.l.o.g that $(B_i,C_i, x) = (B_{i+1}, C_{i+1}, x) = z$. Let $z'' \notin\{z,z'\}$ be a non-canonical color. We can apply the following recoloring:
    %\begin{enumerate}
    %    \item Recolor $(B_i, \ldots, C_{i+1}, x_1)$ with $z''$
    %\end{enumerate}
    %Clearly this recoloring is valid as no class is colored with $z''$ in the transposition buffer. At this point the only remaining temporary color in $(B_i, C_i)$ and $(B_{i+1}, C_{i+1})$ on the classes $\{x_2,x_3\} = \{a,b,c\} \backslash x$. Then we can apply the following recolorings:
    %\begin{enumerate}
    %    \item Recolor $(B_i, \ldots, C_{i+1}, x_2)$ with z
    %    \item Recolor $(B_i, \ldots, C_{i+1}, x_3)$ with z'
    %\end{enumerate}
    %Recoloring $1$ is valid since after recoloring $x_1$ no class is colored with $z$ on $A_i, \ldots, D_{i+1}$. Then recoloring $2$ is valid since after recoloring $1$ the only class to be colored with $z'$ on some blocks of $A_i, \ldots, D_{i+1}$ is $x_3$. 
    The colors $c_1, c_2, c_3$ are not in $(B_i, \ldots, B_{i+1})$ in $\bnu^{tmp}$. Suppose that $\nu^{tmp}_{C_{i+1}}(a) = c_2$. As $C_{i+1}$ has not been recolored we have necessarily $\nu^{tmp}_{C_{i+1}}(b) = c_3$ and $\nu^{tmp}_{C_{i+1}}(c) = c_1$. We apply the following recolorings:
    \begin{enumerate}
        \item Recolor $(C_i, \ldots, B_{i+1}, a)$ with $c_2$, 
        \item Recolor $(C_i, A_{i+1},b)$ with $c_1$,
        \item Recolor $(B_i, \ldots, A_{i+1}, c)$ with $c_3$.
    \end{enumerate}
    Given that colors $c_1, c_2, c_3$ are not contained in $(B_i, \ldots, B_{i+1})$ in $\bnu^{tmp}$ and given  $\nu^{tmp}_{A_i}$ and $\nu^{tmp}_{C_{i+1}}$ these recolorings are proper by the separation property. Furthermore, in the resulting coloring $\bnu'$, $R_i, R_{i+1}$ are transposition regions and $\tau'_i = \tau_{a,b}$ and $\tau'_{i+1} = \tau_{b,c}$. The case $\nu^{tmp}_{C_{i+1}}(a) = c_3$ is symmetrical.
    \item $\tau_i, \tau_{i+1}$ involves four different classes $a,b,c,d \leq \omega$. We can suppose w.l.o.g that  $\tau_i\tau_{i+1} = \tau_{c,d}\tau_{a,b}$ and by Lemma \ref{obs:choose-temporary} we can assume that $(B_i, c) = (B_{i+1}, a) = z$ and $(B_i, d) = (B_{i+1}, b) = z'$. Let $c_1 = \nu_{A_i}(a)$, $c_2 = \nu_{A_i}(b)$, $c_3 = \nu_{A_i}(c)$, $c_4 = \nu_{A_i}(d)$. We apply the following recolorings:
    \begin{enumerate}
        \item Recolor $(B_i, \ldots, B_{i+1}, a)$ with $z''$,
        \item Recolor $(C_i, \ldots, B_{i+1}, c)$ with $z$,
        \item Recolor $(C_i, \ldots, B_{i+1}, b)$ with $c_1$,
        \item Recolor $(C_i, \ldots, B_{i+1}, d)$ with $z'$,
        \item Recolor $(B_i, \ldots, A_{i+1}, c)$ with $c_3$,
        \item Recolor $(B_i, \ldots, A_{i+1}, d)$ with $c_4$,
        \item Recolor $(B_i, b)$ with $z$,
        \item Recolor $(C_i, \ldots, B_{i+1}, a)$ with $c_2$.
    \end{enumerate}
    Let us justify these recolorings are proper. Recoloring (a) is valid since $z''$ is not contained in the transposition buffer in $\bnu$. Recolorings (b) and (c) are proper since after recoloring (a), the only class colored with $z$ in $R_i, R_{i+1}$ is the class $c$ and the only class colored with $c_1$ on $(B_i, \ldots, C_{i+1})$ is the class $b$. Recoloring (d) is proper since after recoloring (c), the only class colored with $z'$ on $R_i,R_{i+1}$ is the class $d$. At this point the color $c_3$ (resp. $c_4$) is only contained in $(A_i, c)$ and $(C_{i+1}, d)$ (resp. $(A_i, d)$ and  $(C_{i+1}, c)$) in $R_i, R_{i+1}$, thus recolorings $(e)$ and $(f)$ are proper. After recoloring (e), the color $z$ is not contained in $R_i$, thus recoloring (g) is proper. Finally recoloring (f) is proper, since after recoloring (g) the only class colored with $c_2$ on $(B_i, \ldots, C_{i+1})$ is the class $a$. One can then easily check that in $\bnu'$, the region $R_i,R_{i+1}$ are transposition regions, $\tau'_{i} = \tau_{a,b}$ and $\tau'_{i+1} = \tau_{c,d}$.
\end{enumerate}
    Note that in each case, Properties \ref{property-valid-buffer:first-canonical}, \ref{property-valid-buffer:col-buf} and 
    \ref{property-valid-buffer:waiting} of valid buffers are indeed satisfied. Furthermore the continuity property holds as all the given recolorings fit in the framework of Lemma \ref{obs:maintain-continuity}. Property \ref{property-valid-buffer:transp-buf} is satisfied since all the regions of the transposition buffer in $\bnu'$ are either waiting or transposition regions that use the same temporary colors $z$ and $z'$ (up to applying Lemma \ref{obs:choose-temporary} to $R_i$ and $R_{i+1}$), thus $(\mathcal{B}, \bnu')$ is valid. Moreover each coordinate is recolored at most four times.

\end{proof}
We can then give the last lemma before concluding:
\begin{lemma}\label{lem:reduce-well-organised}
%Let $(\mathcal{B}, \bnu)$ be a valid buffer and well-organised for the permutation $\sigma_1 \circ \sigma$, where $\sigma_1 = \tau_{s-1} \circ \ldots \circ \tau_{s-1-r} = Id$. Then there exists a sequence of recolorings of $\cup_{j = 2}^{s-1} R_j$ such that in the resulting coloring $\bnu'$, $(\mathcal{B}, \bnu')$ is a valid buffer well organised for the permutation $\sigma$. Moreover each coordinate is recolored $O(\omega^2)$ times.
Let $(\mathcal{B}, \bnu)$ be a valid buffer. Suppose there exists two integers $t_0, t_1 \in \llbracket 2, s-1 \rrbracket$ with $t_0 < t_1$ such that $\prod_{j = t_1}^{t_0} \tau_j = Id$. Then there exists a sequence of recolorings of $\cup_{j = t_0}^{t_1} R_j$ such that in the resulting colorings $\bnu'$, $R_{t_0}, \ldots, R_{t_1}$ are waiting regions and $(\mathcal{B}, \bnu')$ is valid. Moreover each coordinate is recolored $O(\omega^2)$ times.
\end{lemma}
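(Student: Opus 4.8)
The plan is to argue entirely at the level of the color vectors, forgetting the graph. Recall that each region $R_j$ of the transposition buffer carries a transposition $\tau_j$ of $\llbracket 1,\omega\rrbracket$, trivial exactly when $R_j$ is a waiting region, and that the hypothesis says that $\tau_{t_0},\ldots,\tau_{t_1}$, read as a word, compose to the identity of $S_\omega$. Three of the operations already available rewrite this word \emph{locally} while keeping the buffer valid, keeping the product over $R_{t_0},\ldots,R_{t_1}$ unchanged, and recoloring the coordinates of the one or two regions involved only $O(1)$ times: Lemma~\ref{lem:transp-shift} and Lemma~\ref{lem:shift-transpo-right} exchange an adjacent waiting region and transposition region (the rewriting $\mathrm{id}\cdot\tau\leftrightarrow\tau\cdot\mathrm{id}$); Lemma~\ref{lem:transp-cancel}, after aligning the temporary colors via Lemma~\ref{obs:choose-temporary}, turns two consecutive transposition regions carrying the same pair of colors into two waiting regions (the relation $\tau\cdot\tau=\mathrm{id}$); and Lemma~\ref{lem:switch-transpo}, through its four cases, realizes the commutation $\sigma\tau=\tau\sigma$ of disjoint transpositions together with the braid-type relations expressing the three ways of writing a $3$-cycle as a product of two transpositions. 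Hence it suffices to reduce, using these rewritings, an identity word of $S_\omega$ to the all-trivial word, while controlling how often a fixed position of the word is touched.

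For the reduction itself I would invoke the classical fact that a word in the transpositions of $S_\omega$ evaluating to the identity can be brought to the empty word using only the commutation and braid-type rewritings above: an identity word is never reduced, so there is a pair of classes $\{x,y\}$ with two consecutive occurrences of the transposition $\tau_{x,y}$ which, after the transpositions between them are pushed aside by commutations and braid moves, can be made adjacent and cancelled (a ``bigon removal''). When those two occurrences become adjacent they transpose the same pair of colors, by the continuity property together with Observation~\ref{obs:t-buffer-1}, so Lemma~\ref{lem:transp-cancel} applies. I would organize each such step as $O(1)$ monotone sweeps of a boundary pointer across $R_{t_0},\ldots,R_{t_1}$, so that during one sweep a fixed region is involved in only $O(1)$ applications of the three rewriting lemmas. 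Each bigon removal destroys two non-trivial transpositions, so the number of sweeps is at most a constant times the number of non-trivial $\tau_j$'s among $R_{t_0},\ldots,R_{t_1}$, hence $O(t_1-t_0+1)=O(s)=O(\omega^2)$. When no non-trivial transposition remains, $R_{t_0},\ldots,R_{t_1}$ are all waiting regions, the buffer is still valid, and every coordinate has been recolored $O(\omega^2)$ times.

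The routine part will be checking that the four cases of Lemma~\ref{lem:switch-transpo} are exactly the commutation and braid relations, and that a waiting region inside the transposition buffer can always be slid past a neighboring transposition region. The main obstacle is the recoloring budget rather than correctness: a naive reduction — for instance eliminating the classes one at a time, which costs one ``gathering'' phase per class — recolors some coordinates $\Theta(\omega^3)$ times, so it is essential to order the cancellations (as bigon removals) so that only $O(\omega^2)$ monotone sweeps are needed, each touching a fixed region a bounded number of times, using that $t_1-t_0+1=O(\omega^2)$.
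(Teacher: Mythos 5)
Your overall strategy—rewrite the word $\tau_{t_0},\ldots,\tau_{t_1}$ at the level of transposition words using the local moves of Lemmas~\ref{lem:transp-shift}, \ref{lem:shift-transpo-right}, \ref{lem:transp-cancel} and~\ref{lem:switch-transpo}, and amortize the cost as $O(1)$ sweeps per elimination—is exactly the spirit of the paper's proof, and your budget analysis ($O(\omega^2)$ eliminations, each a bounded number of monotone sweeps, each sweep touching a region $O(1)$ times) matches the paper's.

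However, the pivotal existential step is wrong as stated. You invoke a ``classical fact'' that an identity word in transpositions always contains two occurrences of the \emph{same} transposition $\tau_{x,y}$ which can be brought adjacent by commutations and braid moves. That fact is false: the generators here are all transpositions, not a Coxeter set of adjacent transpositions, and an identity word need not repeat any generator. A concrete counterexample is
\[
\tau_{1,2}\,\tau_{3,4}\,\tau_{1,3}\,\tau_{2,4}\,\tau_{1,4}\,\tau_{2,3}=\mathrm{Id}
\]
in $S_4$, where all six transpositions are distinct. So there is no ``bigon'' present to begin with, and the reduction you want to appeal to does not exist in the static form you describe. The paper sidesteps this by tracking a fixed \emph{class} $a$ rather than a fixed transposition: it takes the rightmost non-trivial $\tau_r$, picks a class $a$ it moves, and pushes the transposition moving $a$ leftward via Lemma~\ref{lem:switch-transpo}. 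The invariant maintained is that, among $R_{t_0},\ldots,R_{t_1}$, only the currently pushed region moves $a$ (the regions already passed have been rewritten to fix $a$, and those to the right of $r$ are waiting). If the push ever reached $R_{t_0}$ without triggering case~2 of Lemma~\ref{lem:switch-transpo}, then $\prod_{j=t_1}^{t_0}\tau_j$ would move $a$, contradicting the hypothesis that the product is the identity. Hence a cancellation (the equal pair of case~2) is created dynamically by the braid-type rewrites somewhere along the sweep, and the bookkeeping closes. Replacing your appeal to the ``classical fact'' with this class-tracking argument is the fix; your sweeping and counting are then correct.
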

\begin{proof}
Let $I(\bnu)$ be the number of transpositions equal to the identity in $\tau_{t_0}, \ldots, \tau_{t_1}$ (that is the number of waiting regions in $R_{t_0}, \ldots, R_{t_1}$ for the coloring $\bnu$). If $I(\bnu) = t_1 - t_0 + 1$ we are done. Otherwise we will show that we can always recolor $\cup_{j = t_0}^{t_1} R_j$ and obtain a coloring $\bnu'$ such that the transpositions $\tau'_{t_0}, \ldots, \tau'_{t_1}$ of $R_{t_0}, \ldots, R_{t_1}$ in $\bnu'$ satisfy $\prod_{j = t_1}^{t_0} \tau_j = Id$ and $I(\bnu') > I(\bnu)$. \\
Let $t_0 < r \leq t_1$ be the maximum index such that $\tau_r \neq Id$ and $a$ be a class permuted by $\tau_r$. Let us show that there always exists $i_0 \in \llbracket 0, r - t_0 - 2 \rrbracket$ such that after applying iteratively Lemma \ref{lem:switch-transpo} to regions $R_{r-i-1}, R_{r-i}$ and the class $a$ for $i \in 0, 1, \ldots, i_0$, we obtain a coloring $\bnu''$ such that the transposition $\tau''_{i_0}, \tau''_{i_0-1}$ of $R_{i_0}, R_{i_0-1}$ in $\bnu''$ satisfy $\tau''_{i_0} = \tau''_{i_0-1}$. \\
Suppose not. Then, when applying Lemma \ref{lem:switch-transpo} to regions $R_{r-i-1}, R_{r-i}$ and class $a$ for $i \in 0, 1, \ldots, r-t_0-1$, case $2$ of Lemma \ref{lem:switch-transpo} never occures. Let $\bnu^f$ be the coloring we obtain after the last iteration and $\tau^f$ be the corresponding transpositions. Note that $R_{t_1 + 1}$ is never recolored, thus $T = \prod_{j = t_1}^{t_0} \tau^f_j = Id$. By Lemma \ref{lem:switch-transpo}, $\tau^f_{t_0}$ is the only transposition that permutes the class $a$ amongst $\tau^f_{t_0}, \ldots, \tau^f_{t_1}$ and $\tau^f_{t_0} \neq Id$. Thus $T(a) \neq a$, a contradiction. \\
We obtain a coloring $\bnu''$ such that the transpositions $\tau''_{i_0}, \tau''_{i_0-1}$ of $R_{i_0}, R_{i_0-1}$ in $\bnu''$ satisfy $\tau''_{i_0} = \tau''_{i_0-1}$. By applying Lemma \ref{lem:switch-transpo} once more to $\bnu''$, $R_{i_0}, R_{i_0-1}$ become waiting regions and we obtain a coloring $\bnu'$ such that $I(\bnu') > I(\bnu)$. Furthermore, since $R_{t_1+1}$ is not recolored,  $\prod_{j = t_1}^{t_0} \tau'_j = Id$. \\
We can iterate this process until we obtain a coloring $\bnu'$ (still called $\bnu'$ for convenience) such that $I(\bnu') = t_1-t_0+1$. Furthermore, $(\mathcal{B}, \bnu')$ is valid since Lemma \ref{lem:switch-transpo} maintains a valid buffer. \\

Let us justify the number of times a coordinate is recolored. In order to increase $I(\bnu)$ at each step, we apply Lemma \ref{lem:switch-transpo} to each region of the buffer at most twice, thus each coordinate of the transposition buffer is recolored at most $8$ times. Since $I(\bnu) \leq 3{{\omega}\choose{2}}$, each coordinate of the buffer is recolored $O(\omega^2)$ time.

\end{proof}
\begin{figure}[!h]
  \centering
  \includegraphics[scale=0.9]{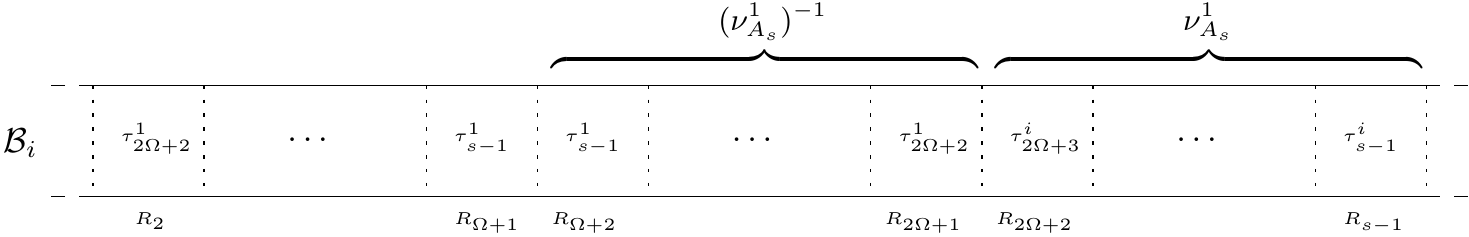}
  \caption{Coloring of the transposition buffer of $\mathcal{B}_i$ with $i \geq 2$, after step $1$ and after applying Lemmas \ref{lem:well-organised-buf} and \ref{lem:well-organized-extended} in step 2. The dotted lines separate the regions.} 
  \label{fig:17-3}
\end{figure}
Let us summarize the step $2$ of the proof. By Lemma \ref{lem:well-organised-buf} we can suppose that all the $(\mathcal{B}_i, \bnu^i)$ are well-organized. Recall that after step $1$, $\nu^i_{A_s} = \nu^1_{A_s}$ for all $i \geq 2$, thus $\prod_{j=s-1}^{2\Omega+2} \tau^i_j = \nu^1_{A_s}$. Then after applying Lemma \ref{lem:well-organized-extended} to $\mathcal{B}_i$ with $i \geq 2$ we obtain a coloring - that we still denote by $\bnu^i$ for convenience - such that for every $j \in \llbracket 2, \Omega+1 \rrbracket$, $\tau^i_j = \tau^1_{2\Omega + j}$ and for all $j \in \llbracket 0, \Omega-1 \rrbracket$, $\tau^i_{\Omega+2 + j} = \tau^1_{s-1-j}$ (see figure \ref{fig:17-3} for an illustration). Thus we have $\prod_{j=s-1}^{\Omega+2} \tau^i_j = Id$ and we can apply Lemma \ref{lem:reduce-well-organised} to $(\mathcal{B}_i, \bnu^i)$ and obtain a valid buffer coloring of $\mathcal{B}_i$ such that for all $j \in \llbracket 2, \Omega+1 \rrbracket$, $\tau^i_j = \tau^1_{2\Omega + j}$, and for every $j \in \llbracket \Omega+2, s-1 \rrbracket$, $R^i_j$ is a waiting region ($\tau^i_j = Id)$. It just remains to switch the transpositions up in the buffer using Lemma \ref{lem:well-organised-buf} to obtain a valid buffer $(\mathcal{B}^i, \bnu^i)$ such that $\bnu^i = \bnu^1$ for every $i \in \llbracket 2, e \rrbracket$. \\
Finally, note that for the proof of Lemma \ref{lem:step-3}, we applied Lemmas \ref{lem:step-3-same-col-buf}, \ref{lem:well-organized-extended} and \ref{lem:reduce-well-organised} once to each $\mathcal{B}_i$ for $i \in \llbracket 2, e \rrbracket$, and we applied Lemma \ref{lem:well-organised-buf} at most twice to each $\mathcal{B}_i$ for $i \in \llbracket 1, e \rrbracket$. Thus every coordinate is recolored $O(\omega^2)$ times, which concludes the proof.

%Let us summarize the proof of Lemma $\ref{lem:step-3}$. By Lemma \ref{lem:step-3-same-col-buf} we obtain that for every $j \in \llbracket s, N \rrbracket$, the regions $R^i_j$ have the same coloring for every $i \in \llbracket1, e \rrbracket$. Note that at this point these regions are not recolored anymore. By Lemma \ref{lem:well-organised-buf} we obtain a coloring of $\mathcal{B}_1$ well-organised for a permutation $\sigma_t^1$, and for $i \in \llbracket 2, \omega \rrbracket$, we obtain a coloring of $\mathcal{B}_i$ well-organised for a permutation $\sigma^i_{t'}$  (and these colorings are valid buffer). By Lemma \ref{lem:well-organized-extended} we obtain coloring of $\mathcal{B}_i$ well-organised for $\sigma^i_{t'} \circ (\sigma_1^t)^{-1} \circ \sigma_1^t$. Then by Lemma \ref{lem:reduce-well-organised} we obtain a coloring of $\mathcal{B}_i$ well-organised for $\sigma_1^t$. Thus $\mathcal{B}_i$ and $\mathcal{B}_1$ have the same coloring for every $i \in \llbracket 2, e \rrbracket$, which concludes the proof.

\section{Proof of Lemma \ref{lem:step-4}}\label{ssec:pf-step4}
Let $C$ be a clique of $T$ with children $S_1, S_2, \ldots S_e$ and let $\alpha$ be a $k$-coloring of $G$ treated up to $S_i$ for $i \in 1, \ldots, e$. Let $\nu_C$ be a vector associated with $C$, $\mathcal{B} = R_1, \ldots, R_N$ be the buffer rooted at $C$, and $\mathcal{B}_i = R^i_{1}, \ldots, R^i_{N}$ be the buffer rooted at $S_i$. Suppose that all the $\mathcal{B}_i$s have the same coloring $\bnu$ such that $(\mathcal{B}_i, \bnu)$ is a valid buffer and $D_{\mathcal{B}_i}(\nu_C, \bnu) = 0$.
We will show that there exists a sequence of vertex recolorings of $\cup_{i =1}^e \cup_{j = 2}^{N-1} R^i_j$ such that the resulting coloring of the buffer $\mathcal{B}$ rooted at $C$ is well-colored for $\bnu$. \\ 
Let $Q_1, \ldots, Q_{3\Delta N}$ denote the blocks of $\mathcal{B}$, and $Q^i_1, \ldots, Q^i_{3 \Delta N}$ denote the blocks of $\mathcal{B}_i$ for $i \in \llbracket 1, \omega \rrbracket$. Note that if a vertex starts at height $h$ in $T_{S_i}$ then it starts at height $h+1$ in $T_C$. In particular if a vertex $v$ starts at height $h$ in $T_{S_i}$ such that $j \Delta < h < (j+1)\Delta-1$ for $j \in \llbracket 0, 3N-1 \rrbracket$, then $v \in Q^i_{N-j} \cap Q_{N-j}$. However if a vertex $v$ starts at height $h = j \Delta$, then $v \in Q^i_{N-j} \cap Q_{N-j-1}$, thus we may have to recolor $v$. Indeed, suppose that $Q^i_{N-j}$ is the block $B$ of a color region for the class $p$ and colors $c,z$, and that $v \in (Q^i_{N-j}, p)$. Then $v$ is colored with $\nu_{B}(p) = z$. As we want $\mathcal{B}$ to be well-colored for $\bnu$, $Q_{N-j-1}$ has to be well-colored for $\nu_A$, thus we need to recolor $v$ with $\nu_A(p) = c$. \\ 
For the ease of notation, we denote by $f(Q^i_j)$ the set of vertices of $Q^i_j$ which height in $T_{S_i}$ is maximum, and $f(Q^i_j, p)$ is the set $f(Q^i_j) \cap X_p$ for $p \leq \omega$. %Note that if $Q^i_j$ is the block $A^i_t$ of the region $R^i_t$, then $f(Q^i_j) \subseteq D_{t-1}$, and by the continuity property $\nu_{A_t} = \nu_{D_{t-1}}$, thus no recoloring is needed. 
Let $z,z'$ be the temporary colors used in the transposition buffers of $S_1, \ldots, S_e$. For every $j \in N-1, \ldots, 2$ in the decreasing order, we apply to each region $R^1_j, \ldots R^e_j$ the following recolorings:
\begin{enumerate}
    \item if $R^i_j$ is a waiting region do nothing. By the continuity property and the definition of waiting regions we have $\nu_{C_{i-1}} = \nu_{A_i} = \nu_{B_i} = \nu_{C_i}$, thus $R_i$ is well-colored for $\nu_{A_i}, \nu_{B_i}, \nu_{C_i}$.
    \item\label{proof:shifiting-color-r} if $R^i_j$ is a color region for the class $q$ where $c_1$ disappears, recolor $f(B^i_j, q)$ with $c_1$.\\
    This is a proper recoloring since by Observation \ref{obs:c-buffer-1}, the only class colored with $c_1$ on the color buffer of $\mathcal{B}^i$ is the class $q$. \\
    For any class $m \neq q$, $\nu_{C_{i-1}}(m) = \nu_{A_i}(m) = \nu_{B_i}(m) = \nu_{C_i}(m)$ by definition of color regions and the continuity property. Furthermore, $\nu_{C_{i-1}}(q) = \nu_{A_i}(q) \neq \nu_{B_i}(q) = \nu_{C_i}(q)$. As the vertices $f(B^i_j, q)$ have been recolored with $\nu_{A_i}(q)$ for every $i \leq p$, it follows that $R_i$ is well-colored for $\nu_{A_i}, \nu_{B_i}, \nu_{C_i}$.
    
    \item if $R^i_j$ is a transposition region for the classes $p$ and $q$ with $\nu_{A_j}(p) = c_1$ and $\nu_{C_j}(q) = c_2$ do in the following order:
    \begin{enumerate}
        \item\label{proof:shifting-transpo-c} Recolor $f(C^i_j, p)$ with $z$ and recolor $f(C^i_j, q)$ with $z'$, then
        \item\label{proof:shifting-transpo-a} Recolor $f(B^i_j, p)$ with $c_1$ and recolor $f(B^i_j, q)$ with $c_2$.
    \end{enumerate}
    By the separation property $N[C^i_j] \subseteq B^i_j \cup C^i_j \cup A^i_{j+1}$. As $\bnu$ defines a valid buffer, $R^i_{j+1}$ is either a waiting region or a transposition region, thus $\nu_{A_{i+1}}$ only contains canonical colors. The only vertices colored with $z$ (resp. $z'$) in $N[C^i_j]$ are thus vertices of the class $p$ (resp. $q$), and recoloring (a) is proper. \\
    Furthermore, $N(B^i_j) \subseteq A^i_j \cup B^i_j \cup C^i_j$. By the definition of transposition regions, the only vertices colored with $c_1$ (resp. $c_2$) on $A^i_j \cup B^i_j$ are vertices of the class $p$ (resp. $q$). By the separation property, $N[f(B^i_j)] \cap C^i_j \subseteq f(C^i_j)$, and after recoloring (a) no vertex of $f(C^i_j)$ is colored with $c_1$ nor $c_2$. It follows that recoloring (b) is proper. \\
    For any class $m \notin \{p, q\}$, $\nu_{C_{i-1}}(m) = \nu_{A_i}(m) = \nu_{B_i}(m) = \nu_{C_i}(m)$ by definition of transposition regions and the continuity property. Furthermore, $\nu_{B_i}(p) = z$ (resp. $\nu_{B_i}(q) = z'$) and the vertices $f(C^i_j, p)$ (resp. $f(C^i_j, q)$) have been recolored with $z$ (resp. $z'$). Finally, $\nu_{A_i}(p) = c_1$ (resp. $\nu_{A_i}(q) = c_2$) and the vertices $f(B^i_j, p)$ (resp. $f(B^i_j, q)$) have been recolored with $c_1$ (resp. $c_2$). It follows that the region is well-colored for $\nu_{A_i}, \nu_{B_i}, \nu_{C_i}$.
\end{enumerate}
Let us finally check for the coloring of regions $R_1$ and $R_N$. We have $R_1 \subseteq \cup_{i = 1}^e R^i_1 \cup A^i_2$. As none of these vertices are recolored, and $\nu_{A_1} = \nu_{B_1} = \nu_{C_1} = \nu_{A_2}$ by the continuity property and Property \ref{property-valid-buffer:first-canonical} of valid buffers, $R_1$ is well-colored for $\nu_{A_1}, \nu_{B_1}, \nu_{C_1}$. We have $R_N \subseteq \cup_{i = 1}^e R^i_N \cup C$. None of these vertices are recolored ($R^i_N$ is a waiting region for all $i \leq e$ by Property \ref{property-valid-buffer:waiting} of valid buffers), and by assumption $\nu_C = \nu_{C_N}$. Thus $R_N$ si well-colored for $\nu_{A_N}, \nu_{B_N}, \nu_{C_N}$. It follows that $(\mathcal{B}, \bnu)$ is a valid buffer. As the vertices of height at least $3 \Delta N$ in $T_C$ are not recolored, they remain colored canonically. We obtain a coloring of $G$ treated up to $C$, which concludes the proof.

%\bibliography{biblio.bib}

%\bibliographystyle{abbrv}

\end{document}